\documentclass[pra,amsmath,amssymb,showpacs, superscriptaddress,notitlepage,nofootinbib,onecolumn]{revtex4-2}

\usepackage{mdframed}

\usepackage{graphicx}
\usepackage[caption=false]{subfig}
\usepackage{siunitx,booktabs}
\usepackage{float}
\usepackage{soul}
\usepackage{dcolumn}
\usepackage{amsmath,amssymb,mathtools,physics}
\usepackage{amsthm}
\usepackage{mathtools}
\usepackage{dsfont}
\theoremstyle{plain}
\newtheorem*{theorem*}{Theorem}
\newtheorem{theorem}{Theorem}
\newtheorem{lemma}{Lemma}
\newtheorem*{lemma*}{Lemma}
\newtheorem*{corollary*}{Corollary}
\newtheorem{corollary}{Corollary}

\newtheorem*{remark*}{Remark}
\newtheorem{remark}{Remark}

\newtheorem*{conjecture*}{Conjecture}

\usepackage[utf8]{inputenc}

\usepackage[hidelinks]{hyperref}
\usepackage{footnote}
\usepackage{enumitem}
\hypersetup{
	colorlinks   = true, 
	urlcolor     = blue, 
	linkcolor    = blue, 
	citecolor   = blue 
}
\usepackage[capitalize]{cleveref}

\makeatletter
\AddToHook{cmd/appendix/before}{\def\cref@section@alias{appendix}\def\cref@subsection@alias{appendix}}
\makeatother

\usepackage [english]{babel}
\usepackage [autostyle, english = american]{csquotes} \MakeOuterQuote{"}
\usepackage{bm}
\usepackage{enumitem}

\usepackage[dvipsnames]{xcolor}

\newcommand{\nph}{n_\mathrm{ph}}
\newcommand{\eph}{e_\mathrm{ph}}
\newcommand{\ephone}{e_{\mathrm{ph},(1)}}

\newcommand{\nphone}{n_{\mathrm{ph},(1)}}
\newcommand{\nphoneU}{\nphone^\mathrm{U}}

\newcommand{\affvqcc}{Vigo Quantum Communication Center, University of Vigo, Vigo E-{36310}, Spain}
\newcommand{\affuvigo}{Escuela de Ingeniería de Telecomunicación, Department of Signal Theory and Communications, University of Vigo, Vigo E-36310, Spain}
\newcommand{\affatlantic}{atlanTTic Research Center, University of Vigo, Vigo E-36310, Spain}
\newcommand{\afftoyama}{Faculty of Engineering, University of Toyama, Gofuku 3190, Toyama 930-8555, Japan}

\begin{document}
	
	\setlength{\parskip}{4pt}
	\setlength{\parindent}{0pt}

 \author{Guillermo Currás-Lorenzo}
 \email{gcurras@vqcc.uvigo.es}
\author{Margarida Pereira}
 \author{Alessandro Marcomini}
\affiliation{\affvqcc} \affiliation{\affuvigo} \affiliation{\affatlantic} 
\author{Kiyoshi Tamaki}
\affiliation{\afftoyama} 
\author{Marcos Curty}
\affiliation{\affvqcc} \affiliation{\affuvigo} \affiliation{\affatlantic}
    
\title{Security of decoy-state quantum key distribution with correlated bit-and-basis encoders}

    \begin{abstract}
Practical quantum key distribution (QKD) modulators inevitably introduce correlations, causing the state emitted in a given round to depend on the setting choices made in previous rounds. These correlations break the round-by-round independence structure on which many widely used security proof techniques rely, leaving a significant gap between available theoretical guarantees and the reality of practical implementations. In this work, we develop a {fully analytical} finite-key security proof for decoy-state BB84 against general coherent attacks that rigorously incorporates correlations introduced by Alice's bit-and-basis encoder, while requiring only partial characterization of such correlations. {Crucially, it does so without requiring the round partitioning of the data used for phase-error estimation introduced in previous works, showing that this technique is not indispensable for dealing with correlated encoders, and resulting in improved finite-key performance. Moreover, we also analyze the case of ideal single-photon sources, and our results reveal a counterintuitive effect: in the presence of strong {encoding} correlations, decoy-state BB84 can outperform single-photon implementations.}

\end{abstract}

	\maketitle

    \section{Introduction}

Quantum key distribution (QKD) enables two distant parties, Alice and Bob, to establish shared secret keys whose security rests on fundamental principles of quantum mechanics rather than computational assumptions. Translating this theoretical promise into practice, however, requires security proofs that faithfully account for the inevitable imperfections of real devices. Over the past two decades, significant progress has been made in developing security proofs that incorporate a range of source and detector imperfections~\cite{gottesmanSecurityQuantum2004,fungSecurityProof2009,tamakiLosstolerantQuantum2014,pereiraQuantumKey2020,curras-lorenzoSecurityFramework2025,tupkaryPhaseError2025,sixtoQuantumKey2025,kaminRenyiSecurity2025,marwahProvingSecurity2024,naharImperfectDetectors2026}. Among these imperfections, encoding correlations---in which each emitted quantum state depends on the setting choices made in previous rounds, due to the finite bandwidth of practical optical modulators~\cite{grunenfelderPerformanceSecurity2020,agulleiroModelingCharacterization2025}---pose a particularly serious challenge for security analyses.

The difficulty stems from the fact that encoding correlations break the round-by-round independence structure that many widely used security proof techniques rely upon. For instance, the Postselection Technique~\cite{christandlPostselectionTechnique2009,naharPostselectionTechnique2024} and Quantum de Finetti approaches~\cite{rennerSymmetryLarge2007} require the global protocol state to be permutation-invariant, a symmetry that is destroyed when the physical state of each round depends on the temporal ordering of previous settings. Similarly, the Marginal-constrained Entropy Accumulation Theorem (MEAT)~\cite{arqandMarginalconstrainedEntropy2025,kaminRenyiSecurity2025} models the protocol as a sequence of channels acting on states satisfying a fixed marginal constraint; with correlations, the source state in round $k$ depends on all previous settings, violating the required factorization structure. These incompatibilities have left a significant gap between the theoretical security guarantees available and the reality of practical QKD implementations, which inevitably suffer from such correlations.

On the other hand, security proofs based on phase-error estimation---encompassing both the entropic uncertainty relation (EUR) with leftover hashing lemma (LHL) approach~\cite{tomamichelUncertaintyRelation2011,tomamichelTightFinitekey2012,tomamichelLargelySelfcontained2017} and proofs based on phase-error correction (PEC)~\cite{koashiSimpleSecurity2009}---face no fundamental barriers to incorporating encoding correlations, as rigorously established in Ref.~\cite{curras-lorenzoRigorousPhaseerrorestimation2026} (see also Ref.~\cite{pereiraQuantumKey2020}). That work introduced a modular framework for extending phase-error-estimation-based security proofs to correlated sources: the key idea is to partition the protocol rounds into $(l_c+1)$ groups (where $l_c$ is the maximum correlation length), apply the existing uncorrelated phase-error estimation procedure independently to each group, and then combine the resulting bounds into a single upper bound on the overall phase-error rate of the full sifted key. Crucially, this approach enables security to be established through a single privacy amplification step on the entire key, avoiding the composability concerns associated with earlier proposals~\cite{pereiraQuantumKey2020,mizutaniSecurityRoundrobin2021,pereiraModifiedBB842023}. 

Despite these advances, two important limitations remain. First, Ref.~\cite{curras-lorenzoRigorousPhaseerrorestimation2026} currently considers only protocols in which Alice's emitted states are pure given a fixed sequence of setting choices. Thus, it does not directly consider decoy-state protocols~\cite{hwangQuantumKey2003,loDecoyState2005,wangBeatingPhotonnumbersplitting2005}, which are employed by the vast majority of practical QKD implementations\footnote{{After the first version of this manuscript was posted on arXiv, the approach of Ref.~\cite{curras-lorenzoRigorousPhaseerrorestimation2026} was extended to the decoy-state setting in Ref.~\cite{sixtoFinitekeySecurity2026}, incorporating further source and detector imperfections. That analysis, however, inherits the round partitioning and its associated statistical-efficiency loss, i.e., the second limitation discussed below. See also the Note added at the end of this paper.}}. In a decoy-state protocol, Alice emits phase-randomized {weak} coherent states, which are diagonal in the Fock basis and therefore mixed---even when conditioned on all of Alice's setting choices (bit, basis, and intensity).  Second, the approach of Ref.~\cite{curras-lorenzoRigorousPhaseerrorestimation2026} inherently requires dividing the protocol rounds into $(l_c+1)$ groups and estimating the phase-error rate separately within each group. Since only a fraction of the total data is used for parameter estimation in each group, this partitioning inevitably incurs a loss in statistical efficiency. One may therefore wonder whether it is possible to prove security without resorting to this division into groups.

In this work, we develop a new {and fully analytical} approach to handle encoding correlations in QKD that addresses both of these limitations. Rather than partitioning rounds into groups, our approach employs the quantum coin argument~\cite{gottesmanSecurityQuantum2004,loSecurityQuantum2007,curras-lorenzoSecurityFramework2025} to bound the phase-error rate of the full sifted key directly, using all available protocol data simultaneously for parameter estimation. A central component of our analysis is a new technique to bound the so-called bad coin events in the presence of encoding correlations. This technique exploits the fact that the relevant coin statistics depend only on Alice's local source-replacement state and are independent of Eve's attack, allowing us to partition the analysis of these events into groups of independent random variables and apply concentration inequalities, while the main phase-error estimation step uses the full, undivided data. 

Our approach is natively compatible with decoy-state protocols, and we use it to provide a complete finite-key security proof against general attacks for decoy-state BB84 in the presence of correlations introduced by the bit-and-basis encoder. {As a concrete example, we} apply our results to a 
linear time-invariant (LTI) correlations model~\cite{agulleiroModelingCharacterization2025} with exponentially decaying correlation strength, and we handle correlations of unbounded length by truncating at a finite correlation length and rigorously bounding the resulting approximation error via trace distance arguments. To our knowledge, this constitutes the first finite-key security proof of decoy-state QKD against general attacks that incorporates encoding correlations in the transmitter {(see also the Note at the end of this paper)}.

{The remainder of this paper is organized as follows. In \cref{sec:assumptions} we state our assumptions on Alice's source and Bob's receiver and define the protocol. \Cref{sec:security} presents our security analysis {and its application to a partially characterized LTI correlations model}, with all technical proofs deferred to \cref{app:proofs}. In \cref{sec:numerics} we numerically evaluate the resulting secret-key rates. Although the main text focuses on the decoy-state setting, our techniques apply equally to protocols employing ideal single-photon sources, and we provide the analysis for this case in \cref{app:single_photon}. Comparing the two implementations reveals a counterintuitive effect: when the correlations are strong, the decoy-state protocol can outperform the single-photon one, indicating that decoy-state implementations are inherently more robust to correlated bit-and-basis encoders. Finally, in \cref{sec:conclusion}, we conclude the paper by summarizing our findings.}

    \section{Device assumptions and protocol description} \label{sec:assumptions}


     \subsection{Assumptions on Alice's source}

    In each round $k$, Alice independently selects a bit $a_k \in \{0,1\}$, a basis $\alpha_k \in \{Z,X\}$ and an intensity $\mu_k \in \{\mu_1, \ldots, \mu_N\}$. The bit and basis are chosen uniformly at random, while the intensity is drawn according to a probability distribution $\{p_{\mu_k}\}$. Based on these choices, she prepares the state
    \begin{equation}
    \label{eq:rho_muk_a1k_alpha1k}
        \rho_{\mu_k, a_1^k, \alpha_1^k} = \sum_{m_k = 0}^{\infty} p_{m_k \vert \mu_k} \ketbra*{\psi_{a_1^k,\alpha_1^k}^{(m_k)}}_{T_k},
    \end{equation}
    where $m_k$ is the photon number of the pulse, {$p_{m_k|\mu_k}$ is the probability of a particular photon number given the intensity $\mu_k$, $T_k$ is the photonic system sent through the channel on round $k$ and where} we have introduced the shorthand notation $a_1^k = a_1 \ldots a_k$ and $\alpha_1^k = \alpha_1 \ldots \alpha_k$ for the histories of bit and basis choices up to round $k$. Note that, due to the correlations introduced by the bit-and-basis encoder, the eigenstates $\ket*{\psi_{a_1^k,\alpha_1^k}^{(m_k)}}$ depend not only on the bit and basis choices $a_k$ and $\alpha_k$ for the $k$-th round, but also on the histories $a_1^{k-1}$ and $\alpha_1^{k-1}$ of previous choices. We emphasize that we consider a decoy-state implementation with ideal intensity modulation and ideal {global} phase randomization; that is, only the bit and basis encoders are assumed to suffer from correlations.

{Equation~\eqref{eq:rho_muk_a1k_alpha1k} is the only assumption we make on the form of Alice's emitted states: we do not assume any specific functional dependence of the eigenstates $\ket*{\psi_{a_1^k,\alpha_1^k}^{(m_k)}}$ on the setting history. The reason such generality is possible is that, as will become clear in \cref{sec:security}, the encoding correlations affect our analysis through a single quantity that depends only on Alice's local source-replacement state and remains well defined regardless of its specific form.} 

{To apply our proof, one requires a characterization of the source sufficient to bound that single quantity. If the eigenstates are fully characterized, this can in principle be computed or bounded directly. However, since a perfect characterization of the correlations is unrealistic, our proof is also designed to be compatible with partial state characterization. {To illustrate this, in \cref{subsec:particular_correlations_model} we apply our analysis to a particular partially characterized correlations model, which we later use to evaluate the performance of our security proof (see \cref{sec:numerics}).} We remark that this model is introduced only as an example, and that our proof is equally compatible with other partial-characterization models; for instance, those that specify only a fidelity bound between the actual emitted states and some reference states (see, e.g., \cite[Supp.~Note C]{curras-lorenzoSecurityFramework2025}).}

\subsection{Assumptions on Bob's receiver}
\label{sec:assumptions_Bob}
We assume that, in each round, Bob randomly chooses between two POVMs, $\vec \Gamma_Z \coloneqq \{\Gamma_0^{(Z)}, \Gamma_1^{(Z)}, \Gamma_\bot^{(Z)}\}$ and $\vec \Gamma_X \coloneqq \{\Gamma_0^{(X)}, \Gamma_1^{(X)}, \Gamma_\bot^{(X)}\}$, and measures the incoming signal in the chosen POVM. Here, the outcomes $\{0,1\}$ correspond to bit values and $\bot$ indicates no detection. Moreover, since our goal is developing tools that incorporate encoding correlations, we consider the standard assumption made in phase-error-estimation security proofs of BB84-type protocols, namely, basis-independent detection efficiency. This means that the operator associated to an undetected round is the same for both POVMs, i.e., $\Gamma_\bot^{(Z)} = \Gamma_\bot^{(X)} \eqqcolon \Gamma_\bot$. {In a standard receiver where Bob actively selects the basis and measures with two threshold single-photon detectors, this assumption holds provided that both detectors have identical detection efficiencies and dark count rates, and that double-click events are assigned a uniformly random bit value rather than discarded.} We remark, however, that our proof can be directly extended to incorporate partially characterized detection efficiency mismatches in Bob's setup by simply applying the modular lifting theorem in Ref.~\cite{curras-lorenzoSecurityQuantum2026} (see also \cite{tupkaryPhaseError2025}). Our proof could also be extended to setups in which Bob makes a passive basis choice (using a beam splitter) by applying the results of Ref.~\cite[Sec.~VIII.A]{wangPhaseError2025}.

\subsection{Protocol description}

Consistently with all the above, we consider the following protocol definition:

\vspace{2pt}

\begin{mdframed}

\noindent\textbf{Actual Protocol} 

\begin{enumerate}
    \item \textit{State preparation:} For each round $k \in \{1,2,\ldots,N\}$, Alice chooses $a_k \in \{0,1\}$, $\alpha_k \in \{Z,X\}$ and $\mu_k$, prepares the state $\rho_{\mu_k, a_1^k, \alpha_1^k}$ in \cref{eq:rho_muk_a1k_alpha1k}, and sends it {to Bob} through the quantum channel.
    
    \item \textit{Eve's attack:} Eve performs the most general attack allowed by quantum mechanics on the transmitted systems $T_1^N$, and re-sends some output systems $B_1^N$ to Bob, while potentially keeping an ancillary system $E$.
    
    \item \textit{Detection announcement:} For each round $k$, Bob chooses $\beta_k \in \{Z,X\}$ with equal probability, performs the POVM $\vec \Gamma_{\beta_k}$, and records his outcome $b_k \in \{0,1,\bot\}$. Then, he publicly announces the set of detected rounds $\mathcal{D} \subseteq \{1, \ldots, N\}$ {i.e., those rounds in which his result is either 0 or 1}.
    
    \item \label{step:keep-trash} \textit{Keep/trash announcement:} For each detected round $k \in \mathcal{D}$, Alice announces "keep" with probability $p_{\mathrm{keep}}$ or "trash" otherwise. Let $\mathcal{D}_{\mathrm{keep}} \subseteq \mathcal{D}$ denote the set of keep rounds and $\mathcal{D}_{\mathrm{trash}} \subseteq \mathcal{D}$ the set of trash rounds.
    
    \item \textit{Basis and intensity announcement:} For each keep round $k \in \mathcal{D}_{\mathrm{keep}}$, Alice announces both $\alpha_k$ and $\mu_k$.
     
    \item \textit{Sifting:} For each $k \in \mathcal{D}_{\mathrm{keep}}$, Bob checks whether $\alpha_k = \beta_k$ and announces "sifted" if the condition is met or "unsifted" otherwise. For each $k \in \mathcal{D}_{\mathrm{trash}}$, Bob simply announces "sifted" or "unsifted" at random. Alice and Bob determine $\mathcal{D}_{\mathrm{keep},\mathrm{sifted},Z}$ and $\mathcal{D}_{\mathrm{keep},\mathrm{sifted},X}$, the sets of sifted keep rounds in which $\alpha_k = \beta_k = Z$ and $\alpha_k = \beta_k = X$, respectively.

    \item \textit{Test-round announcements:} For each round in $\mathcal{D}_{\mathrm{keep},\mathrm{sifted},X}$, Alice and Bob announce their bit outcomes $a_k$ and $b_k$.

    \item \textit{Sifted-key generation:} Alice and Bob define their sifted keys according to their bit outcomes $a_k$ and $b_k$ of the rounds in $\mathcal{D}_{\mathrm{keep},\mathrm{sifted},Z}$.

    \item  \label{step:postprocessing} \textit{Variable-length decision:} Let $\vec n$ denote all the data announced by Alice and Bob until this point. Using this data, Alice and Bob compute $\lambda_\mathrm{EC} (\vec n)$ (the number of bits to be revealed in one-way error correction) and $l(\vec n)$ (the length of the final key to be produced). Aborting corresponds to $l(\vec n) =0$.

    \item \textit{Error correction and error verification:} Alice and Bob implement a one-way error correction protocol that reveals $\lambda_\mathrm{EC} (\vec n)$ bits of information. They implement error verification by using a common two-universal hash function of output length $\log_2(2/\varepsilon_{\rm EV})$, having one of the parties announce the result and comparing their values. {Here, $\varepsilon_\mathrm{EV}$ upper bounds the probability that error verification passes while Alice's and Bob's keys differ, and thus quantifies the correctness of the protocol.} 

    \item \textit{Privacy amplification:} If error verification passes, Alice and Bob select a random hash function from a two-universal family and apply it to their sifted key to obtain a final key of length $l(\vec n)$.
\end{enumerate}

\end{mdframed}

\vspace{5pt}

Note that, unlike in the standard BB84 protocol, our protocol includes a keep/trash announcement step (step~\ref*{step:keep-trash}), in which Alice randomly designates each detected round as either ``keep'' or ``trash''. For trash rounds, Bob still announces whether the round is ``sifted'' or ``unsifted'', but he does so at random, independently of his actual basis choice, and the data from these rounds is never announced or used for key generation or parameter estimation. The trash rounds are introduced to enable the quantum coin argument, as will become clear in the next section.

\section{Security proof} \label{sec:security}

\subsection{Source replacement, phase-error estimation protocol and security statement}

To analyze the security of the above protocol, we employ the source-replacement technique together with a decomposition of Bob's measurement. We describe these two modifications in turn and then present the resulting equivalent protocol.

\paragraph*{Source replacement.} In the actual protocol, Alice classically selects $(a_k, \alpha_k, \mu_k)$ in each round $k$ and prepares the corresponding state $\rho_{\mu_k, a_1^k, \alpha_1^k}$ in \cref{eq:rho_muk_a1k_alpha1k}. Instead, we may equivalently imagine that Alice prepares, at the very beginning of the protocol, a global entangled state that purifies all of her classical and quantum registers simultaneously. Concretely, for each round $k$, we introduce a coin qubit register $C_k$ {(with orthonormal basis $\{\ket{Z}, \ket{X}\}$)} encoding the basis choice, a qubit register $A_k$ encoding the bit in the corresponding basis, an intensity register $I_k$ (with orthonormal basis $\{\ket{\hat{\mu}_1}, \ket{\hat{\mu}_2}, \ldots\}$), and a photon-number register $M_k$ (with orthonormal basis $\{\ket{0}, \ket{1}, \ket{2}, \ldots\}$). The global source-replacement state is then
\begin{equation}
    \label{eq:global_state_decoy_correlations}
     \ket{\Psi_N}_{C_1^N A_1^N I_1^N M_1^N T_1^N} = \sum_{\alpha_1^N \in \{Z,X\}^N} \sum_{a_1^N \in \{0,1\}^N} \bigotimes_{k=1}^{N} \frac{1}{2} \ket{\alpha_k}_{C_k}  \ket{(a_k)_{\alpha_k}}_{A_k} \ket*{\Psi_{a_1^k,\alpha_1^k}}_{I_k M_k T_k},
\end{equation}
where $T_k$ is the photonic system transmitted through the channel and
{\begin{equation}
\label{eq:round_state_decoy}
\ket*{\Psi_{a_1^k,\alpha_1^k}}_{I_k M_k T_k} = \sum_{\mu_k} \sqrt{p_{\mu_k}} \ket{\hat{\mu}_k}_{I_k} \sum_{m_k=0}^{\infty} \sqrt{p_{m_k|\mu_k}} \ket{m_k}_{M_k} \ket*{\psi_{a_1^k,\alpha_1^k}^{(m_k)}}_{T_k}.
\end{equation}}{Here, $\ket{Z}_{C_k}$ and $\ket{X}_{C_k}$ simply denote the computational basis states $\ket{0}_{C_k}$ and $\ket{1}_{C_k}$ of the coin qubit, relabeled to indicate the basis choice they are associated to, and the Hadamard-basis states employed below are defined in the usual way, i.e., $\ket{\pm}_{C_k} = \big(\ket{Z}_{C_k} \pm \ket{X}_{C_k}\big)/\sqrt{2}$.} By construction, if Alice measures $C_k$ in the $\{\ket{Z},\ket{X}\}$ basis, $A_k$ in the basis determined by the outcome $\alpha_k$, $I_k$ in its defining basis, and $M_k$ in the photon-number basis, the post-measurement state on the transmitted systems $T_1^N$ is precisely the state that the actual protocol would have produced for the same sequence of outcomes. Since these registers {remain in Alice's laboratory and never} enter the quantum channel, {their measurements commute with all operations performed by Eve and Bob on the transmitted systems, and} can be  {thus} postponed to any later stage of the protocol without affecting the state of the transmitted systems, Bob's measurement outcomes, or any statistics derived from them.

\paragraph*{Decomposition of Bob's measurement.} In the actual protocol, for each round $k$, Bob selects a basis $\beta_k \in \{Z,X\}$ and applies the three-outcome POVM $\vec{\Gamma}_{\beta_k} = \{\Gamma_0^{(\beta_k)}, \Gamma_1^{(\beta_k)}, \Gamma_\bot^{(\beta_k)}\}$. By the assumption of basis-independent detection efficiency {(\cref{sec:assumptions_Bob})},
the detection operator
\begin{equation}
    F \coloneqq \Gamma_0^{(Z)} + \Gamma_1^{(Z)} = \Gamma_0^{(X)} + \Gamma_1^{(X)} = \mathbb{I} - \Gamma_\bot
\end{equation}
is the same regardless of the basis. This means that the detection event (i.e., whether the round belongs to $\mathcal{D}$) is statistically independent of Bob's basis choice $\beta_k$. We can therefore decompose Bob's action into two sequential steps: (1) he applies the basis-independent binary filter $\{F, \mathbb{I} - F\}$, which determines whether a detection occurs; (2) only for detected rounds, he chooses $\beta_k \in \{Z,X\}$ and applies a two-outcome POVM $\{G_0^{(\beta_k)}, G_1^{(\beta_k)}\}$, where $G_{b_k}^{(\beta_k)} = F^{-1/2} \Gamma_{b_k}^{(\beta_k)} F^{-1/2}$ (defined on the support of $F$), to determine his bit value $b_k$. This decomposition reproduces the original measurement statistics exactly, but it allows us to defer Bob's bit-value measurement to a later stage in the protocol, which is convenient for the security analysis.

Incorporating both modifications, we define the following scenario:

\vspace{3pt}

\begin{mdframed}

\noindent\textbf{Actual protocol (source replaced)} 

\begin{enumerate}
    \item \textit{State preparation:} Alice prepares her global source replacement state $\ket{\Psi_N}_{C_1^N A_1^N I_1^N M_1^N T_1^N}$ and sends the photonic systems $T_1^N$ {to Bob} through the quantum channel.

    \item \textit{Eve's attack:} Eve performs the most general attack allowed by quantum mechanics on the transmitted systems $T_1^N$, and re-sends some output systems $B_1^N$ to Bob, while potentially keeping an ancillary system $E$.

    \item \textit{Detection announcement:} For each round $k$, Bob applies a basis-independent filter $\{F, \mathbb{I} - F\}$ to system $B_k$, which determines whether he obtains a detection. Bob publicly announces the set of detected rounds $\mathcal{D} \subseteq \{1, \ldots, N\}$.

    \item \textit{Keep/trash and sifting decisions:} For each round $k \in \{1, \ldots, N\}$, Alice decides "keep" with probability $p_{\mathrm{keep}}$ or "trash" otherwise, and Bob independently decides "sifted" or "unsifted" with equal probability $1/2$. For detected rounds ($k \in \mathcal{D}$), Alice announces her keep/trash decision.  Let $\mathcal{D}_{\mathrm{keep}} \subseteq \mathcal{D}$ denote the set of detected keep rounds and $\mathcal{D}_{\mathrm{trash}} \subseteq \mathcal{D}$ the set of detected trash rounds. For the rounds in $\mathcal{D}_{\mathrm{trash}}$, Bob announces whether he chose "sifted" or "unsifted".

    \item \label{step:coin-measurements-and-basis-announcements} \textit{Coin measurements and basis/intensity announcement:} 
    \begin{enumerate}
        \item \label{step:coin-measurements} For each round $k$ where Alice chose "trash" and Bob chose "sifted", Alice measures her coin register $C_k$ in the $\{\ket{+}, \ket{-}\}$ basis.
        \item For each detected keep round $k \in \mathcal{D}_{\mathrm{keep}}$, Alice measures her coin register $C_k$ in the $\{\ket{Z}, \ket{X}\}$ basis to obtain $\alpha_k$, measures her intensity register $I_k$ in the $\{\ket*{\hat{\mu}_{1}}, \ket*{\hat{\mu}_{2}}, \ldots\}$ basis to obtain $\mu_k$, and publicly announces both $\alpha_k$ and $\mu_k$. Bob announces his measurement basis $\beta_k$: if he chose "sifted" for round $k$, then $\beta_k = \alpha_k$; otherwise $\beta_k \neq \alpha_k$.
    \end{enumerate}

    \item \label{step:photon-number-measurement}  \textit{Round classification and photon-number measurements:} Alice and Bob determine $\mathcal{D}_{\mathrm{keep},\mathrm{sifted},Z}$ and $\mathcal{D}_{\mathrm{keep},\mathrm{sifted},X}$, the sets of detected keep rounds where Bob chose "sifted" and $\alpha_k = \beta_k = Z$ and $\alpha_k = \beta_k = X$, respectively. Also, for each round $k \in \{1,\ldots,N\}$, Alice measures her photon number register $M_k$ in the $\{\ket{0}, \ket{1}, \ket{2}, \ldots\}$ basis to obtain $m_k$.

    \item \textit{Test-round measurements:} For each round in $\mathcal{D}_{\mathrm{keep},\mathrm{sifted},X}$, Alice measures her qubit register $A_k$ in the $\{\ket{0_X},\ket{1_X}\}$ basis, and Bob performs a two-outcome measurement $\{G_0^{(X)}, G_1^{(X)}\}$. Then, Alice and Bob announce their bit outcomes $a_k$ and $b_k$.

    \item \textit{Sifted-key measurements:} For each round in $\mathcal{D}_{\mathrm{keep},\mathrm{sifted},Z}$, Alice measures her qubit register $A_k$ in the $\{\ket{0_Z},\ket{1_Z}\}$ basis, and Bob performs a two-outcome measurement $\{G_0^{(Z)}, G_1^{(Z)}\}$. Alice and Bob define their sifted keys according to their bit outcomes $a_k$ and $b_k$.

    \item[9-11.] Same as \textit{Actual Protocol}.
    
\end{enumerate}

\end{mdframed}

\vspace{5pt}

Four additional features of the source-replaced protocol merit comment:

\begin{enumerate}
    \item The measurement of coin registers $C_k$ in the Hadamard basis $\{\ket{+},\ket{-}\}$ for the trash-sifted rounds (step~\ref*{step:coin-measurements}) does not affect any announced information, since the outcome of this measurement is never revealed. From the perspective of all other systems, any measurement on Alice's local register is equivalent to a partial trace, which is basis-independent. This fictitious Hadamard-basis measurement is introduced solely for the purpose of the phase-error estimation argument, as will be clear later.

    \item The measurement of the photon-number registers $M_k$ (step~\ref*{step:photon-number-measurement}) is similarly a local operation on Alice's side whose outcome is not announced, and can therefore be placed at any point in the protocol without affecting observable statistics; it is included at this stage for notational convenience.

    \item In the actual protocol, Alice's keep/trash decision is made only for detected rounds $k \in \mathcal{D}$. In the source-replaced protocol, however, Alice makes this decision for \emph{all} rounds $k \in \{1, \ldots, N\}$, but only announces it for detected rounds. Since the keep/trash assignment for undetected rounds is never revealed, this modification does not affect any announced information or observable statistics. 

    \item {Similarly, in the actual protocol, Bob selects a basis $\beta_k$ in every round and, for the rounds in $\mathcal{D}_{\mathrm{keep}}$, determines whether the round is "sifted" by comparing it with Alice's announced $\alpha_k$. In the source-replaced protocol, this is replaced by an independent, uniformly random "sifted"/"unsifted" decision made for \emph{all} rounds $k \in \{1, \ldots, N\}$, with Bob's basis for the rounds in $\mathcal{D}_{\mathrm{keep}}$ determined a posteriori as $\beta_k = \alpha_k$ if he chose "sifted" and $\beta_k \neq \alpha_k$ otherwise. Since, in the actual protocol, $\beta_k$ is chosen uniformly at random and independently of $\alpha_k$, the condition $\alpha_k = \beta_k$ holds with probability $1/2$ independently of all other variables, and both descriptions result in identical statistics. Note that postponing Bob's basis choice in this manner is possible because his detection filter $\{F, \mathbb{I}-F\}$ is basis independent. Moreover, as with Alice's keep/trash assignment, Bob's sifted/unsifted decision for undetected rounds is never revealed, and therefore does not affect any announced information or observable statistics.}
\end{enumerate}

Due to all the above, it is clear that the \textit{Actual protocol (source replaced)} is completely equivalent to the \textit{Actual protocol} in terms of the joint statistical distribution of the sifted key and the announced information {$\bm{\vec n}$, where $\bm{\vec n}$ denotes the random vector whose realization is the announced data $\vec n$ introduced in Step~\ref{step:postprocessing}}.\footnote{Note that, throughout the paper, we use boldface symbols to denote random variables and their corresponding plain symbols to denote their realizations.} Thus, proving security of the \textit{Actual protocol (source replaced)} directly implies security of the \textit{Actual protocol}.

To analyze the security of the \textit{Actual protocol (source replaced)}, we define its associated \textit{Phase-error estimation protocol}:

\vspace{3pt}

\begin{mdframed}

\noindent\textbf{Phase-error estimation protocol} 

\begin{enumerate}
    \item[1-7.] Same as \textit{Actual protocol (source replaced)}.

    \item[8.] \textit{Phase-error measurements:} For each round in $\mathcal{D}_{\mathrm{keep},\mathrm{sifted},Z}$, Alice measures her qubit register $A_k$ in the $\{\ket{0_X},\ket{1_X}\}$ basis, and Bob performs a two-outcome measurement $\{G_0^{(X)}, G_1^{(X)}\}$. For each photon number $m \in \{0,1,2,\ldots\}$, we denote by $\bm{n_{K,(m)}}$ the number of events in $\mathcal{D}_{\mathrm{keep},\mathrm{sifted},Z}$ with photon number $m_k = m$, and by $\bm{\eph}^{(m)}$ the fraction of events in $\mathcal{D}_{\mathrm{keep},\mathrm{sifted},Z}$ with photon number $m_k = m$ in which $a_k \neq b_k$.
\end{enumerate}

\end{mdframed}

\vspace{5pt}

Then, we have the following result:

\begin{theorem}[Variable-length security via EUR]
\label{thm:variable_length_security_decoy}

Suppose that, for any initial global source replacement state  $\ket{\Psi_N}_{C_1^N A_1^N I_1^N M_1^N T_1^N}$, we have the guarantee that, in the \emph{Phase-error estimation protocol}, a bound of the form
\begin{equation}
\label{eq:statistical_guarantee_decoy}
        \Pr[\bm{n_{K,(1)}}< \mathcal{N}_{K,(1)} (\bm{\vec{n}})\quad \cup \quad \bm\ephone > \mathcal{E}_{\mathrm{ph},(1)} (\bm{\vec{n}}) ] \leq \varepsilon_\mathrm{PE}, ~
    {\text{for some $\epsilon_{\rm PE}>0$},}
    \end{equation}
holds, where $\mathcal{N}_{K,(1)}(\vec n)$ and $\mathcal{E}_{\mathrm{ph},(1)} (\vec{n})$ are functions of the observed data vector $\vec n$. Let $\lambda_\mathrm{EC}(\vec n)$ be a function that determines the number of bits used in error correction, and let
\begin{equation}
\label{eq:final_key_length}
    l(\vec n) = \max\Big(0,\mathcal{N}_{K,(1)} (\vec{n})\big[1-h\big(\mathcal{E}_{\mathrm{ph},(1)} (\vec n)\big)\big] - \lambda_\mathrm{EC}(\vec n) - 2 \log_2 \frac{1}{2\varepsilon_\mathrm{PA}}-\log_2 \frac{2}{\varepsilon_\mathrm{EV}}\Big),
\end{equation}
be a function that determines the length of the final key, where $h(x)$ is the binary entropy function for $x\leq 1/2$ and $h(x) = 1$ otherwise. Then, if Alice and Bob run the \emph{Actual protocol (source replaced)} using this choice of $\lambda_\mathrm{EC}(\vec n)$ and $l(\vec n)$, the output key is $(2 \sqrt{\varepsilon_\mathrm{PE}} +\varepsilon_\mathrm{PA} + \varepsilon_\mathrm{EV})$-secure. 
\end{theorem}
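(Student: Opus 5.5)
The plan is to follow the standard variable-length EUR/leftover-hashing argument (in the style of Tomamichel et al.\ and its variable-length extensions), adapted to the tagged single-photon key. We split the secrecy criterion into correctness and secrecy. Correctness is immediate: error verification with a two-universal hash of output length $\log_2(2/\varepsilon_\mathrm{EV})$ guarantees that the probability of passing with unequal keys is at most $\varepsilon_\mathrm{EV}/2 \le \varepsilon_\mathrm{EV}$. For secrecy, we apply the leftover hashing lemma conditioned on each event $\vec n$. For each realization, the distance of the final key from uniform, weighted by $\Pr[\vec n]$, is bounded by
\[
2^{-\frac12\left(H_{\min}^{\bar\varepsilon}(Z_A|E')_{\rho_{|\vec n}}-l(\vec n)\right)}+2\bar\varepsilon .
\]
Here $E'$ includes Eve's system, all announcements, the $\lambda_\mathrm{EC}(\vec n)$ error-correction bits and the verification tag. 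The chain rule removes these last two at a cost of $\lambda_\mathrm{EC}+\log_2(2/\varepsilon_\mathrm{EV})$.

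Next, we lower bound the smooth min-entropy of the sifted key $Z_A$ by that of its single-photon part. We condition on Alice's (unannounced) photon-number outcomes $m_k$; these are classical and commute with everything, so conditioning on them inside the state can only decrease the entropy relative to giving them to Eve. We then use the chain rule $H_{\min}^{\bar\varepsilon}(Z_A|E') \ge H_{\min}^{\bar\varepsilon}(Z_{A,(1)}|Z_{A,(\neq1)}E')$, discarding the multi-photon and vacuum key bits (lower bounded by zero). On the single-photon sub-block, the registers $A_k$ are genuine qubits. Because the source-replaced and phase-error estimation protocols differ only in whether Alice measures $A_k$ in $Z$ or $X$ on the key rounds, the EUR for smooth entropies applies:
\[
H_{\min}^{\bar\varepsilon}(Z_{A,(1)}|E'') + H_{\max}^{\bar\varepsilon}(X_{A,(1)}|B) \ge n_{K,(1)},
\]
since the complementarity constant is $1$ per qubit. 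The max-entropy term is then bounded by $n_{K,(1)}h(e_{\mathrm{ph},(1)})$ plus smoothing, via Bob's phase-error guess. The subtle point is that everything must hold conditioned on $\vec n$ and on the random sets. For this we use the conditional-state formulation: define the good event $\Omega=\{\bm{n_{K,(1)}}\ge\mathcal N_{K,(1)}(\bm{\vec n}),\ \bm\ephone\le\mathcal E_{\mathrm{ph},(1)}(\bm{\vec n})\}$. By \cref{eq:statistical_guarantee_decoy}, $\Pr[\bar\Omega]\le\varepsilon_\mathrm{PE}$. Monotonicity of $n[1-h(e)]$ in $n$ (increasing) and $e$ (decreasing, with $h=1$ beyond $1/2$) shows that, on $\Omega$, the realized single-photon entropy exceeds $\mathcal N_{K,(1)}[1-h(\mathcal E_{\mathrm{ph},(1)})]$.

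The main obstacle is turning the probabilistic guarantee into a smooth-entropy bound without paying a factor per value of $\vec n$; this is where $2\sqrt{\varepsilon_\mathrm{PE}}$ comes from. I would follow the approach of Tomamichel–Leverrier / Tupkary et al.\ for variable length. Construct a state $\tilde\rho$ that agrees with the phase-error estimation protocol's state on $\Omega$ and has trace-distance (purified distance) at most $\sqrt{2\varepsilon_\mathrm{PE}}$, or more carefully generalized fidelity giving $\sqrt{\varepsilon_\mathrm{PE}}$-smoothing, from the real state. On $\tilde\rho$, the bit-error/phase-error max-entropy bound holds deterministically for every $\vec n$. Summing the leftover-hash bounds over $\vec n$ with the key length \cref{eq:final_key_length} makes each exponential term at most $\varepsilon_\mathrm{PA}$ overall, using $2\log_2\frac{1}{2\varepsilon_\mathrm{PA}}$. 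The smoothing contributions total $2\sqrt{\varepsilon_\mathrm{PE}}$ once, because they are incurred globally rather than per $\vec n$ (the sum of conditional distances equals the global distance, since $\vec n$ is classical). Aborts ($l=0$) contribute nothing. Adding the three terms gives $(2\sqrt{\varepsilon_\mathrm{PE}}+\varepsilon_\mathrm{PA}+\varepsilon_\mathrm{EV})$-security. Note that no i.i.d.\ or correlation structure is used anywhere here: all correlations enter only through the assumed bound \cref{eq:statistical_guarantee_decoy}.
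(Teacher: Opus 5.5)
Your proposal is correct and is essentially the paper's argument. The paper also lower-bounds the sifted key's smooth min-entropy by that of its single-photon part, applies the EUR conditioned jointly on the announced data $\vec n$ and the hidden photon numbers $\vec m$ (Tupkary et al., Theorem 3), and then cites their variable-length leftover-hashing theorem (Theorem 4) instead of re-deriving it. The one step to tighten is the smoothing bookkeeping: the paper uses event-dependent smoothing $\sqrt{\kappa(\vec n,\vec m)}$, with the blocks where $n_{K,(1)}<\mathcal{N}_{K,(1)}(\vec n)$ (i.e., $\vec m\notin S_{\vec n}$) counted at full probability, and $2\sqrt{\varepsilon_\mathrm{PE}}$ follows from concavity, $\sum\Pr\sqrt{\kappa}\le\sqrt{\sum\Pr\,\kappa}\le\sqrt{\varepsilon_\mathrm{PE}}$, not from exact additivity of purified distances over $\vec n$.
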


\begin{proof}
See Appendix~\ref{appsub:proof_thm:variable_length_security_decoy}.
\end{proof}

\subsection{Obtaining a bound on the single-photon phase-error rate}

To prove security, all that we need is to derive a statistical guarantee of the form in \cref{eq:statistical_guarantee_decoy}. Obtaining a bound on $\bm{n_{K,(1)}}$ is straightforward, since we can apply the standard decoy-state method, as we will shortly see. The main focus of the analysis below is therefore how to obtain a bound on the single-photon phase-error rate $\bm\ephone$ in the presence of encoding correlations.

\subsubsection*{Quantum coin inequality}

First, we establish the following result.

\begin{lemma}[Single-photon quantum coin inequality for decoy-state BB84]
\label{lem:quantum_coin_inequality_decoy}
    Consider the \emph{Phase-error estimation protocol} defined above. Let:
    \begin{itemize}
        \item $\bm{n}_{\mathrm{keep},\mathrm{sifted},Z,(1)}^{\mathrm{err}}$ = number of (phase) errors in detected {keep-sifted} rounds with $\alpha_k = \beta_k = Z$ and $m_k = 1$ (= number of single-photon phase errors $\bm{\nphone}$);
        \item $\bm{n}_{\mathrm{keep},\mathrm{sifted},Z,(1)}^{\mathrm{det}}$ = total number of detected {keep-sifted} rounds with $\alpha_k = \beta_k = Z$ and $m_k = 1$;
        \item $\bm{n}_{\mathrm{keep},\mathrm{sifted},X,(1)}^{\mathrm{err}}$ = number of errors in detected {keep-sifted} rounds with $\alpha_k = \beta_k = X$ and $m_k = 1$;
        \item $\bm{n}_{\mathrm{keep},\mathrm{sifted},X,(1)}^{\mathrm{det}}$ = total number of detected {keep-sifted} rounds with $\alpha_k = \beta_k = X$ and $m_k = 1$;
        \item $\bm{n}_\mathrm{sifted}^{\mathrm{det}}$ = total number of detected sifted rounds (both keep and trash);
        \item $\bm{n}_{\mathrm{trash},\mathrm{sifted},-,(1)}$ = number of {trash-sifted} rounds where Alice obtained $\ket{-}$ and $m_k = 1$.
    \end{itemize}
    
    Then, for any $\epsilon_A > 0$, with $\Delta_A = \sqrt{2 \bm{n}_\mathrm{sifted}^{\mathrm{det}} \ln(1/\epsilon_A)}$ and $\bm{n}_{\mathrm{keep},\mathrm{sifted},(1)}^{\mathrm{det}} = \bm{n}_{\mathrm{keep},\mathrm{sifted},Z,(1)}^{\mathrm{det}} + \bm{n}_{\mathrm{keep},\mathrm{sifted},X,(1)}^{\mathrm{det}}$, 
    \begin{equation}
        \begin{aligned}
        & \bm{n}_{\mathrm{keep},\mathrm{sifted},Z,(1)}^{\mathrm{err}} \underset{5\epsilon_{A}}{\leq} (\bm{n}_{\mathrm{keep},\mathrm{sifted},Z,(1)}^{\mathrm{det}} + \Delta_A) \times \\ 
        & G_+\left(\frac{\bm{n}_{\mathrm{keep},\mathrm{sifted},X,(1)}^{\mathrm{err}} + \Delta_A}{\bm{n}_{\mathrm{keep},\mathrm{sifted},X,(1)}^{\mathrm{det}} - \Delta_A}, 1 - \frac{2 p_{\mathrm{keep}}(\bm{n}_{\mathrm{trash},\mathrm{sifted},-,(1)} + \Delta_A)}{(1-p_{\mathrm{keep}}) (\bm{n}_{\mathrm{keep},\mathrm{sifted},Z,(1)}^{\mathrm{det}}+\bm{n}_{\mathrm{keep},\mathrm{sifted},X,(1)}^{\mathrm{det}})}\right) + \Delta_A \eqqcolon \bm{\nphoneU},
    \label{eq:finite_azuma_decoy}
    \end{aligned}
    \end{equation}
    {where $G_+(y,z) = y + (1-z^2)(1-2y) + 2\sqrt{z^2(1-z^2)y(1-y)}$ for {$\sqrt{y}<z$} and $G_+(y,z) = 1$ otherwise.} Thus, the single-photon phase-error rate is bounded as
    \begin{equation}
        \Pr[\bm{\ephone} > \frac{\bm{\nphoneU}}{\bm{n}_{\mathrm{keep},\mathrm{sifted},Z,(1)}^{\mathrm{det}}} ] \leq 5 \epsilon_A.
    \end{equation}
\end{lemma}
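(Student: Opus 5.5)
The proof follows the standard quantum-coin argument, made sequential so that it survives correlations. The plan is to express everything through a virtual scenario in which, for every detected sifted single-photon round, Alice's coin $C_k$ is measured in either $\{\ket{Z},\ket{X}\}$ (keep) or $\{\ket{\pm}\}$ (trash). Her qubit $A_k$ is measured in the $X$ basis, and Bob measures $\{G^{(X)}_0,G^{(X)}_1\}$.

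\textbf{Step 1: single-round inequality.} For each round $k$, conditioned on the full past (all previous outcomes, Eve's attack, detection pattern), write the post-measurement state of $C_kA_kB_k$ restricted to $m_k=1$ and detection. The coin statistics split into three probabilities:
\begin{itemize}
\item the probability $P^{\rm err}_{Z,k}$ of an $X$-error given coin $Z$;
\item the corresponding $P^{\rm err}_{X,k}$ given coin $X$;
\item the probability $P_{-,k}$ of coin outcome $\ket{-}$.
\end{itemize}
Error events can be seen as a two-outcome measurement acting on the two branches $\ket{Z}_{C_k}$ and $\ket{X}_{C_k}$. The fidelity between these branches, after Eve's channel and Bob's filter, is lower bounded by $1-2P_{-,k}/P_{\det,k}$. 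This is the usual coin-imbalance relation $\langle\Psi|(\ket{+}\!\bra{+}-\ket{-}\!\bra{-})|\Psi\rangle$. A standard fidelity/trace-distance lemma then gives $P^{\rm err}_{Z,k}\le G_+\big(P^{\rm err}_{X,k},\,1-2P_{-,k}/P_{\det,k}\big)$. The function $G_+$ is jointly concave in the appropriate sense, so the per-round inequality survives summation. Summing over rounds with Jensen/concavity yields an inequality among the sums of conditional probabilities:
\[
\textstyle\sum_k P^{\rm err}_{Z,k}\le \big(\sum_k P^{\det}_{Z,k}\big)\,G_+\!\Big(\frac{\sum_k P^{\rm err}_{X,k}}{\sum_k P^{\det}_{X,k}},\,1-\frac{2\sum_k P_{-,k}}{\sum_k P^{\det}_{k}}\Big).
\]
The keep/trash weights $p_{\rm keep}$, $1-p_{\rm keep}$ and the factor $1/2$ from the coin basis produce the prefactor $2p_{\rm keep}/(1-p_{\rm keep})$. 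That prefactor converts trash-round $\ket{-}$ probabilities into the same normalisation as keep-round detections.

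\textbf{Step 2: Azuma.} Each of the five quantities ($Z$ errors, $Z$ detections, $X$ errors, $X$ detections, $\ket{-}$ trash counts) is a sum of Bernoulli indicators over detected sifted rounds. Its deviation from the sum of conditional probabilities is a martingale with bounded increments. The martingale is indexed by detected sifted rounds, whose number is $\bm n^{\rm det}_{\rm sifted}$, which gives $\Delta_A=\sqrt{2\bm n^{\rm det}_{\rm sifted}\ln(1/\epsilon_A)}$. Handling a random stopping length needs a Kato-type or padded variant of Azuma. Applying Azuma five times, each with failure $\epsilon_A$, replaces the probability sums by observed counts $\pm\Delta_A$. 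The directions are chosen to use monotonicity of $G_+$: increasing in $y$ and decreasing in $z$ on its domain. A union bound gives total failure $5\epsilon_A$. The case $\sqrt y\ge z$ is trivial since $G_+=1$.

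\textbf{Main obstacle.} The delicate point is that, with correlations, the state of round $k$ depends on earlier settings. The single-round inequality must therefore hold conditionally on the entire past, including earlier coin and $A$ measurement outcomes. This works because the global state in \cref{eq:global_state_decoy_correlations} has tensor-product coin structure per round: conditioned on past outcomes, round $k$'s coin is still a uniform superposition over $\alpha_k$. The fidelity argument then only uses the conditional state of round $k$. I would first check that conditioning on the history preserves the equal amplitude $\tfrac12$ on $\ket{Z},\ket{X}$ before Eve acts. Monotonicity of $G_+$ in each argument must also be verified so that every Azuma bound points the right way.

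Finally, dividing by $\bm n^{\rm det}_{{\rm keep,sifted},Z,(1)}$ gives the stated phase-error-rate bound.
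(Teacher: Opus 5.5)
\textbf{The gap is in Step~1, in the passage from your per-round bound to the summed one.}

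Your per-round inequality $P^{\mathrm{err}}_{Z,k}\le G_+\big(P^{\mathrm{err}}_{X,k},1-2P_{-,k}/P_{\mathrm{det},k}\big)$ is written in terms of error rates \emph{conditional} on the coin value. Deriving it from the inner product of the unnormalized branches $\phi_Z,\phi_X$ already uses $\|\phi_Z\|^2+\|\phi_X\|^2\ge 2\|\phi_Z\|\|\phi_X\|$ in every round. That step discards the probabilities $q_{Z,k},q_{X,k}$ of the two coin values among keep rounds.

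These weights are not $1/2$. The coin is balanced in the source state. After conditioning on detection, however, it can be arbitrarily unbalanced: Eve controls detection and can correlate it with information about $\alpha_k$ leaked by round $k$ and, through the correlations, by later pulses. Conditioning on earlier error outcomes has the same effect. (Relatedly, the source state is entangled across rounds, not of product form; the argument only needs that $C_k$ is a qubit untouched before its own measurement.)

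Once these weights are gone, no concavity property of $G_+$ can recover the aggregate bound, because its ingredients carry different weights. The prefactor is weighted by $Z$-coin probabilities, the first argument by $X$-coin probabilities, and the second uniformly. Concretely, take two rounds with $(q_Z,q_X,e_Z,e_X,P_-)=(1-\delta,\delta,1,1,0)$ and $(\delta,1-\delta,0,0,0)$. Each satisfies your per-round inequality, since $G_+(1,1)=1$ and $G_+(0,1)=0$. Yet the summed inequality would require $1-\delta\le\delta$.

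\textbf{The missing idea is to keep the per-round relation homogeneous until after summation.} Apply the Bloch-sphere inequality to $C_k$ conditioned on $e_k=\mathrm{err}$ and on $e_k=\overline{\mathrm{err}}$, and add the two. This gives $1-2P_{-,k}\le 2\big[\sqrt{q_{Z,k}e_{Z,k}\,q_{X,k}e_{X,k}}+\sqrt{q_{Z,k}(1-e_{Z,k})\,q_{X,k}(1-e_{X,k})}\big]$, in joint probabilities. This excludes the example above, since it forces $P_-\ge\frac12-\sqrt{\delta(1-\delta)}$. Then:
\begin{enumerate}
\item sum over rounds using $\sum_k\sqrt{x_ky_k}\le\sqrt{\sum_kx_k\sum_ky_k}$;
\item only then apply $\tilde n^{\mathrm{det}}_Z+\tilde n^{\mathrm{det}}_X\ge 2\sqrt{\tilde n^{\mathrm{det}}_Z\tilde n^{\mathrm{det}}_X}$ once, where these are the summed $Z$- and $X$-keep probabilities;
\item invert to $G_+$.
\end{enumerate}

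Three smaller points also need attention:
\begin{enumerate}
\item The lemma's $\bm{n}_{\mathrm{trash},\mathrm{sifted},-,(1)}$ counts undetected rounds too. You must relax the detected count to it, using that $G_+$ is decreasing in $z$.
\item No Kato-type or padded Azuma is needed. In the \emph{Phase-error estimation protocol} the error/no-error POVM on $A_kB_k$ is basis independent, so it can be applied before the coin measurements. You can therefore condition on $\mathcal{D}^{(1)}_{\mathrm{sifted}}$, apply Azuma with fixed length, and afterwards relax $n^{\mathrm{det}}_{\mathrm{sifted},(1)}\le n^{\mathrm{det}}_{\mathrm{sifted}}$.
\item Substituting the Azuma upper bound for $\tilde n^{\mathrm{det}}_Z$ requires the right-hand side to be increasing in $\tilde n^{\mathrm{det}}_Z$. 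This does not follow from the monotonicity of $G_+$, because $\tilde n^{\mathrm{det}}_Z$ also enters its second argument.
\end{enumerate}
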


\begin{proof}
        The proof crucially relies on the fact that, throughout the protocol, at any moment, the coin systems $C_1^N$ are always qubits, and thus the Bloch sphere inequality can be applied to them. A full self-contained proof is provided in Appendix~\ref{appsub:proof_quantum_coin_inequality_decoy}.
\end{proof}

{\begin{remark}[On the choice of concentration inequality]
\label{rem:azuma_vs_kato}
The deviation terms $\Delta_A$ in \cref{eq:finite_azuma_decoy} originate from Azuma's inequality (\cref{lem:azuma}), which we use for its simplicity and because it requires no prior knowledge of the expected statistics. The price is that $\Delta_A$ grows with the square root of the total number of detected sifted rounds, regardless of how small the individual counts are. This is pessimistic for quantities such as $\bm{n}_{\mathrm{keep},\mathrm{sifted},X,(1)}^{\mathrm{err}}$, whose expected value is only a small fraction of $\bm{n}_\mathrm{sifted}^\mathrm{det}$. A sharper alternative is Kato's inequality~\cite{katoConcentrationInequality2020,curras-lorenzoTightFinitekey2021}, whose deviation term depends on the observed sum and is therefore much tighter when the latter is small. This inequality is also directly compatible with our analysis and would improve the finite-key performance, at the cost of a lengthier presentation: its parameters must be optimized before the protocol run from a prediction of the expected statistics, and the resulting deviation terms differ for each quantity, rather than being common to all of them as above. {We also note that Ref.~\cite{mannalathQuantumKey2026} has recently introduced concentration bounds that improve over Kato's inequality by being more robust to inaccurate predictions of the channel behavior, and which are equally compatible with our analysis.} We emphasize {that these refinements} are independent of the treatment of encoding correlations introduced here, which is the focus of this work; the same observation applies to the single-photon analysis of \cref{app:single_photon}.
\end{remark}}

\subsubsection*{Bounding the bad coin events}

As we will shortly see, the terms $\bm{n}_{\mathrm{keep},\mathrm{sifted},Z,(1)}^{\mathrm{det}}$, $\bm{n}_{\mathrm{keep},\mathrm{sifted},X,(1)}^{\mathrm{err}}$, and $\bm{n}_{\mathrm{keep},\mathrm{sifted},X,(1)}^{\mathrm{det}}$ appearing in \cref{eq:finite_azuma_decoy} can all be related to the observed protocol statistics via the standard decoy-state method. However, the remaining term, $\bm{n}_{\mathrm{trash},\mathrm{sifted},-,(1)}$, counts {the} outcomes of a Hadamard-basis measurement on Alice's coin registers during trash rounds, a fictitious measurement that is never performed in the actual protocol. Bounding this term using only publicly announced quantities is therefore the main challenge.

The key observation is that $\bm{n}_{\mathrm{trash},\mathrm{sifted},-,(1)}$ depends exclusively on Alice's local operations. To see this, note that the Hadamard measurement acts on Alice's coin registers {$C_1^N$}, which never leave Alice's lab. Thus, their outcomes are independent of both Eve's attack and Bob's measurement results. It does depend on Bob's decision whether each round is "sifted" or "unsifted", but since this assignment is independent of all other variables, we can without loss of generality assume that it is Alice who makes this decision. Consequently, the distribution of $\bm{n}_{\mathrm{trash},\mathrm{sifted},-,(1)}$ is fully determined by Alice's source-replacement state {$\ket{\Psi_N}_{C_1^N A_1^N I_1^N M_1^N T_1^N}$ in \cref{eq:global_state_decoy_correlations}}, and we can analyze it in the following simplified scenario that eliminates all protocol elements involving Bob and Eve:

 \vspace{3pt}

    \begin{mdframed}
    \textbf{Simple Coin Measuring Scenario}
    
    \begin{enumerate}
        \item \textit{State preparation:}  Alice prepares her global source replacement state $\ket{\Psi_N}_{C_1^N A_1^N I_1^N M_1^N T_1^N}$.
        \item \textit{Trash/sifted assignment and coin measurement:}  For each round $k \in \{1,\ldots,N\}$, Alice measures the photon number system $M_k$ to learn the photon number $m_k$. If $m_k = 1$, Alice assigns the round to "trash" with probability $(1-p_\mathrm{keep})$, and to "sifted" with probability $1/2$. Then, if the round is assigned to both "trash" and "sifted", Alice measures the coin system $C_k$ in the $\{\ket{+}, \ket{-}\}$ basis. We denote by $\bm{n}_{\mathrm{trash},\mathrm{sifted},-,(1)}$ the number of {trash-sifted} rounds with $m_k = 1$ where Alice obtained $\ket{-}$. 
    \end{enumerate}
    \end{mdframed}

    \vspace{5pt}

Within this scenario, we define the indicator random variables
    \begin{equation}
    \label{eq:X_k_def}
        \bm{X_k} \coloneqq \mathbf{1}\bigl\{\text{Alice obtains }m_k=1\text{, chooses "trash" and "sifted" and obtains }\ket{-}_{C_k}\text{ on round }k\bigr\}
    \end{equation}
so that
\begin{equation}
\label{eq:ntrashsiftedminusone_sum}
\bm{n}_{\mathrm{trash},\mathrm{sifted},-,(1)} = \sum_{k=1}^{N} \bm{X_k}.
\end{equation}

The problem thus reduces to bounding a sum of binary random variables derived from measurements on a known (or partially known) quantum state, with no adversary involved. The difficulty of this task depends on the structure of {$\ket{\Psi_N}_{C_1^N A_1^N I_1^N M_1^N T_1^N}$}, and in particular on whether the source exhibits correlations between rounds.

\textbf{Uncorrelated case.} If the source is memoryless, the state {$\ket{\Psi_N}_{C_1^N A_1^N I_1^N M_1^N T_1^N}$} factorizes across rounds, and so do the coin registers $C_1^N$. The random variables $\bm{X_k}$ are then independent Bernoulli variables. One can upper-bound $\mathbb{E}[\bm{X_k}]$ for each $k$ using whatever partial characterization of the source is available, and then apply a standard Chernoff-type bound to obtain an upper bound on $\bm{n}_{\mathrm{trash},\mathrm{sifted},-,(1)}$, as done in Ref.~\cite{curras-lorenzoSecurityFramework2025}.

\textbf{Correlated case.} When the source has memory, the global state {$\ket{\Psi_N}_{C_1^N A_1^N I_1^N M_1^N T_1^N}$} is entangled across rounds (see \cref{eq:global_state_decoy_correlations}), and the random variables $\bm{X_k}$ are no longer independent. However, since Eve has been removed from the picture, the distribution of $\sum_k \bm{X_k}$ is entirely determined by Alice's source-replacement state. In principle, if {$\ket{\Psi_N}_{C_1^N A_1^N I_1^N M_1^N T_1^N}$} is fully characterized---even with arbitrary long-range correlations---one could directly compute or bound this distribution without any further assumptions, {although} this could be computationally demanding.

In practice, however, one typically has only partial knowledge of the source correlations, so a general method to obtain a statistical bound on \cref{eq:ntrashsiftedminusone_sum} is useful. To achieve this, we introduce at this stage the assumption of a finite correlation length $l_c$: the state of round $k$ depends on at most $l_c$ preceding rounds. This allows us to partition the $N$ rounds into $l_c + 1$ groups of mutually independent variables and apply a Chernoff-type bound within each group, allowing us to obtain a broadly applicable result. We emphasize that this is one possible approach; other techniques for bounding sums of dependent random variables could perhaps be employed here as well, depending on the available characterization of the source. Moreover, we remark that the assumption of a strictly finite correlation length can be easily relaxed by applying \cref{lem:unbounded_correlations}, as explained later.

\begin{lemma}
\label{lem:bernstein_by_groups_decoy}
    Suppose that Alice runs the \emph{Phase-error estimation protocol},
    and that the source has a maximum correlation length~$l_c$.
    For each round~$k$, denote by
    $a_{\bar{k}} \coloneqq (a_{k-l_c}^{k-1}, a_{k+1}^{k+l_c})$
    and
    $\alpha_{\bar{k}} \coloneqq (\alpha_{k-l_c}^{k-1}, \alpha_{k+1}^{k+l_c})$
    the bit and basis choices in the $l_c$-neighbourhood of round~$k$
    (excluding round~$k$ itself).
    Then, for any $\epsilon_C \in (0,1)$,
    \begin{equation}
    \label{eq:bound_n_trash_decoy}
        \bm{n}_{\mathrm{trash},\mathrm{sifted},-,(1)}
        \;\underset{(l_c+1)\epsilon_C}{\leq}\;
        n_{\mathrm{trash},\mathrm{sifted},-,(1)}^{\mathrm{U}},
    \end{equation}
    where
    \begin{equation}
    \label{eq:n_trash_U_decoy}
    \begin{aligned}
        n_{\mathrm{trash},\mathrm{sifted},-,(1)}^{\mathrm{U}}
        &\coloneqq{} \frac{N p_1 (1-p_{\mathrm{keep}})}{2}\, \Delta_{\mathrm{coin}}^{\mathrm{U}}
        \\
        &~+ \sqrt{N p_1 (l_c+1)(1-p_{\mathrm{keep}})\,
              \Delta_{\mathrm{coin}}^{\mathrm{U}}\,
              \ln \tfrac{1}{\epsilon_C}}
        \;+\; \tfrac{2(l_c+1)}{3}\,\ln \tfrac{1}{\epsilon_C}\,,
    \end{aligned}
    \end{equation}
    with $p_1 = \sum_{\mu} p_{\mu}\, p_{1|\mu}$.
    Here, $\Delta_{\mathrm{coin}}^{\mathrm{U}}$ quantifies the
    worst-case deviation of Alice's single-photon states from an
    ideal BB84 source, and is defined as
    \begin{equation}
    \label{eq:Delta_coin_U_decoy}
        \Delta_{\mathrm{coin}}^{\mathrm{U}}
        \coloneqq
        \max_{k}\; \max_{a_{\bar{k}},\,\alpha_{\bar{k}}}\;
        \frac{1}{2}\,\Re\!\Bigg(
            1 - \frac{1}{2\sqrt{2}}
            \sum_{a,a' \in \{0,1\}}
            (-1)^{a a'}\,
            \braket*{\Xi_{a,X \mid a_{\bar{k}},\alpha_{\bar{k}}}}
                     {\Xi_{a',Z \mid a_{\bar{k}},\alpha_{\bar{k}}}}_{T_k\, I_{k+1}^{k+l_c}\, M_{k+1}^{k+l_c}\, T_{k+1}^{k+l_c}}
        \Bigg),
    \end{equation}
    where, for each round $k$, bit value $a_k \in \{0,1\}$, and
    basis $\alpha_k \in \{Z,X\}$,
    \begin{equation}
    \label{eq:Xi_decoy}
        \ket*{\Xi_{a_k,\alpha_k \mid a_{\bar{k}},\alpha_{\bar{k}}}}
            _{T_k\, I_{k+1}^{k+l_c}\, M_{k+1}^{k+l_c}\, T_{k+1}^{k+l_c}}
        \coloneqq
        \ket*{\psi_{a_{k-l_c}^{k},\,\alpha_{k-l_c}^{k}}^{(1)}}_{T_k}
        \;\otimes
        \bigotimes_{j=k+1}^{k+l_c}
            \ket*{\Psi_{a_{j-l_c}^{j},\,\alpha_{j-l_c}^{j}}}
                _{I_j M_j T_j}\,.
    \end{equation}
\end{lemma}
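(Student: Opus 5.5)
We work entirely in the \emph{Simple Coin Measuring Scenario}, where $\bm{n}_{\mathrm{trash},\mathrm{sifted},-,(1)}=\sum_k \bm{X_k}$ with $\bm{X_k}$ as in \cref{eq:X_k_def}. The plan has three steps. First, bound the conditional probability that round $k$ gives a $\ket{-}$ outcome. Second, split the rounds into $l_c+1$ residue classes modulo $l_c+1$. Third, apply a one-sided Bernstein/Chernoff bound within each class with failure probability $\epsilon_C$, and combine the classes by a union bound. This union bound is the origin of the $(l_c+1)\epsilon_C$ in \cref{eq:bound_n_trash_decoy}.

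\textbf{Per-round bound.} Fix $k$ and condition on the neighbouring settings $(a_{\bar k},\alpha_{\bar k})$. Since the trash/sifted assignments are independent coins, we have
\[
\Pr[\bm{X_k}=1]=\tfrac{1-p_{\mathrm{keep}}}{2}\,\Pr[m_k=1,\ \text{$C_k$ gives }\ket{-}].
\]
With correlation length $l_c$, the only systems of $\ket{\Psi_N}$ that depend on $(a_k,\alpha_k)$ are $T_k$ and $I_jM_jT_j$ for $j\le k+l_c$. Projecting $M_k$ onto $\ket{1}$ gives $\sqrt{p_{1|\mu_k}}\ket{\psi^{(1)}}_{T_k}$, and summing over $\mu_k$ gives the weight $p_1$. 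Projecting $C_k$ onto $\ket{-}$ then leaves the vector
\[
\tfrac{1}{2\sqrt2}\sum_{a}\Bigl(\ket{a_Z}_{A_k}\ket{\Xi_{a,Z|\cdot}}-\ket{a_X}_{A_k}\ket{\Xi_{a,X|\cdot}}\Bigr),
\]
tensored with the remaining rounds. Its squared norm, averaged over the other settings (all uniform), should equal
\[
\tfrac14\Bigl(2-2\Re\textstyle\sum_{a,a'}\braket{a_X}{a'_Z}\braket{\Xi_{a,X}}{\Xi_{a',Z}}\Bigr),
\]
up to normalization. Using $\braket{a_X}{a'_Z}=(-1)^{aa'}/\sqrt2$, this reproduces the expression in \cref{eq:Delta_coin_U_decoy}. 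Hence the probability is at most $p_1\Delta^{\mathrm U}_{\mathrm{coin}}$ after maximizing over $k$ and $(a_{\bar k},\alpha_{\bar k})$.

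The main obstacle is making this rigorous: we must show that the $\ket{-}$ probability really depends only on the local block $\Xi$, and that conditioning on neighbours is legitimate. The concern is that earlier rounds' coins, and the coins of rounds $j\le k+l_c$, are entangled with $T_k$. I would handle this by measuring all coins except $C_k$ in the $Z/X$ basis with $A$ in its basis. This does not change the marginal of $C_k$, since it is a local operation on other registers. It reduces the analysis to conditioning on classical settings, after which tracing out $I_jM_jT_j$ for $j>k+l_c$ and the far past removes everything else.

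\textbf{Grouping and concentration.} For a residue class $g$, the rounds $k\equiv g \pmod{l_c+1}$ are separated by more than $l_c$. I expect that, after measuring the settings of all rounds not in the class, the $\bm{X_k}$ in the class are conditionally independent. Each then has success probability at most $q\coloneqq p_1(1-p_{\mathrm{keep}})\Delta^{\mathrm U}_{\mathrm{coin}}/2$. The delicate point is the word "independent": neighbours of different class members overlap, so I would instead state a sequential (martingale) version. Conditioned on all previous class outcomes, the probability of $\bm{X_k}=1$ is at most $q$. This holds because the per-round bound is uniform over arbitrary neighbour settings, and measurements on the previous class members' coins act as local operations that can be absorbed into the conditioning.

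A Bernstein/Freedman bound for variables stochastically dominated by Bernoulli$(q)$, with $N/(l_c+1)$ terms and variance at most $q$, then gives
\[
\sum_{k\in g}\bm{X_k}\le \tfrac{Nq}{l_c+1}+\sqrt{\tfrac{2Nq}{l_c+1}\ln\tfrac1{\epsilon_C}}+\tfrac23\ln\tfrac1{\epsilon_C}
\]
except with probability $\epsilon_C$. Summing over the $l_c+1$ classes, and using $\sum\sqrt{\cdot}=(l_c+1)\sqrt{\cdot}$, yields exactly $n^{\mathrm U}_{\mathrm{trash},\mathrm{sifted},-,(1)}$ in \cref{eq:n_trash_U_decoy}. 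The union bound over classes gives the failure probability $(l_c+1)\epsilon_C$.
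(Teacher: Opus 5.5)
Your proposal is correct and follows essentially the same route as the paper's proof: residue classes modulo $l_c+1$, measuring the bit and basis of every out-of-class round, expressing the conditional probability of the $\ket{-}$ outcome through the $\Xi$-overlaps, applying Bernstein within each class, and taking a union bound over the $l_c+1$ classes.

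The fallback to a martingale version is unnecessary. Since $\ket{\Xi}$ involves only round $k$ and the \emph{following} $l_c$ rounds, the blocks $\{k,\dots,k+l_c\}$ with $k$ in a fixed class partition all registers. Once the out-of-class settings are fixed, the state is therefore an exact tensor product over blocks, and the overlapping neighbourhoods enter only as fixed classical labels. This makes the $\bm{X_k}$ genuinely independent, and it is the same product structure your sequential argument implicitly relies on. Note also an algebra slip: the squared norm of your projected vector is $\tfrac18\bigl(4-2\Re\sum_{a,a'}\braket{a_X}{a'_Z}\braket{\Xi_{a,X}}{\Xi_{a',Z}}\bigr)$, not $\tfrac14(2-2\Re\cdots)$. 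Only the former reproduces the definition of $\Delta_{\mathrm{coin}}^{\mathrm{U}}$; yours would give a negative probability for an ideal BB84 source.
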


\begin{proof}
    See \cref{appsub:proof_bernstein_by_groups_decoy}.
\end{proof}

\subsubsection*{Decoy-state bounds and final security statement}

By combining \cref{lem:quantum_coin_inequality_decoy,lem:bernstein_by_groups_decoy}, one can obtain a bound on the single-photon phase-error rate as a function of $\bm{n}_{\mathrm{keep},\mathrm{sifted},Z,(1)}^{\mathrm{det}}$, $\bm{n}_{\mathrm{keep},\mathrm{sifted},X,(1)}^{\mathrm{det}}$ and $\bm{n}_{\mathrm{keep},\mathrm{sifted},X,(1)}^{\mathrm{err}}$. However, the value of these quantities is not contained in the announced data vector $\bm{\vec n}$, since Alice never announces (or even actually knows) the photon number of her signals. The vector $\bm{\vec n}$ only contains the values $\{\bm{n}_{\mathrm{keep},\mathrm{sifted},Z,\mu}^{\mathrm{det}}\}_\mu$, $\{\bm{n}_{\mathrm{keep},\mathrm{sifted},X,\mu}^{\mathrm{det}}\}_\mu$ and $\{\bm{n}_{\mathrm{keep},\mathrm{sifted},X,\mu}^{\mathrm{err}}\}_\mu$, where
\begin{itemize}
         \item $\bm{n}_{\mathrm{keep},\mathrm{sifted},Z,\mu}^{\mathrm{det}}$ = total number of detected {keep-sifted} rounds with $\alpha_k = \beta_k = Z$ and $I_k = \mu$;
        \item $\bm{n}_{\mathrm{keep},\mathrm{sifted},X,\mu}^{\mathrm{err}}$ = number of errors in detected {keep-sifted} rounds with $\alpha_k = \beta_k = X$ and $I_k = \mu$;
        \item $\bm{n}_{\mathrm{keep},\mathrm{sifted},X,\mu}^{\mathrm{det}}$ = total number of detected {keep-sifted} rounds with $\alpha_k = \beta_k = X$ and $I_k = \mu$.
\end{itemize}

{Importantly}, note that, conditional on the photon number $m_k$, the emitted state is independent of the intensity label $\mu_k$. Thus, one can simply apply the standard decoy-state method to obtain bounds on $\bm{n}_{\mathrm{keep},\mathrm{sifted},Z,(1)}^{\mathrm{det}}$, $\bm{n}_{\mathrm{keep},\mathrm{sifted},X,(1)}^{\mathrm{det}}$ and $\bm{n}_{\mathrm{keep},\mathrm{sifted},X,(1)}^{\mathrm{err}}$ using the observed $\{\bm{n}_{\mathrm{keep},\mathrm{sifted},Z,\mu}^{\mathrm{det}}\}_\mu$, $\{\bm{n}_{\mathrm{keep},\mathrm{sifted},X,\mu}^{\mathrm{det}}\}_\mu$ and $\{\bm{n}_{\mathrm{keep},\mathrm{sifted},X,\mu}^{\mathrm{err}}\}_\mu$. Once one has {derived} these bounds, one can directly prove security using the following theorem.
\begin{theorem}[Phase-error-rate bound for decoy-state BB84 with encoding correlations from quantum coin analysis]
\label{thm:phase_error_bound_quantum_coin_decoy}
    Consider the \emph{Phase-error estimation protocol} with encoding correlations up to maximum length $l_c$. Let  $\bm{n}_{\mathrm{keep},\mathrm{sifted},Z,(1)}^{\mathrm{det,L}}$, $\bm{n}_{\mathrm{keep},\mathrm{sifted},Z,(1)}^{\mathrm{det,U}}$, $\bm{n}_{\mathrm{keep},\mathrm{sifted},X,(1)}^{\mathrm{det,L}}$ and $\bm{n}_{\mathrm{keep},\mathrm{sifted},X,(1)}^{\mathrm{err,U}}$ be random variables that are functions of the announced data vector $\bm{\vec n}$. Consider the event
    \begin{equation}
    \label{eq:Omega_decfail}
        \begin{aligned}
        \Omega_\mathrm{dec,fail} &= \Big\{\bm{n}_{\mathrm{keep},\mathrm{sifted},Z,(1)}^{\mathrm{det}} < \bm{n}_{\mathrm{keep},\mathrm{sifted},Z,(1)}^{\mathrm{det,L}}  \cup \bm{n}_{\mathrm{keep},\mathrm{sifted},Z,(1)}^{\mathrm{det}} > \bm{n}_{\mathrm{keep},\mathrm{sifted},Z,(1)}^{\mathrm{det,U}} \\
        &~~~~~~~\cup \,\bm{n}_{\mathrm{keep},\mathrm{sifted},X,(1)}^{\mathrm{det}}< \bm{n}_{\mathrm{keep},\mathrm{sifted},X,(1)}^{\mathrm{det,L}} \cup \bm{n}_{\mathrm{keep},\mathrm{sifted},X,(1)}^{\mathrm{err}} > \bm{n}_{\mathrm{keep},\mathrm{sifted},X,(1)}^{\mathrm{err,U}} \Big\},
        \end{aligned}
    \end{equation}
    and suppose we have the guarantee that
    \begin{equation}
        \label{eq:pr_decfail}\Pr[\Omega_\mathrm{dec,fail}] \leq \epsilon_{\rm decoy}.
    \end{equation}
    Then, it follows that
    \begin{equation}
        \Pr[\bm{n}_{\mathrm{keep},\mathrm{sifted},Z,(1)}^{\mathrm{det}} < \bm{n}_{\mathrm{keep},\mathrm{sifted},Z,(1)}^{\mathrm{det,L}}\quad \cup \quad \bm\ephone > \mathcal{E}_{\mathrm{ph},(1)} (\bm{\vec{n}};\epsilon_A,\epsilon_C) ] \leq 5\epsilon_A + (l_c+1)\epsilon_C + \epsilon_\mathrm{decoy},
    \end{equation}
    where
    \begin{equation}
        \label{eq:E_ph_1_def}\mathcal{E}_{\mathrm{ph},(1)} (\bm{\vec{n}};\epsilon_A,\epsilon_C) \coloneqq \frac{(\bm{n}_{\mathrm{keep},\mathrm{sifted},Z,(1)}^{\mathrm{det,U}} + \Delta_A) \, G_+ \left(\frac{\bm{n}_{\mathrm{keep},\mathrm{sifted},X,(1)}^{\mathrm{err,U}} + \Delta_A}{\bm{n}_{\mathrm{keep},\mathrm{sifted},X,(1)}^{\mathrm{det,L}} - \Delta_A}, 1 - \frac{2p_{\mathrm{keep}} (n_{\mathrm{trash},\mathrm{sifted},-,(1)}^\mathrm{U} + \Delta_A)}{ (1-p_{\mathrm{keep}})(\bm{n}_{\mathrm{keep},\mathrm{sifted},Z,(1)}^{\mathrm{det,U}}+\bm{n}_{\mathrm{keep},\mathrm{sifted},X,(1)}^{\mathrm{det,L}})}\right) + \Delta_A}{\bm{n}_{\mathrm{keep},\mathrm{sifted},Z,(1)}^{\mathrm{det,L}}}.
    \end{equation}
\end{theorem}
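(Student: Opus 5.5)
The argument is a union bound over three failure events, followed by a monotonicity check showing that, outside their union, $\bm\ephone$ is bounded by $\mathcal{E}_{\mathrm{ph},(1)}(\bm{\vec n};\epsilon_A,\epsilon_C)$. The three events are:
\begin{enumerate}
\item $\Omega_A$: the event excluded by \cref{lem:quantum_coin_inequality_decoy}, namely $\bm{n}_{\mathrm{keep},\mathrm{sifted},Z,(1)}^{\mathrm{err}} > \bm{\nphoneU}$, with $\Pr[\Omega_A]\le 5\epsilon_A$.
\item $\Omega_C$: the event $\bm{n}_{\mathrm{trash},\mathrm{sifted},-,(1)} > n^{\mathrm{U}}_{\mathrm{trash},\mathrm{sifted},-,(1)}$, with $\Pr[\Omega_C]\le (l_c+1)\epsilon_C$ by \cref{lem:bernstein_by_groups_decoy}.
\item $\Omega_{\mathrm{dec,fail}}$ from \cref{eq:Omega_decfail}, with probability at most $\epsilon_\mathrm{decoy}$ by assumption.
\end{enumerate}
One subtlety is that \cref{lem:bernstein_by_groups_decoy} is proven in the Simple Coin Measuring Scenario. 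I will justify transporting it to the Phase-error estimation protocol by the observation already made in the text. The marginal distribution of $\bm{n}_{\mathrm{trash},\mathrm{sifted},-,(1)}$ is determined only by Alice's local registers and the independent trash/sifted assignments. Measurements on $C_k,M_k$ commute with everything Eve and Bob do. So $\Pr[\Omega_C]$ is the same in both settings, and all three events live on a common probability space. The union bound then gives total failure probability at most $5\epsilon_A+(l_c+1)\epsilon_C+\epsilon_\mathrm{decoy}$.

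On the complement of $\Omega_A\cup\Omega_C\cup\Omega_{\mathrm{dec,fail}}$, the event $\bm{n}_{\mathrm{keep},\mathrm{sifted},Z,(1)}^{\mathrm{det}} < \bm{n}_{\mathrm{keep},\mathrm{sifted},Z,(1)}^{\mathrm{det,L}}$ already fails, because it is contained in $\Omega_{\mathrm{dec,fail}}$. It remains to show $\bm\ephone\le \mathcal{E}_{\mathrm{ph},(1)}$. Write $\bm\ephone = \bm{n}^{\mathrm{err}}_{\mathrm{keep},\mathrm{sifted},Z,(1)}/\bm{n}^{\mathrm{det}}_{\mathrm{keep},\mathrm{sifted},Z,(1)} \le \bm{\nphoneU}/\bm{n}^{\mathrm{det}}_{\mathrm{keep},\mathrm{sifted},Z,(1)}$. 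I will then replace each unobservable quantity in $\bm{\nphoneU}$ by its bound, checking that each replacement moves the expression upward:
\begin{itemize}
\item The prefactor $(n^{\mathrm{det}}_{Z}+\Delta_A)$ is increasing in $n^{\mathrm{det}}_Z$, and $G_+\ge 0$, so $n^{\mathrm{det}}_Z$ may be replaced by $n^{\mathrm{det,U}}_Z$ there.
\item The denominator $n^{\mathrm{det}}_Z$ may be replaced by $n^{\mathrm{det,L}}_Z$.
\item The first argument $y$ of $G_+$ increases when $n^{\mathrm{err}}_X\to n^{\mathrm{err,U}}_X$ and $n^{\mathrm{det}}_X\to n^{\mathrm{det,L}}_X$.
\item The second argument $z$ decreases when $n_{\mathrm{trash},-}\to n^{\mathrm{U}}_{\mathrm{trash},-}$, and also when the sum $n^{\mathrm{det}}_Z+n^{\mathrm{det}}_X$ in its denominator is replaced by the smaller quantity $n^{\mathrm{det,U}}_Z+n^{\mathrm{det,L}}_X$.
\end{itemize}
Here I am unsure of one point. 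Replacing $n^{\mathrm{det}}_Z$ by its upper bound inside $z$ makes the denominator larger, which moves $z$ up rather than down. I expect the monotonicity argument to fix this: $z$ should be treated as jointly constrained, or else the denominator should use the lower bound. I will follow the theorem as stated and check this step carefully. It may need the observation that the quantum coin inequality holds for any denominator at least as large as the relevant one, or some equivalent reformulation from the appendix proof.

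The main obstacle is this monotonicity of $G_+(y,z)$: I must show it is nondecreasing in $y$ and nonincreasing in $z$ on the relevant domain. I also need consistent conventions when arguments leave that domain, for instance a denominator $n^{\mathrm{det,L}}_X-\Delta_A\le0$ or $\sqrt y\ge z$, where $G_+=1$; in those cases the bound holds trivially. For smooth monotonicity, note that $G_+(y,z)$ is the largest $x$ satisfying the Bloch-sphere-type constraint $\sqrt{x}\sqrt{y}+\sqrt{1-x}\sqrt{1-y}\ge z$, that is, $x\le \cos^2(\arccos z-\arccos\sqrt{\cdot})$. Equivalently, $\sqrt{G_+}=\sin(\theta_y+\theta_z)$ with $\sin\theta_y=\sqrt y$ and $\cos\theta_z=z$. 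This form is increasing in $\theta_y$ and in $\theta_z$ as long as $\theta_y+\theta_z\le\pi/2$, and beyond that point $G_+$ is capped at $1$. This yields both monotonicities, and substituting the bounds completes the proof.
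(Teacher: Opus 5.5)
Your union-bound architecture matches the paper's proof, as do the transport of Lemma~2 to the phase-error estimation protocol, the final division by $n^{\mathrm{det,L}}_Z$, and the angle form $\sqrt{G_+}=\sin(\theta_y+\theta_z)$, which gives the separate monotonicity of $G_+$ in $y$ and in $z$. The gap is the step you flagged yourself, and it is not repaired as written.

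Your last bullet is false. On the good event $n^{\mathrm{det,U}}_Z\ge n^{\mathrm{det}}_Z$, so $n^{\mathrm{det,U}}_Z+n^{\mathrm{det,L}}_X$ need not be smaller than $n^{\mathrm{det}}_Z+n^{\mathrm{det}}_X$. Replacing $n^{\mathrm{det}}_Z$ by its upper bound inside the second argument raises $z$ and lowers $G_+$, so term-by-term substitution cannot reach the stated $\mathcal{E}_{\mathrm{ph},(1)}$. Neither fallback closes this. Putting a lower bound in that denominator proves a valid but looser bound, not the theorem. The coin inequality also does not ``hold for any larger denominator'': a larger denominator pushes $z$ toward $1$, which is a strictly stronger claim than Lemma~1 provides.

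The missing ingredient is \emph{joint} monotonicity in $n^{\mathrm{det}}_Z$. Fix $y$, fix $m\ge\Delta_A$, and set $c=2p_{\mathrm{keep}}(n^{\mathrm{U}}_{\mathrm{trash},\mathrm{sifted},-,(1)}+\Delta_A)/(1-p_{\mathrm{keep}})$. Then $F(n)=(n+\Delta_A)\,G_+\bigl(y,1-c/(n+m)\bigr)$ is nondecreasing in $n$: the growth of the prefactor dominates the decrease of $G_+$. The paper imports this from Supplementary Note~D of Ref.~\cite{curras-lorenzoSecurityQuantum2026}.

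You can prove it in your own parametrization (the case $c=0$ is trivial). Let $n'=n+\Delta_A$, $N=n+m\ge n'$, $\cos\theta_z=1-c/N$, and $\phi=\theta_y+\theta_z<\pi/2$. Then $F'(n)=\sin\phi\,\bigl[\sin\phi-2n'c\cos\phi/(N^2\sin\theta_z)\bigr]$. Using $n'\le N$ and $c/N=2\sin^2(\theta_z/2)$, nonnegativity reduces to $\tan\phi\ge2\tan(\theta_z/2)$. This holds because $\phi\ge\theta_z$ and $\tan\theta_z\ge2\tan(\theta_z/2)$. In the capped regime $F(n)=n+\Delta_A$, and $F$ is continuous across the cap.

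With this, substitute in the following order. First $n^{\mathrm{err}}_X\to n^{\mathrm{err,U}}_X$ and $n_{\mathrm{trash},-}\to n^{\mathrm{U}}_{\mathrm{trash},-}$. Next $n^{\mathrm{det}}_X\to n^{\mathrm{det,L}}_X$, which is safe because it moves $y$ up and $z$ down simultaneously. Only then replace $n^{\mathrm{det}}_Z\to n^{\mathrm{det,U}}_Z$ in the prefactor and the denominator at once, using $F$ with $m=n^{\mathrm{det,L}}_X$. The condition $m\ge\Delta_A$ is the same one needed for the first argument of $G_+$ to be meaningful.
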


\begin{proof}
See Appendix \ref{appsub:proof:thm:phase_error_bound_quantum_coin_decoy}.
\end{proof}

To obtain the guarantee in \cref{eq:Omega_decfail,eq:pr_decfail}, one can simply apply the standard decoy state analysis to obtain bounds on $\bm{n}_{\mathrm{keep},\mathrm{sifted},\gamma,(1)}^{\zeta}$ using the observed $\{\bm{n}_{\mathrm{keep},\mathrm{sifted},\gamma,\mu}^{\zeta}\}_\mu$ for each $(\gamma, \zeta) \in \{(Z, \mathrm{det}), (X, \mathrm{det}), (X, \mathrm{err})\}$. This is because the emitted signals are diagonal in the Fock basis with photon-number distribution $p_{m\vert \mu}$. Thus, one can consider the typical decoy-state equivalent counterfactual scenario in which Alice and Bob first determine the values of  $\bm{n}_{\mathrm{keep},\mathrm{sifted},\gamma,(m)}^{\zeta}$, and then assign each of these events to intensity $\mu$ with probability %
\begin{equation}
    p_{\mu \vert m} = \frac{p_{\mu} p_{m\vert \mu}}{\sum_{\nu}  p_{\nu} p_{m \vert \nu}},
\end{equation}
given by Bayes' rule. In general, the decoy bounds can be obtained by applying concentration inequalities for sums of independent Bernoulli random variables such as Chernoff bounds, and then solving a linear program. For some practical cases, there exist analytical bounds. Here we use analytical bounds from Ref.~\cite{mannalathSharpFinite2025} for the case of three intensities $s$ {(signal)}, $w$ {(weak)} and $v$ {(vacuum)}, {which lead to the following result.}

\begin{lemma}[Decoy-state bounds]
\label{lem:decoy_state_bounds}
For the three-intensity {decoy-state protocol} with intensities $s$, $w$, and $v$, with $s > w > v$, define
\begin{equation}
\label{eq:fL_fU}
\begin{gathered}   
    f_L(\{M_\mu\}_{\mu \in \{s,w,v\}}; \epsilon_D) = \frac{p_1 s}{s(w-v)-w^2+v^2}\left(\frac{e^{w}}{p_{w}}M_{w}^{-}-\frac{e^{v}}{p_{v}}M_{v}^{+}-\frac{w^2-v^2}{s^2} \frac{e^{s}}{p_{s}}M_{s}^{+}\right), \\
    f_U(\{M_\mu\}_{\mu \in \{s,w,v\}}; \epsilon_D) = \frac{p_1}{w-v} \left(\frac{e^{w}}{p_{w}} M_{w}^{+}-\frac{e^{v}}{p_{v}} M_{v}^{-}\right)
\end{gathered}
\end{equation}
where $p_1 = \sum_\mu p_{\mu} p_{1\vert \mu} = \sum_{\mu} e^{-\mu} \mu \,p_{\mu}$, {$M = \sum_\mu M_\mu$, $M_\mu^{\pm} = M\,\Gamma^{\pm}_{M,\epsilon_D}\!\left(M_\mu/M\right)$, and $\Gamma^{\pm}$ is given in \cref{lem:relaxed_chernoff} of \cref{app:concentration_ineqs} (taken from Ref.~\cite[Prop.\,1]{mannalathSharpFinite2025}).} Note that we have not explicitly stated the dependence of the functions $f_L$ and $f_U$ on $(s,w,v,p_s,p_w,p_v)$ for simplicity of notation. 

Then, for each $(\gamma, \zeta) \in \{(Z, \mathrm{det}), (X, \mathrm{det}),(X,\mathrm{err})\}$, the inequality
\begin{equation}
    \label{eq:n_L}
    \bm{n}_{\mathrm{keep},\mathrm{sifted},\gamma,(1)}^{\zeta} \geq f_L(\{\bm{n}_{\mathrm{keep},\mathrm{sifted},\gamma,\mu}^{\zeta}\}_{\mu \in \{s,w,v\}}; \epsilon_D) \eqqcolon \bm{n}_{\mathrm{keep},\mathrm{sifted},\gamma,(1)}^{\zeta,\mathrm{L}}
\end{equation}
holds with probability at least $1-3\epsilon_D$, and the inequality
\begin{equation}
    \label{eq:n_U}
    \bm{n}_{\mathrm{keep},\mathrm{sifted},\gamma,(1)}^{\zeta} \leq f_U(\{\bm{n}_{\mathrm{keep},\mathrm{sifted},\gamma,\mu}^{\zeta}\}_{\mu \in \{s,w,v\}}; \epsilon_D) \eqqcolon \bm{n}_{\mathrm{keep},\mathrm{sifted},\gamma,(1)}^{\zeta,\mathrm{U}}
\end{equation}
holds with probability at least $1-2\epsilon_D$. Thus, with these definitions, \cref{eq:Omega_decfail,eq:pr_decfail} are satisfied with $\epsilon_\mathrm{decoy} = 10 \epsilon_D$.
\end{lemma}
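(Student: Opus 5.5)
The proof goes through the counterfactual decoy picture described just before the lemma. Since the emitted states are Fock-diagonal and, conditional on $m_k$, independent of $\mu_k$, we replace the protocol by an equivalent one. First the photon-number-resolved counts $\bm{n}_{\mathrm{keep},\mathrm{sifted},\gamma,(m)}^{\zeta}$ are fixed, for all $m$ and each $(\gamma,\zeta)$. Then each of the $M=\sum_\mu \bm{n}^{\zeta}_{\mathrm{keep},\mathrm{sifted},\gamma,\mu}$ events is independently assigned an intensity label $\mu$ with probability $p_{\mu\vert m}$. This holds because Alice's photon-number register can be measured first, and the intensity register is then conditionally independent of everything else, including Eve's attack, which acts only on $T_1^N$.

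Conditional on the counts, each observed $\bm{n}^{\zeta}_{\mathrm{keep},\mathrm{sifted},\gamma,\mu}$ is a sum of independent Bernoulli variables with expectation
\begin{equation}
\mathbb{E}_\mu=\sum_m p_{\mu\vert m}\,\bm{n}^{\zeta}_{\mathrm{keep},\mathrm{sifted},\gamma,(m)}=\frac{p_\mu}{\cdots}\sum_m e^{-\mu}\frac{\mu^m}{m!}\,\frac{\bm{n}^{\zeta}_{\gamma,(m)}}{\tau_m},
\end{equation}
with $\tau_m=\sum_\nu p_\nu p_{m\vert\nu}$.

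We then invoke \cref{lem:relaxed_chernoff}, i.e., Prop.~1 of Ref.~\cite{mannalathSharpFinite2025}. It is applied once per intensity and direction, and gives $\mathbb{E}_\mu\le M_\mu^+$ or $\mathbb{E}_\mu\ge M_\mu^-$, each failing with probability at most $\epsilon_D$. The key point is that $\Gamma^\pm$ depends only on the observed $M_\mu/M$ and on $M$, which is a fixed function of the photon-number counts, so the bound holds conditionally and hence unconditionally.

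Next we carry out the standard analytical decoy algebra on the expectations. Write $y_m:=\bm{n}_{(m)}/\tau_m$, so that $\frac{e^\mu}{p_\mu}\mathbb{E}_\mu=\sum_m \frac{\mu^m}{m!}y_m$.

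For the upper bound we form
\begin{equation}
\frac{e^w}{p_w}\mathbb{E}_w-\frac{e^v}{p_v}\mathbb{E}_v=(w-v)y_1+\sum_{m\ge2}\frac{w^m-v^m}{m!}y_m\ge(w-v)y_1.
\end{equation}
Then we multiply by $p_1=\tau_1$ and use $\mathbb{E}_w\le M_w^+$ and $\mathbb{E}_v\ge M_v^-$. This gives $f_U$ from two concentration events, i.e.\ failure probability $2\epsilon_D$.

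For the lower bound we follow the usual three-intensity elimination. We combine $\frac{e^w}{p_w}\mathbb{E}_w-\frac{e^v}{p_v}\mathbb{E}_v$ with $\frac{w^2-v^2}{s^2}\frac{e^s}{p_s}\mathbb{E}_s$. We use $\frac{w^m-v^m}{w^2-v^2}\le\frac{s^m}{s^2}$ for $m\ge2$ (valid since $s>w>v\ge0$) to dominate the multiphoton tail, and drop the vacuum term using $y_0\ge0$ after it is cancelled or bounded. This yields
\begin{equation}
y_1\,\frac{s(w-v)-w^2+v^2}{s}\ \ge\ \frac{e^w}{p_w}\mathbb{E}_w-\frac{e^v}{p_v}\mathbb{E}_v-\frac{w^2-v^2}{s^2}\frac{e^s}{p_s}\mathbb{E}_s.
\end{equation}
Substituting $M_w^-$, $M_v^+$ and $M_s^+$ gives $f_L$ with failure probability $3\epsilon_D$. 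I expect this step to need the most care: the vacuum contribution has to cancel exactly, with the coefficients arranged so that $y_0$ appears with a nonpositive net coefficient, and the denominator $s(w-v)-w^2+v^2$ must be positive.

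Finally we take a union bound over the four events in $\Omega_{\mathrm{dec,fail}}$. Two of them are lower bounds (the $Z$ and $X$ detection counts, at $3\epsilon_D$ each) and two are upper bounds ($Z$ detections and $X$ errors, at $2\epsilon_D$ each). This gives $\epsilon_{\mathrm{decoy}}=10\epsilon_D$.
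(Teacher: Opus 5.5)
Your route is the one the paper relies on: its proof just defers to Ref.~\cite{mannalathSharpFinite2025}. The ingredients match: counterfactual assignment of intensity labels given the photon numbers, one application of \cref{lem:relaxed_chernoff} per intensity and direction, the three-intensity elimination, and the $3+2+3+2=10$ union bound. Your counterfactual step, your concentration step and your upper bound $f_U$ (which only needs $w>v\geq 0$) are correct.

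The gap is in the lower bound. You claim that $(w^m-v^m)/(w^2-v^2)\leq s^{m-2}$ for all $m\geq 2$ holds ``since $s>w>v\geq 0$''. This is false when $v>0$. For $m=3$ the left-hand side is $(w^2+wv+v^2)/(w+v)$, which exceeds $s$ for, e.g., $(s,w,v)=(0.35,0.3,0.25)$. Likewise, the denominator factorizes as $s(w-v)-w^2+v^2=(w-v)(s-w-v)$. So the positivity you flag as delicate holds if and only if $s>w+v$.

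The missing ingredient is the hypothesis $s>w+v$, the standard condition for analytical three-intensity decoy bounds. Under it both steps go through, since $(w+v)s^{m-2}\geq (w+v)^{m-1}\geq \sum_{j=0}^{m-1} w^{m-1-j}v^{j}$, i.e., $w^m-v^m\leq (w^2-v^2)s^{m-2}$.

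This condition cannot be derived from $s>w>v$, and without it the conclusion itself fails. Suppose $s<w+v$ and Eve arranges that only vacuum pulses lead to detections. Then the single-photon count is zero. Yet $f_L$ evaluated at the expected counts equals $\frac{p_1(w+v)}{s(w+v-s)}\frac{M}{\tau_0}>0$. The finite-size corrections in $M_\mu^{\pm}$ are only $O(\sqrt{M})$, so the claimed lower bound is violated with overwhelming probability for large $M$.

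So $s>w+v$ must be added as a hypothesis; the lemma as stated has the same omission. It holds automatically for $v=0$, which is the setting of the paper's numerics, and there your argument is complete.
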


\begin{proof}
    Follows from the decoy-state analysis in Ref.~\cite{mannalathSharpFinite2025}. The total failure probability $\epsilon_\mathrm{decoy} = 10 \epsilon_D$ follows from a union bound: $\Omega_\mathrm{dec,fail}$ in \cref{eq:Omega_decfail} contains two lower-bound failure events (each with probability $\leq 3\epsilon_D$) and two upper-bound failure events (each with probability $\leq 2\epsilon_D$).
\end{proof}

Finally, combining \cref{thm:phase_error_bound_quantum_coin_decoy} with the decoy-state bounds in \cref{lem:decoy_state_bounds}, we obtain the following result.

\begin{corollary}[Single-photon phase-error rate bound for decoy-state BB84 with encoding correlations]
\label{cor:phase_error_bound_decoy}
Consider the \emph{Phase-error estimation protocol} with encoding correlations up to maximum length $l_c$ and three-intensity decoy states with intensities $s$, $w$, and $v$ such that $s > w > v$. Define $\bm{n}_{\mathrm{keep},\mathrm{sifted},\gamma,(1)}^{\zeta,\mathrm{L}}$ and $\bm{n}_{\mathrm{keep},\mathrm{sifted},\gamma,(1)}^{\zeta,\mathrm{U}}$ for $(\gamma, \zeta) \in \{(Z, \mathrm{det}), (X, \mathrm{det}),(X,\mathrm{err})\}$ as in \cref{eq:n_L,eq:n_U}. Then,
\begin{equation}
        \Pr[\bm{n}_{\mathrm{keep},\mathrm{sifted},Z,(1)}^{\mathrm{det}} < \bm{n}_{\mathrm{keep},\mathrm{sifted},Z,(1)}^{\mathrm{det,L}}\quad \cup \quad \bm\ephone > \mathcal{E}_{\mathrm{ph},(1)} (\bm{\vec{n}};\epsilon_A,\epsilon_C,\epsilon_D) ] \leq 5\epsilon_A + (l_c+1)\epsilon_C + 10 \epsilon_D,
    \end{equation}
where $\mathcal{E}_{\mathrm{ph},(1)} (\bm{\vec{n}};\epsilon_A,\epsilon_C,\epsilon_D)$ {denotes the function defined in \cref{eq:E_ph_1_def}, evaluated with the decoy-state bounds in \cref{eq:n_L,eq:n_U}.} {Consequently, by \cref{thm:variable_length_security_decoy}, if Alice and Bob set the final key length as in \cref{eq:final_key_length}, the protocol produces an $\varepsilon_{\mathrm{sec}}$-secure key with
    \begin{equation}
        \varepsilon_{\mathrm{sec}}
        = 2\sqrt{5\epsilon_A + (l_c+1)\epsilon_C + 10\,\epsilon_D}
        \;+\; \varepsilon_\mathrm{PA} + \varepsilon_\mathrm{EV}.
    \end{equation}}
\end{corollary}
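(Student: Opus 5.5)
This corollary is essentially a composition of the two preceding results, followed by an application of \cref{thm:variable_length_security_decoy}. The plan is to instantiate \cref{thm:phase_error_bound_quantum_coin_decoy} with the particular decoy bounds supplied by \cref{lem:decoy_state_bounds}, and then feed the resulting statistical guarantee into the variable-length security theorem.

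\textbf{Step 1: set up the decoy bounds.} For each $(\gamma,\zeta)\in\{(Z,\mathrm{det}),(X,\mathrm{det}),(X,\mathrm{err})\}$, define $\bm{n}^{\zeta,\mathrm{L}}_{\mathrm{keep},\mathrm{sifted},\gamma,(1)}$ and $\bm{n}^{\zeta,\mathrm{U}}_{\mathrm{keep},\mathrm{sifted},\gamma,(1)}$ through $f_L$ and $f_U$ as in \cref{eq:n_L,eq:n_U}.

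\textbf{Step 2: check these are admissible inputs.} These quantities are functions of the announced vector $\bm{\vec n}$ alone. They depend only on the intensity-resolved counts $\{\bm{n}^{\zeta}_{\mathrm{keep},\mathrm{sifted},\gamma,\mu}\}_\mu$, which Alice and Bob announce in steps 5--7, and on the fixed parameters $(s,w,v,p_\mu,\epsilon_D)$. They therefore qualify as the random variables required in the hypothesis of \cref{thm:phase_error_bound_quantum_coin_decoy}.

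\textbf{Step 3: bound the decoy failure event.} \cref{lem:decoy_state_bounds} gives that the event $\Omega_{\mathrm{dec,fail}}$ of \cref{eq:Omega_decfail} has probability at most $\epsilon_{\mathrm{decoy}}=10\epsilon_D$. This follows from a union bound over its four constituent events:
\begin{itemize}
\item two lower-bound failures (for $(Z,\mathrm{det})$ and $(X,\mathrm{det})$), each with probability at most $3\epsilon_D$;
\item two upper-bound failures (for $(Z,\mathrm{det})$ and $(X,\mathrm{err})$), each with probability at most $2\epsilon_D$.
\end{itemize}

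\textbf{Step 4: obtain the phase-error guarantee.} Invoking \cref{thm:phase_error_bound_quantum_coin_decoy} with $\epsilon_{\mathrm{decoy}}=10\epsilon_D$ yields the displayed probability bound $5\epsilon_A+(l_c+1)\epsilon_C+10\epsilon_D$. The function $\mathcal{E}_{\mathrm{ph},(1)}$ is exactly \cref{eq:E_ph_1_def} evaluated at these decoy bounds. Internally, that theorem relies on \cref{lem:quantum_coin_inequality_decoy} (contributing $5\epsilon_A$) and \cref{lem:bernstein_by_groups_decoy} (contributing $(l_c+1)\epsilon_C$), together with monotonicity of the right-hand side in each argument. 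That monotonicity is what allows the true single-photon counts to be replaced by their bounds on the complement of the failure events. Since this is already established, nothing further is needed here.

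\textbf{Step 5: apply the variable-length security theorem.} Identify $\mathcal{N}_{K,(1)}(\vec n)\coloneqq n^{\mathrm{det,L}}_{\mathrm{keep},\mathrm{sifted},Z,(1)}$. Note that $\bm{n_{K,(1)}}$ in the phase-error estimation protocol coincides with $\bm{n}^{\mathrm{det}}_{\mathrm{keep},\mathrm{sifted},Z,(1)}$, since both count the detected keep-sifted $Z$ rounds with $m_k=1$. The guarantee \cref{eq:statistical_guarantee_decoy} then holds with $\varepsilon_{\mathrm{PE}}=5\epsilon_A+(l_c+1)\epsilon_C+10\epsilon_D$.

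This guarantee is valid for every source-replacement state consistent with the correlation length $l_c$ and with $\Delta^{\mathrm{U}}_{\mathrm{coin}}$. That condition is the "for any initial state" hypothesis of \cref{thm:variable_length_security_decoy}, restricted to the class of sources considered, and it is satisfied because the preceding lemmas make no assumption on Eve's attack.

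Applying \cref{thm:variable_length_security_decoy} with the key length of \cref{eq:final_key_length} therefore gives security parameter $2\sqrt{\varepsilon_{\mathrm{PE}}}+\varepsilon_{\mathrm{PA}}+\varepsilon_{\mathrm{EV}}$, which is the stated $\varepsilon_{\mathrm{sec}}$.

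\textbf{Main obstacle.} I expect the only delicate point to be bookkeeping: checking that the identification $\bm{n_{K,(1)}}=\bm{n}^{\mathrm{det}}_{\mathrm{keep},\mathrm{sifted},Z,(1)}$ is exact, and that the single union bound over all failure events is not double-counted. Beyond that, the argument is a direct chaining of already-proven results.
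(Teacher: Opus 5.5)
Your proposal is correct and follows the paper's argument. The paper proves this corollary by directly combining \cref{thm:phase_error_bound_quantum_coin_decoy} with \cref{lem:decoy_state_bounds}, taking $\epsilon_{\mathrm{decoy}}=10\epsilon_D$ from the same $3+3+2+2$ union bound, and then obtains the security parameter from \cref{thm:variable_length_security_decoy} via the identifications $\bm{n_{K,(1)}}\equiv\bm{n}_{\mathrm{keep},\mathrm{sifted},Z,(1)}^{\mathrm{det}}$ and $\mathcal{N}_{K,(1)}(\bm{\vec n})\equiv\bm{n}_{\mathrm{keep},\mathrm{sifted},Z,(1)}^{\mathrm{det,L}}$, exactly as you do.
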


\begin{proof}
Follows directly by combining \cref{thm:phase_error_bound_quantum_coin_decoy,lem:decoy_state_bounds}.
\end{proof}

Note that the statistical bound provided by \cref{cor:phase_error_bound_decoy} is of the form in \cref{eq:statistical_guarantee_decoy}, since $\bm{n}_{\mathrm{keep},\mathrm{sifted},Z,(1)}^{\mathrm{det}} \equiv \bm{n_{K,(1)}}$ and  $\bm{n}_{\mathrm{keep},\mathrm{sifted},Z,(1)}^{\mathrm{det,L}} \equiv \mathcal{N}_{K,(1)}(\bm{\vec n})$. Thus, one can directly prove security by combining \cref{thm:variable_length_security_decoy,cor:phase_error_bound_decoy}.

\subsection{{Application to a partially characterized LTI correlations model}}
\label{subsec:particular_correlations_model}

{We now illustrate how to apply the general framework developed above to a concrete, experimentally motivated correlations model, which we also employ in the numerical evaluations of \cref{sec:numerics}. We emphasize that none of the results derived so far rely on this model: as anticipated in \cref{sec:assumptions}, the specific form of the correlated states enters the analysis only through the coin parameter $\Delta_{\rm coin}^{\rm U}$ defined in \cref{eq:Delta_coin_U_decoy}, and the sole purpose of the model below is to enable an explicit bound on this parameter.}

\subsubsection*{{The LTI correlations model}}

{Concretely, we consider that the eigenstates in \cref{eq:rho_muk_a1k_alpha1k} have the form}
    \begin{equation}
    \label{eq:correlations_model_state}
    \ket*{\psi_{a_1^k,\alpha_1^k}^{(m)}}_{T_k} = \frac{1}{\sqrt{m!}}\left(\cos\big(\theta_{a_1^k,\alpha_1^k}\big) \, a_0^\dagger + \sin\big(\theta_{a_1^k,\alpha_1^k}\big) \, a_1^\dagger\right)^m \ket{\rm vac}_{T_k},
\end{equation}
    where $a_0^\dagger$ and $a_1^\dagger$ are the creation operators of two orthogonal modes (e.g., horizontal and vertical polarizations). This state represents $m$ identical photons encoded in the $XZ$ plane of the Bloch sphere, with an encoding phase $\theta_{a_1^k,\alpha_1^k}$ that depends on the history of bit-and-basis choices, which is a reasonable model for a BB84 encoder suffering from correlations.

    {Next, we specify what we assume to be known about the phases $\theta_{a_1^k,\alpha_1^k}$. We consider} a linear time-invariant (LTI) correlations model, such that the encoding phase can be expressed as
    \begin{equation}
\label{eq:correlations_model_phase}
\theta_{a_1^k,\alpha_1^k} = \hat\theta_{a_k,\alpha_k} + \sum_{l=1}^{k-1} \delta_{a_{k-l},\alpha_{k-l}}^{(l)},
    \end{equation}
    where $\hat\theta_{a_k,\alpha_k}$ is the ideal BB84 encoding phase for bit $a_k$ and basis $\alpha_k$ (i.e., $\hat\theta_{0,Z} = 0$, $\hat\theta_{1,Z} = \pi/2$, $\hat\theta_{0,X} = \pi/4$, $\hat\theta_{1,X} = -\pi/4$), and $\delta_{a_{k-l},\alpha_{k-l}}^{(l)}$ quantifies the residual contribution from the bit-and-basis choices made $l$ rounds earlier. We assume that the exact value of the coefficients $\delta_{a_{k-l},\alpha_{k-l}}^{(l)}$ is unknown, but that their pairwise differences are bounded, i.e.,
    \begin{equation}
        \label{eq:correlations_model_delta}
        \abs{\delta_{a_{k-l},\alpha_{k-l}}^{(l)}-\delta_{a'_{k-l},\alpha'_{k-l}}^{(l)}} \leq \Delta_l,
    \end{equation}
    with $\Delta_l$ decaying exponentially in $l$ as
\begin{equation}
    \label{eq:correlations_model_Delta_exp}
    \Delta_l = \Delta_1 e^{-\lambda (l-1)},
\end{equation}
with $0\leq \Delta_l \leq \pi$ and $\lambda > 0$. We note that experimental studies \cite{agulleiroModelingCharacterization2025} have shown that this model is a good approximation for the correlations introduced by realistic encoders\footnote{{Note that our $\Delta_l$ is precisely the angular deviation that the LTI model of Ref.~\cite{agulleiroModelingCharacterization2025} bounds, and $\lambda$ is its decay rate per round. That work then converts this into a fidelity deficit $\epsilon_l \leq \Delta_l^2$, obtaining an exponential bound with decay constant $C = 2\lambda$; the same fidelity-based convention is used in Ref.~\cite{curras-lorenzoRigorousPhaseerrorestimation2026}. Here we prefer to work with $\Delta_l$ directly, since it is the quantity that enters our state overlaps. In particular, the value $\lambda = 6.35$ used in our simulations {in \cref{sec:numerics}} corresponds to the decay constant $C \approx 12.7$ characterized in Ref.~\cite {agulleiroModelingCharacterization2025}.}}. {Note that \cref{eq:correlations_model_delta} bounds only the differences between residual contributions, so the model does not require knowing the absolute value of any $\delta^{(l)}$, nor the encoder's transfer function.} 

{Since $\Delta_l$ never strictly vanishes, this model contains correlations of unbounded length.} Our security analysis, however, requires a source with a strictly finite maximum correlation length~$l_c$ in order to apply \cref{cor:phase_error_bound_decoy}. We bridge this gap through the following strategy introduced {in Ref.~\cite{curras-lorenzoRigorousPhaseerrorestimation2026} (see also Ref.~\cite{pereiraQuantumKey2024a})}:
\begin{enumerate}
    \item {we} define a fictitious source with correlations truncated at a finite length~$l_c^\mathrm{eff}${;}
    \item {we} prove security for the fictitious source using the tools established {in the previous section;}
    \item {we} bound the trace distance between the actual and fictitious source-replacement states, and lift the security guarantee to the actual source via \cref{lem:unbounded_correlations} below.
\end{enumerate}

\subsubsection*{Lifting security from bounded to unbounded correlations}

The following lemma, which is model-independent, provides the mechanism for step~3.

\begin{lemma}[Unbounded correlations~\cite{curras-lorenzoRigorousPhaseerrorestimation2026}]
    \label{lem:unbounded_correlations}
    Let $\ket*{\Psi_N^{(\infty)}}_{C_1^N A_1^N I_1^N M_1^N T_1^N}$ be the source-replacement state for a source with correlations of unbounded length, and let $\ket*{\Psi_N^{(l_c^\mathrm{eff})}}_{C_1^N A_1^N I_1^N M_1^N T_1^N}$ be the source-replacement state for a fictitious source with correlations truncated at length~$l_c^\mathrm{eff}$. Suppose that the trace distance between these two states satisfies
    \begin{equation}
        T\Big(\ketbra*{\Psi_N^{(\infty)}},\ketbra*{\Psi_N^{(l_c^\mathrm{eff})}}\Big) \leq d,
    \end{equation}
    {for some $d \geq 0$,} and that for the fictitious source $\ket*{\Psi_N^{(l_c^\mathrm{eff})}}$, the following holds for any eavesdropping attack:
    \begin{equation}
        \label{eqapp:eph_bound_fic}
        \Pr_{(l_c^\mathrm{eff})}\!\Big[\bm{n_{K,(1)}} < \mathcal{N}_{K,(1)}(\bm{\vec{n}}) \;\cup\; \bm\ephone > \mathcal{E}_{\mathrm{ph},(1)}(\bm{\vec{n}})\Big] \leq \varepsilon_\mathrm{PE},
    \end{equation}
    {where the subscript in $\Pr_{(l_c^\mathrm{eff})}$ (respectively, $\Pr_{(\infty)}$ below) indicates that the probability is evaluated for the \emph{Phase-error estimation protocol} run with initial state $\ket*{\Psi_N^{(l_c^\mathrm{eff})}}$ (respectively, $\ket*{\Psi_N^{(\infty)}}$).} Then, for the actual source $\ket*{\Psi_N^{(\infty)}}$, the following holds for any eavesdropping attack:
    \begin{equation}
        \label{eqapp:eph_bound_act}
        \Pr_{(\infty)}\!\Big[\bm{n_{K,(1)}} < \mathcal{N}_{K,(1)}(\bm{\vec{n}}) \;\cup\; \bm\ephone > \mathcal{E}_{\mathrm{ph},(1)}(\bm{\vec{n}})\Big] \leq \varepsilon_\mathrm{PE} + d.
    \end{equation}
\end{lemma}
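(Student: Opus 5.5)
The plan is to treat the whole \emph{Phase-error estimation protocol}, for a fixed eavesdropping attack, as a single quantum-to-classical process applied to the initial source-replacement state, and then use the operational meaning of the trace distance. Concretely, fix an arbitrary attack by Eve (an arbitrary CPTP map on $T_1^N$ producing $B_1^N E$). All later steps are also CPTP maps on the joint system: Bob's filter and two-outcome measurements, Alice's keep/trash and sifted decisions, her measurements of $C_k$, $I_k$, $M_k$, $A_k$, and all public announcements. Composing them gives one CPTP map $\mathcal{M}$. Its input is a state on $C_1^N A_1^N I_1^N M_1^N T_1^N$. Its output is a classical register containing the announced vector $\bm{\vec n}$, the count $\bm{n_{K,(1)}}$ and the phase-error rate $\bm\ephone$. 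The map $\mathcal{M}$ is the same for both sources, because neither the protocol nor the attack depends on which source-replacement state is fed in. This matters: Eve's attack is a fixed map, and only the input state changes.

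Next, define the failure event $\Omega \coloneqq \{\bm{n_{K,(1)}} < \mathcal{N}_{K,(1)}(\bm{\vec n}) \cup \bm\ephone > \mathcal{E}_{\mathrm{ph},(1)}(\bm{\vec n})\}$. It is a subset of the classical outcome space, determined by the fixed functions $\mathcal{N}_{K,(1)}$ and $\mathcal{E}_{\mathrm{ph},(1)}$. So there is a POVM element $\Pi_\Omega$ on the input space, namely $\Pi_\Omega=\mathcal{M}^\dagger(\mathbf{1}_\Omega)$ in the Heisenberg picture, with $0\le \Pi_\Omega\le \mathbb{I}$. It satisfies
\begin{equation*}
\Pr_{(\infty)}[\Omega]=\Tr\big[\Pi_\Omega \ketbra*{\Psi_N^{(\infty)}}\big]
\end{equation*}
and the same identity with $l_c^\mathrm{eff}$ in place of $\infty$.

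The standard bound $\abs{\Tr[\Pi(\rho-\sigma)]}\le T(\rho,\sigma)$ for $0\le\Pi\le\mathbb{I}$ then gives $\Pr_{(\infty)}[\Omega]\le \Pr_{(l_c^\mathrm{eff})}[\Omega]+d\le \varepsilon_\mathrm{PE}+d$. Equivalently, one can invoke the monotonicity of trace distance under the CPTP map $\mathcal{M}$, and note that the total variation distance of the output distributions bounds the probability difference of any event. Since the attack was arbitrary, the claim holds for all attacks.

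The only point needing care, and the main obstacle, is checking that the hypothesis in \cref{eqapp:eph_bound_fic} really applies to the \emph{same} attack map. Eve's attack acts on $T_1^N$ only and is a fixed CPTP map regardless of the input state, so the guarantee "for any eavesdropping attack" on the fictitious source covers it. A second check is that the two source-replacement states live on identical register spaces, so the same measurements are well defined for both. This holds because the truncation changes only the eigenstates $\ket*{\psi^{(m)}}$ and not the register structure. Infinite-dimensional $M_k$ and $T_k$ cause no difficulty, since the trace-distance inequality holds for arbitrary trace-class operators.
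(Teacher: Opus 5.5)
Your proposal is correct and takes essentially the paper's route. The paper also fixes the attack, treats the \emph{Phase-error estimation protocol} as the same state-independent process applied to either source-replacement state, and uses the operational meaning of the trace distance as a maximum distinguishing probability to get $\Pr_{(\infty)}[\cdot]\le\Pr_{(l_c^\mathrm{eff})}[\cdot]+d$ (deferring details to Lemma~1 of the cited prior work); your Heisenberg-picture element $\Pi_\Omega=\mathcal{M}^\dagger(\mathbf{1}_\Omega)$ is just an explicit form of that step.
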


\begin{proof}
    Follows almost directly from the operational interpretation of the trace distance as the maximum distinguishing probability. See full proof in Ref.~\cite[Lemma~1]{curras-lorenzoRigorousPhaseerrorestimation2026}.
\end{proof}

\subsubsection*{The truncated correlations model}

Following the strategy above, we define a fictitious source identical to that in \cref{eq:correlations_model_state,eq:correlations_model_phase,eq:correlations_model_delta,eq:correlations_model_Delta_exp}, but with the actual bit and basis choices beyond length $l_c^\mathrm{eff}$ replaced by fixed reference choices $a^\ast,\alpha^\ast$. Concretely, the $m$-photon state for round~$k$ in the truncated model is
\begin{equation}
    \label{eq:correlations_model_state_truncated}
    \ket*{\psi_{a_{k-l_c^\mathrm{eff}}^k,\alpha_{k-l_c^\mathrm{eff}}^k}^{(m)}}_{T_k}
    = \frac{1}{\sqrt{m!}}
    \left(\cos\!\left(\theta_{a_{k-l_c^\mathrm{eff}}^k,\alpha_{k-l_c^\mathrm{eff}}^k}\right) a_0^\dagger
    + \sin\!\left(\theta_{a_{k-l_c^\mathrm{eff}}^k,\alpha_{k-l_c^\mathrm{eff}}^k}\right) a_1^\dagger\right)^m
    \ket{\mathrm{vac}}_{T_k},
\end{equation}
with the encoding phase now given by
\begin{equation}
    \label{eq:correlations_model_phase_truncated}
    \theta_{a_{k-l_c^\mathrm{eff}}^k,\alpha_{k-l_c^\mathrm{eff}}^k}
    = \hat\theta_{a_k,\alpha_k}
    + \sum_{l=1}^{l_c^\mathrm{eff}} \delta_{a_{k-l},\alpha_{k-l}}^{(l)} + \sum_{l=l_c^\mathrm{eff}+1}^{k-1} \delta_{a^\ast,\alpha^\ast}^{(l)},
\end{equation}
where the residual contributions satisfy the same bounds as before,
\begin{equation}
    \label{eq:correlations_model_delta_Delta_exp_truncated}
    \abs{\delta_{a_{k-l},\alpha_{k-l}}^{(l)}-\delta_{a'_{k-l},\alpha'_{k-l}}^{(l)}} \leq \Delta_l,
    \qquad
    \Delta_l = \Delta_1 e^{-\lambda(l-1)}.
\end{equation}
The key difference from the original model in \cref{eq:correlations_model_phase} is that the actual bit and basis choices for $l > l_c^\mathrm{eff}$ have been substituted by some fixed choices $a^\ast,\alpha^\ast$. Now, by construction, the state of round~$k$ depends on at most the $l_c^\mathrm{eff}$ preceding rounds, so \cref{cor:phase_error_bound_decoy} is directly applicable to the truncated source with $l_c = l_c^\mathrm{eff}$. To apply it, we need a bound on the parameter $\Delta_{\rm coin}^{\rm U}$ defined in \cref{eq:Delta_coin_U_decoy}, which the following result provides.

\begin{lemma}[Coin parameter for the truncated LTI model]
    \label{lem:bound_Delta_coin_decoy}
    Consider the \emph{Phase-error estimation protocol} with the truncated correlations model of \cref{eq:correlations_model_state_truncated,eq:correlations_model_phase_truncated,eq:correlations_model_delta_Delta_exp_truncated}. Then the coin parameter $\Delta_{\rm coin}^{\rm U}$ in \cref{eq:Delta_coin_U_decoy} satisfies
    \begin{equation}
        \label{eq:Delta_coin_U_decoy_bound}
        {\Delta_{\rm coin}^{\rm U}
        \leq \frac{1}{2}\left[
            1 - \prod_{l=1}^{l_c^\mathrm{eff}}
            \sum_{\mu} p_{\mu}\,
            \exp\!\Big[-\mu\,\Big(1-\cos\big(\Delta_l\big)\Big)\Big]
        \right].}
    \end{equation}
\end{lemma}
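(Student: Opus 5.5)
The plan is to compute the inner product in \cref{eq:Delta_coin_U_decoy} explicitly for the truncated model, with $l_c = l_c^\mathrm{eff}$. Because every $\ket*{\Xi}$ in \cref{eq:Xi_decoy} is a tensor product, I will split its inner product into a round-$k$ single-photon factor times one factor for each future round $j = k+1,\ldots,k+l_c^\mathrm{eff}$. Throughout, the neighbourhood choices $a_{\bar k},\alpha_{\bar k}$ are fixed and shared by both kets; only $(a_k,\alpha_k)$ differs, being $(a,X)$ in the bra and $(a',Z)$ in the ket.

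\emph{Round-$k$ factor.} Both single-photon states in \cref{eq:correlations_model_state_truncated} have the form $\cos\theta\, a_0^\dagger\ket{\mathrm{vac}} + \sin\theta\, a_1^\dagger\ket{\mathrm{vac}}$. Their overlap is therefore $\cos(\theta-\theta')$. The past contributions in \cref{eq:correlations_model_phase_truncated} come from identical histories $a_{k-l_c}^{k-1},\alpha_{k-l_c}^{k-1}$ and from the fixed $a^\ast,\alpha^\ast$, so they cancel. The phase difference thus reduces to $\hat\theta_{a,X}-\hat\theta_{a',Z}$. Checking the four cases with the ideal BB84 angles gives
\begin{equation*}
(-1)^{aa'}\cos\big(\hat\theta_{a,X}-\hat\theta_{a',Z}\big)=\tfrac{1}{\sqrt2}
\end{equation*}
for every $a,a'$. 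In particular, the case $a=a'=1$ gives $-\cos(3\pi/4)=1/\sqrt2$. Consequently, every term of the sum in \cref{eq:Delta_coin_U_decoy} equals $\frac{1}{2\sqrt2}\cdot\frac{1}{\sqrt2}\,F_{a,a'} = \frac14 F_{a,a'}$, where $F_{a,a'}$ is the product of the future-round overlaps. The quantity inside $\Re(\cdot)$ is then $1-\frac14\sum_{a,a'}F_{a,a'}$, and it suffices to lower bound each $F_{a,a'}$ uniformly.

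\emph{Future-round factors.} Fix a future round $j=k+l$ with $1\le l\le l_c^\mathrm{eff}$. The two states $\ket*{\Psi_{\ldots}}_{I_jM_jT_j}$ in \cref{eq:round_state_decoy} share the same intensity and photon-number amplitudes. Their phases differ only through the round-$k$ term, $d_l \coloneqq \delta^{(l)}_{a,X}-\delta^{(l)}_{a',Z}$, and $|d_l|\le\Delta_l$ by \cref{eq:correlations_model_delta_Delta_exp_truncated}. The $m$-photon states $\frac{1}{\sqrt{m!}}(\cos\theta\, a_0^\dagger+\sin\theta\, a_1^\dagger)^m\ket{\mathrm{vac}}$ have overlap $\cos^m(\theta-\theta')$, since both are single-mode $m$-photon Fock states and the mode overlap is $\cos(\theta-\theta')$. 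Using the Poissonian $p_{m|\mu}=e^{-\mu}\mu^m/m!$, the overlap for round $j$ is
\begin{equation*}
\sum_\mu p_\mu\sum_m e^{-\mu}\frac{\mu^m\cos^m d_l}{m!}=\sum_\mu p_\mu \exp\!\big[-\mu(1-\cos d_l)\big].
\end{equation*}
This is real and positive. Each summand is increasing in $\cos d_l$, and $\cos d_l\ge\cos\Delta_l$ because $|d_l|\le\Delta_l\le\pi$. Hence each factor is at least $\sum_\mu p_\mu \exp[-\mu(1-\cos\Delta_l)]$.

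Multiplying over $l=1,\ldots,l_c^\mathrm{eff}$ gives $F_{a,a'}\ge\prod_l\sum_\mu p_\mu e^{-\mu(1-\cos\Delta_l)}$ for all $a,a'$ and all neighbourhood choices. Substituting this into $1-\frac14\sum_{a,a'}F_{a,a'}$ and maximizing over $k$, $a_{\bar k}$ and $\alpha_{\bar k}$ yields \cref{eq:Delta_coin_U_decoy_bound}.

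The main point requiring care is the bookkeeping of which phases cancel. The round-$k$ state must see identical histories in bra and ket, which is why its overlap is exactly ideal. Each future round $j$ must differ only in the single lag-$l$ term coming from round $k$; the other lags, including the fixed $a^\ast,\alpha^\ast$ substitutes, coincide. Edge cases where $k+l>N$ or $k-l_c<1$ only drop factors that are at most $1$, so the bound still holds.
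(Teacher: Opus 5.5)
Your proposal is correct and follows essentially the same route as the paper. Both factorize each $\Xi$ overlap into two parts. The first is the round-$k$ single-photon overlap, which equals the ideal BB84 overlap, so $(-1)^{aa'}$ cancels its sign and the sum reduces to $\frac14\sum_{a,a'}\prod_j O_j$. The second is the future-round Poisson-averaged overlaps $\sum_\mu p_\mu e^{-\mu(1-\cos d_l)}$, each bounded below via $|d_l|\le\Delta_l\le\pi$. Your explicit four-case sign check and your remark on boundary rounds (dropped factors are at most $1$) spell out details the paper leaves implicit.
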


\begin{proof}
See Appendix~\ref{appsub:proof:lem:bound_Delta_coin_decoy}.

\end{proof}

\subsubsection*{Bounding the truncation error}

The following lemma quantifies the cost of truncating the correlations at length~$l_c^\mathrm{eff}$.

\begin{lemma}[Trace distance for exponentially decaying correlations]
    \label{lem:trace_distance_exponential_decay_decoy}
    Let $\ket*{\Psi_N^{(\infty)}}$ denote the source-replacement state in \cref{eq:global_state_decoy_correlations} for the unbounded model of \cref{eq:correlations_model_state,eq:correlations_model_phase,eq:correlations_model_delta,eq:correlations_model_Delta_exp}, and let $\ket*{\Psi_N^{(l_c^\mathrm{eff})}}$ denote the corresponding state for the truncated model of \cref{eq:correlations_model_state_truncated,eq:correlations_model_phase_truncated,eq:correlations_model_delta_Delta_exp_truncated}. Then
\begin{equation}
        \label{eq:trace_distance_asymp_decoy}
        T\Big(
            \ketbra*{\Psi_N^{(\infty)}},
            \ketbra*{\Psi_N^{(l_c^\mathrm{eff})}}
        \Big)
        \leq
\sqrt{N\bar{\mu}}
\sum_{l=l_c^\mathrm{eff}+1}^{\infty}\Delta_l
=
\frac{
    \sqrt{N\bar{\mu}}\,\Delta_1\,
    e^{-\lambda l_c^\mathrm{eff}}
}{
    1-e^{-\lambda}
},
    \end{equation}
    where $\bar{\mu} = \sum_{\mu} p_{\mu}\, \mu$ is the average intensity. To ensure {that the bound holds} 
    for a given tolerance~$d > 0$, it suffices to choose
\begin{equation}
\label{eq:l_c_formula_stable_decoy}
l_c^\mathrm{eff}
= \Bigg\lceil
\frac{1}{\lambda}
\ln\!\left(
    \frac{
        \sqrt{N\bar{\mu}}\,\Delta_1
    }{
        d(1-e^{-\lambda})
    }
\right)\Bigg\rceil.
\end{equation}

\end{lemma}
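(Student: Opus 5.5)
The plan is to use the fact that both source-replacement states are pure, so the trace distance is fixed by their overlap. For pure states,
\[
T\big(\ketbra*{\Psi_N^{(\infty)}},\ketbra*{\Psi_N^{(l_c^\mathrm{eff})}}\big)=\sqrt{1-|\langle\Psi_N^{(\infty)}|\Psi_N^{(l_c^\mathrm{eff})}\rangle|^2}.
\]
It therefore suffices to show that the overlap is real and satisfies
\[
\langle\Psi_N^{(\infty)}|\Psi_N^{(l_c^\mathrm{eff})}\rangle \geq e^{-N\bar\mu D^2/2},
\qquad D\coloneqq\sum_{l>l_c^\mathrm{eff}}\Delta_l .
\]
Given this, $1-|\cdot|^2\le 1-e^{-N\bar\mu D^2}\le N\bar\mu D^2$, which yields \cref{eq:trace_distance_asymp_decoy}. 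The closed form of $D$ then follows from summing the geometric series in \cref{eq:correlations_model_Delta_exp}.

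\textbf{Computing the overlap.} Both states in \cref{eq:global_state_decoy_correlations} carry the same classical labels $\ket{\alpha_k}_{C_k}\ket{(a_k)_{\alpha_k}}_{A_k}$ with amplitude $2^{-N}$. These label states are orthonormal across different histories $(a_1^N,\alpha_1^N)$, so cross terms vanish. The overlap is then an average over histories of $\prod_k\langle\Psi^{(\infty)}_{a_1^k,\alpha_1^k}|\Psi^{(l_c^\mathrm{eff})}_{a_1^k,\alpha_1^k}\rangle$.

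For a fixed history, the intensity and photon-number registers coincide in the two states, so the round-$k$ factor is $\sum_\mu p_\mu\sum_m p_{m|\mu}\langle\psi^{(m)}_{\theta}|\psi^{(m)}_{\theta'}\rangle$. Here $\theta$ and $\theta'$ are the phases in \cref{eq:correlations_model_phase} and \cref{eq:correlations_model_phase_truncated}, respectively. By \cref{eq:correlations_model_state}, the $m$-photon states are $m$ copies of a single-photon polarization, so $\langle\psi^{(m)}_\theta|\psi^{(m)}_{\theta'}\rangle=\cos^m(\theta-\theta')$. With Poissonian $p_{m|\mu}$, the photon-number sum equals $\exp[-\mu(1-\cos\delta_k)]$, where $\delta_k=\theta-\theta'=\sum_{l=l_c^\mathrm{eff}+1}^{k-1}(\delta^{(l)}_{a_{k-l},\alpha_{k-l}}-\delta^{(l)}_{a^\ast,\alpha^\ast})$.

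\textbf{Bounding each factor.} By \cref{eq:correlations_model_delta}, $|\delta_k|\le D$, hence $1-\cos\delta_k\le\delta_k^2/2\le D^2/2$. Each round factor is therefore real and satisfies $\sum_\mu p_\mu e^{-\mu(1-\cos\delta_k)}\ge \sum_\mu p_\mu e^{-\mu D^2/2}\ge e^{-\bar\mu D^2/2}$, where the last step is Jensen's inequality for the convex map $\mu\mapsto e^{-\mu x}$. This bound holds uniformly in the history. Taking the product over the $N$ rounds and averaging over histories gives the claimed lower bound on the overlap.

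\textbf{Choice of $l_c^\mathrm{eff}$ and expected obstacle.} Finally, requiring $\sqrt{N\bar\mu}\,\Delta_1e^{-\lambda l_c^\mathrm{eff}}/(1-e^{-\lambda})\le d$ and solving for $l_c^\mathrm{eff}$ gives \cref{eq:l_c_formula_stable_decoy}. The only delicate point is the orthogonality bookkeeping, i.e., confirming that the overlap really factorizes history-by-history with no cross terms. This holds because the coin and bit registers are orthonormal labels that are identical in both states.
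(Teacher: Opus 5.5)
Your proposal is correct and follows essentially the same route as the paper: you use the pure-state fidelity formula, the history-by-history factorization of the overlap, the Poisson generating function giving $\sum_\mu p_\mu e^{-\mu(1-\cos\Delta\theta_k)}$ per round, and $1-\cos x\le x^2/2$ with $|\Delta\theta_k|\le\sum_{l>l_c^{\mathrm{eff}}}\Delta_l$. The only difference is the final elementary step, where the paper keeps $F\ge c^N$ with $c=\sum_\mu p_\mu e^{-\mu D^2/2}$ and then applies $\sqrt{1-F^2}\le\sqrt{2(1-F)}$, Bernoulli's inequality and $1-e^{-y}\le y$, while you apply Jensen to get $F\ge e^{-N\bar\mu D^2/2}$ and then $1-e^{-y}\le y$ to $1-F^2$; both give the same bound $\sqrt{N\bar\mu}\,D$.
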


\begin{proof}
See Appendix~\ref{appsub:proof:lem:trace_distance_exponential_decay_decoy}.
\end{proof}

Note that $l_c^\mathrm{eff}$ grows only logarithmically in the block size~$N$, so the truncation introduces a very mild overhead.

\subsubsection*{Combined security statement}

Combining the results above, we obtain a complete security guarantee for the LTI correlations model.

\begin{corollary}[Security of decoy-state BB84 with LTI encoding correlations]
    \label{cor:security_LTI}
    Consider the \emph{Actual protocol} with a source described by the LTI correlations model of \cref{eq:correlations_model_state,eq:correlations_model_phase,eq:correlations_model_delta,eq:correlations_model_Delta_exp}, three-intensity decoy states with intensities $s > w > v$, and a truncation length~$l_c^\mathrm{eff}$ satisfying \cref{eq:l_c_formula_stable_decoy} for some tolerance~$d > 0$. Define the decoy-state bounds $\bm{n}_{\mathrm{keep},\mathrm{sifted},\gamma,(1)}^{\zeta,\mathrm{L}}$ and $\bm{n}_{\mathrm{keep},\mathrm{sifted},\gamma,(1)}^{\zeta,\mathrm{U}}$ as in \cref{eq:n_L,eq:n_U}, and the phase-error rate bound $\mathcal{E}_{\mathrm{ph},(1)}$ as in \cref{eq:E_ph_1_def}, with $\Delta_{\rm coin}^{\rm U}$ evaluated via \cref{eq:Delta_coin_U_decoy_bound}. Then
    \begin{equation}
        \Pr\!\Big[
            \bm{n_{K,(1)}} < \bm{n}_{\mathrm{keep},\mathrm{sifted},Z,(1)}^{\mathrm{det,L}}
            \;\cup\;
            \bm\ephone > \mathcal{E}_{\mathrm{ph},(1)}(\bm{\vec{n}};\epsilon_A,\epsilon_C,\epsilon_D)
        \Big]
        \leq 5\epsilon_A + (l_c^\mathrm{eff}+1)\epsilon_C + 10\,\epsilon_D + d.
    \end{equation}
    Consequently, by \cref{thm:variable_length_security_decoy}, if Alice and Bob set the final key length as in \cref{eq:final_key_length}, the protocol produces an $\varepsilon_{\mathrm{sec}}$-secure key with
    \begin{equation}
        \varepsilon_{\mathrm{sec}}
        = 2\sqrt{5\epsilon_A + (l_c^\mathrm{eff}+1)\epsilon_C + 10\,\epsilon_D + d}
        \;+\; \varepsilon_\mathrm{PA} + \varepsilon_\mathrm{EV}.
    \end{equation}
\end{corollary}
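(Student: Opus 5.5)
The corollary is a composition of results already established. We first prove the statistical guarantee for the fictitious truncated source, then lift it to the actual LTI source, and finally invoke \cref{thm:variable_length_security_decoy}.

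\emph{Step 1 (truncated source).} Consider the \emph{Phase-error estimation protocol} run with the truncated state $\ket*{\Psi_N^{(l_c^\mathrm{eff})}}$ of \cref{eq:correlations_model_state_truncated,eq:correlations_model_phase_truncated,eq:correlations_model_delta_Delta_exp_truncated}. By construction, the state of round $k$ depends only on $a_{k-l_c^\mathrm{eff}}^k,\alpha_{k-l_c^\mathrm{eff}}^k$, so the source has maximum correlation length $l_c=l_c^\mathrm{eff}$. It also has the form of \cref{eq:rho_muk_a1k_alpha1k}, with Fock-diagonal mixtures whose photon-number statistics $p_{m|\mu}$ are Poissonian and independent of the encoding. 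Hence \cref{cor:phase_error_bound_decoy} applies. The only ingredient needing care is $\Delta_{\rm coin}^{\rm U}$, which enters \cref{eq:E_ph_1_def} through $n_{\mathrm{trash},\mathrm{sifted},-,(1)}^{\mathrm{U}}$ in \cref{eq:n_trash_U_decoy}. I would check that the function $1-2p_{\mathrm{keep}}(n^{\rm U}+\Delta_A)/(\cdots)$ is decreasing in $n^{\rm U}$, and hence in $\Delta_{\rm coin}^{\rm U}$. I would also check that $G_+(y,z)$ is nonincreasing in $z$ on the relevant domain, with the convention $G_+=1$ covering the degenerate region. Together these show that replacing $\Delta_{\rm coin}^{\rm U}$ by the upper bound of \cref{lem:bound_Delta_coin_decoy} only enlarges $\mathcal{E}_{\mathrm{ph},(1)}$. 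The event bounded in \cref{lem:bernstein_by_groups_decoy} therefore still holds with probability at least $1-(l_c^\mathrm{eff}+1)\epsilon_C$. This monotonicity check is the one step that requires actual verification rather than citation, and I expect it to be the main (though mild) obstacle. Since $\partial G_+/\partial z$ has a sign that depends on whether $\sqrt{y}<z$, I would argue it either directly from the closed form or by noting that $G_+(y,z)$ is the maximum of a Bloch-sphere constraint that weakens as $z$ decreases. Step 1 then yields, for every attack,
\[
\Pr_{(l_c^\mathrm{eff})}\big[\bm{n_{K,(1)}}<\bm{n}_{\mathrm{keep},\mathrm{sifted},Z,(1)}^{\mathrm{det,L}}\cup\bm\ephone>\mathcal{E}_{\mathrm{ph},(1)}\big]\le 5\epsilon_A+(l_c^\mathrm{eff}+1)\epsilon_C+10\epsilon_D .
\]

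\emph{Step 2 (lifting).} By \cref{lem:trace_distance_exponential_decay_decoy}, choosing $l_c^\mathrm{eff}$ as in \cref{eq:l_c_formula_stable_decoy} gives trace distance at most $d$ between $\ket*{\Psi_N^{(\infty)}}$ and $\ket*{\Psi_N^{(l_c^\mathrm{eff})}}$. The functions $\mathcal{N}_{K,(1)}(\vec n)$ and $\mathcal{E}_{\mathrm{ph},(1)}(\vec n)$ depend only on the announced data and on fixed parameters, not on which source is used. \Cref{lem:unbounded_correlations} therefore transfers the bound of Step 1 to the actual source with an additive $d$, which gives the first displayed inequality of the corollary.

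\emph{Step 3 (security).} The bound from Step 2 is of the form \cref{eq:statistical_guarantee_decoy}, using the identification $\bm{n_{K,(1)}}\equiv\bm{n}_{\mathrm{keep},\mathrm{sifted},Z,(1)}^{\mathrm{det}}$, with
\[
\varepsilon_\mathrm{PE}=5\epsilon_A+(l_c^\mathrm{eff}+1)\epsilon_C+10\epsilon_D+d,
\]
and it holds for any attack. \Cref{thm:variable_length_security_decoy} then gives the stated $\varepsilon_\mathrm{sec}$ for the \emph{Actual protocol (source replaced)}. That protocol is equivalent to the \emph{Actual protocol}, as argued after its definition, so the same $\varepsilon_\mathrm{sec}$ holds for the \emph{Actual protocol}.
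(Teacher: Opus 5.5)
Your proposal is correct and follows the same route as the paper. You apply \cref{cor:phase_error_bound_decoy} to the truncated source with $l_c=l_c^\mathrm{eff}$, lift the bound to the actual source via \cref{lem:trace_distance_exponential_decay_decoy} and \cref{lem:unbounded_correlations} at an additive cost $d$, and conclude with \cref{thm:variable_length_security_decoy}. Your extra monotonicity check is valid and makes explicit a step the paper leaves implicit: substituting the bound of \cref{lem:bound_Delta_coin_decoy} for $\Delta_{\rm coin}^{\rm U}$ only enlarges $\mathcal{E}_{\mathrm{ph},(1)}$, because $G_+$ is nonincreasing in its second argument.
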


\begin{proof}
    By \cref{cor:phase_error_bound_decoy}, the phase-error rate bound holds for the truncated source $\ket*{\Psi_N^{(l_c^\mathrm{eff})}}$ with failure probability $\varepsilon_\mathrm{PE} = 5\epsilon_A + (l_c+1)\epsilon_C + 10\,\epsilon_D$. By \cref{lem:trace_distance_exponential_decay_decoy}, the trace distance between the actual and truncated source-replacement states is at most~$d$ for the chosen~$l_c^\mathrm{eff}$. Applying \cref{lem:unbounded_correlations} then lifts the guarantee to the actual source, adding~$d$ to the failure probability. The final security parameter follows from \cref{thm:variable_length_security_decoy}.
\end{proof}

{ \section{Numerical results} \label{sec:numerics}

In this section, we numerically evaluate the secret-key rate achievable using our security analysis. We consider a decoy-state BB84 protocol whose source suffers from exponentially decaying LTI correlations, as described {in \cref{subsec:particular_correlations_model}}. {The simulations are based on a standard channel and detection model for decoy-state BB84 with an active-basis-choice receiver with two threshold single-photon detectors; the explicit expressions for the expected detection and error statistics are taken from Ref.~\cite{panayiMemoryassistedMeasurementdeviceindependent2014} and provided in \cref{app:channel_model} for completeness.} We consider an error-correction inefficiency $f = 1.16$, a dark-count probability $p_d = 10^{-6}$ per detector, a detector efficiency $\eta_d = 0.73$ {\cite{pittaluga600kmRepeaterlike2021}}, a fiber attenuation coefficient of $\SI{0.2}{dB/km}$, no misalignment, and a block size of $N = 10^{12}$ transmitted signals. For the correlations model, we set $\lambda = 6.35$~\cite{agulleiroModelingCharacterization2025} and consider two representative values of the nearest-neighbor correlation strength, $\Delta_1 \in \{0.01, 0.001\}$.

{In the simulations, we evaluate} two scenarios that differ in how the correlations are modeled:
\begin{itemize}
    \item \textit{Finite correlation length} ($l_c = 1$ and $l_c = 5$): we assume that the correlations are artificially truncated at length $l_c$, i.e.,~$\Delta_l = \Delta_1 \, e^{-\lambda(l-1)}$ for $l \leq l_c$ and $\Delta_l = 0$ for $l > l_c$. Security is established via \cref{cor:phase_error_bound_decoy}. We consider these only for comparison purposes.
    \item \textit{Unbounded correlation length} ($l_c = \infty$): we consider a natural scenario in which the correlations decay exponentially without any truncation, i.e.,~$\Delta_l = \Delta_1 \, e^{-\lambda(l-1)}$ for all $l$. Security is rigorously established via \cref{cor:security_LTI}.
\end{itemize}

In both cases, we target an overall security parameter of $\varepsilon_\mathrm{sec} = 2 \cdot 10^{-10}$, and fix $\varepsilon_\mathrm{EV} = 10^{-10}$ and $\varepsilon_\mathrm{PA} = \varepsilon_\mathrm{sec}/6$. {Since, by \cref{cor:phase_error_bound_decoy,cor:security_LTI,thm:variable_length_security_decoy}, the overall security parameter is given by $\varepsilon_\mathrm{sec} = 2\sqrt{\varepsilon_\mathrm{PE}} + \varepsilon_\mathrm{PA} + \varepsilon_\mathrm{EV}$, where $\varepsilon_\mathrm{PE}$ is the total failure probability of the parameter estimation, these choices fix $2\sqrt{\varepsilon_\mathrm{PE}} = \varepsilon_\mathrm{sec} - \varepsilon_\mathrm{PA} - \varepsilon_\mathrm{EV} = \tfrac{2}{3}\cdot 10^{-10}$, i.e.,~$\varepsilon_\mathrm{PE} = 10^{-20}/9$. The remaining failure parameters are then allocated by distributing this overall budget evenly among the individual failure events, as follows.} For the scenario with finite correlation length, we set
\begin{equation}
\epsilon_A = \epsilon_C = \epsilon_D = \frac{10^{-20}}{9\,(16 + l_c)},
\end{equation}
{which achieves the required $\varepsilon_\mathrm{sec}$. Here, the factor $16 + l_c$ is simply the total number of individual failure events in \cref{cor:phase_error_bound_decoy}: five applications of Azuma's inequality, $l_c+1$ applications of Bernstein's inequality (one per group), and ten decoy-state bounds, so that $\varepsilon_\mathrm{PE} = (16+l_c)\,\epsilon_A = 10^{-20}/9$. We remark that an even split is not necessarily optimal, but since these parameters enter the bounds only through the deviation terms of the concentration inequalities, which depend only logarithmically on the inverse of the failure probabilities, we expect the potential gain from optimizing this allocation to be minor.} 

{For the scenario with unbounded correlation length, the parameter estimation additionally fails with probability at most $d$ due to the truncation, i.e.,~$\varepsilon_\mathrm{PE} = 5\epsilon_A + (l_c^\mathrm{eff}+1)\epsilon_C + 10\epsilon_D + d$. In this case, the allocation cannot be done in a single step, since the number of Bernstein terms depends on $l_c^\mathrm{eff}$, which is itself determined by $d$ through \cref{eq:l_c_formula_stable_decoy}. We therefore proceed in two steps. First, we assign to $d$ one seventeenth of the overall failure probability budget by setting $d = \tfrac{1}{17}\cdot\tfrac{10^{-20}}{9} = 10^{-20}/153$, and compute the resulting effective truncation length $l_c^\mathrm{eff}$ via \cref{eq:l_c_formula_stable_decoy}. Then, we distribute the remaining budget of $\tfrac{10^{-20}}{9} - d = \tfrac{16}{153}\cdot 10^{-20}$ evenly among the $16 + l_c^\mathrm{eff}$ remaining failure events, setting
\begin{equation}
\epsilon_A = \epsilon_C = \epsilon_D = \frac{16 \cdot 10^{-20}}{153\,(16 + l_c^\mathrm{eff})},
\end{equation}
thus achieving the required $\varepsilon_\mathrm{sec}$.}

The results are shown in \cref{fig:results}. As expected, the secret-key rate decreases as $\Delta_1$ increases, reflecting the larger deviations from the ideal BB84 states. For the exponential-decay model considered here, the residual contributions $\Delta_l$ vanish rapidly with $l$, so the overall impact of the correlations is dominated by the first few terms. Consequently, treating the correlations as unbounded ($l_c = \infty$) incurs essentially no penalty relative to the truncated cases with finite $l_c$, demonstrating that our framework can rigorously handle realistic decaying correlations at virtually no cost in performance. 

{In \cref{fig:results_SP} of \cref{app:single_photon}, we also show the results for an implementation with a single-photon source under the same parameter regime. A similar behavior is observed, although the performance gap between different values of $\Delta_1$ is larger than in \cref{fig:results}. More interestingly, we find that, for high $N$ and $\Delta_1$, the decoy-state implementation actually outperforms the single-photon one, as shown in \cref{fig:results_comparison}. This reverses the behavior of the uncorrelated case, in which a single-photon source is always superior to a weak-coherent-pulse source. The explanation lies in the photon-number statistics of the two sources. Encoding correlations effectively constitute a side channel in which information about the setting choice of a given round is leaked through the pulses emitted in later rounds. However, this leakage only occurs if the latter pulses actually contain photons, since a vacuum emission is identical regardless of the previous settings and therefore leaks nothing. Given that weak coherent pulses are a mixture of photon-number states with a large vacuum component, the impact of correlated bit-and-basis encoders is diminished for such sources.}

\begin{figure}
\includegraphics[width=10cm]{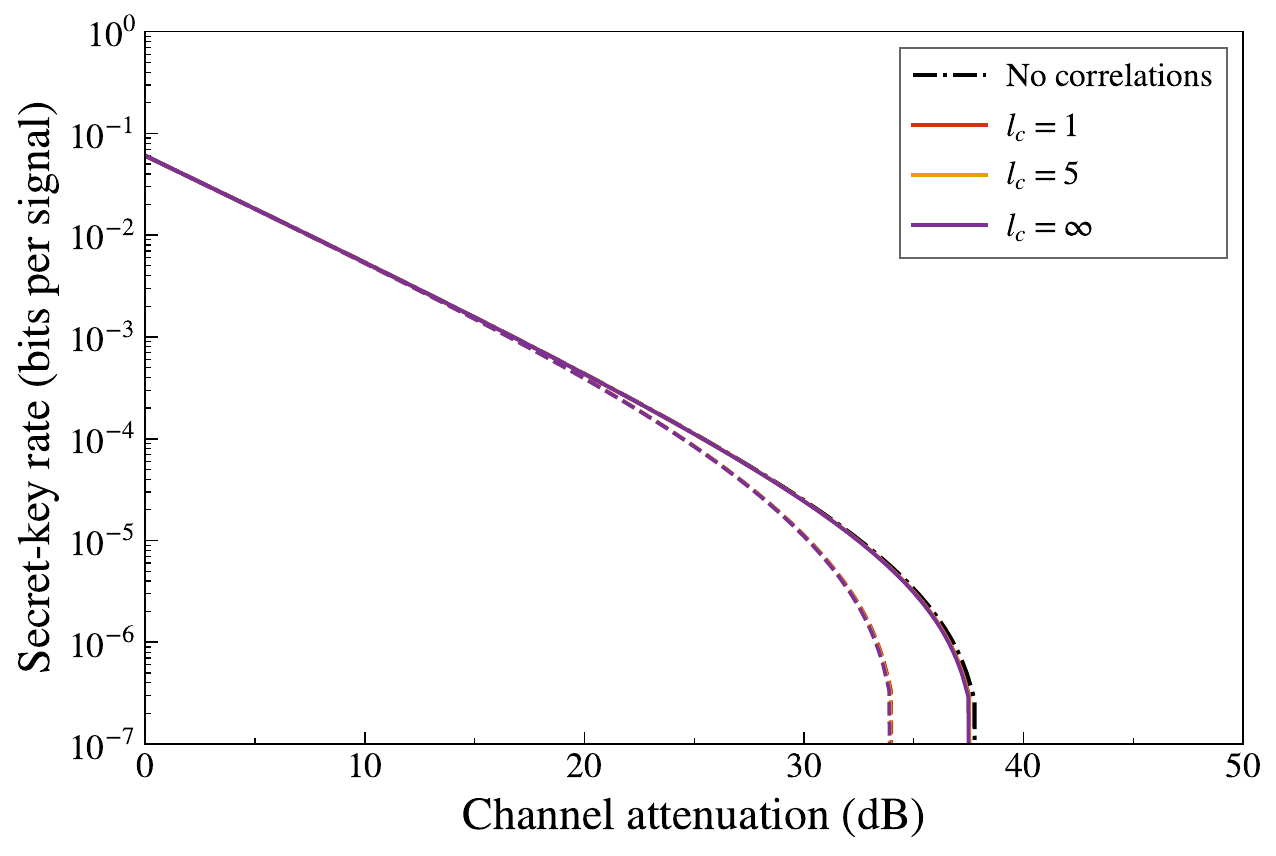}
    \caption{{Secret-key rate as a function of the channel attenuation for a decoy-state BB84 protocol with a source suffering from LTI encoding correlations following the exponential-decay model of {\cref{subsec:particular_correlations_model}}. We employ three intensity settings, fix the weakest to vacuum, and numerically optimize over the remaining two intensities, their associated probabilities{, and $p_\mathrm{keep}$}. Solid lines correspond to $\Delta_1 = 0.001$ and dashed lines to $\Delta_1 = 0.01$. The curves labeled $l_c = 1$ and $l_c = 5$ assume correlations artificially truncated at length $l_c$ (i.e.,~$\Delta_l = 0$ for $l > l_c$), while $l_c = \infty$ corresponds to correlations of unbounded length. {Note that the lines corresponding to different values of $l_c$ are nearly indistinguishable.} The dash-dotted line shows the ideal uncorrelated case for reference. }}
    \label{fig:results}
\end{figure}

}

\section{Conclusion} \label{sec:conclusion}

We have presented a complete, {fully analytical} finite-key security proof for decoy-state BB84 with correlated bit-and-basis encoders, valid against general coherent attacks. To our knowledge, this is the first such result: prior proofs that rigorously accommodate encoding correlations were restricted to sources emitting pure states conditional on Alice's setting choices \cite{curras-lorenzoRigorousPhaseerrorestimation2026}, and thus did not directly cover the phase-randomized coherent-state sources used in most practical QKD implementations{, whose emissions remain mixed even after conditioning on the bit, basis and intensity choices. Moreover, our analysis requires only a partial characterization of the correlations, and it can be combined with existing lifting theorems to accommodate detection efficiency mismatches~\cite{curras-lorenzoSecurityQuantum2026,tupkaryPhaseError2025}}.

Beyond this extension to the decoy-state setting, our approach introduces an important methodological improvement. While the framework of Ref.~\cite{curras-lorenzoRigorousPhaseerrorestimation2026} partitions the protocol rounds into $l_c+1$ groups and estimates the phase-error rate separately within each group, our proof successfully bounds the single-photon phase-error rate of the full sifted key directly, using all the available data simultaneously. This avoids the loss of statistical efficiency inherent to the partitioning step and should translate into tighter finite-key rates for a given block size. More broadly, this shows that the division into independent groups, while a powerful tool, is not a necessary ingredient for incorporating encoding correlations into security proofs of QKD. We anticipate that this observation will prove useful in incorporating source correlations into the security analyses of other QKD protocols.

{To illustrate how to apply our proof, we have considered the LTI correlations model with exponentially decaying correlation strength---the only experimentally characterized model available to date~\cite{agulleiroModelingCharacterization2025}---and handled its unbounded correlation length rigorously via a trace-distance argument, at an essentially negligible cost that grows only logarithmically with the block size. Our results show that the secret-key rate drops as the magnitude of the correlations increases, as expected. Moreover, for the experimentally characterized decay rate, considering correlations of unbounded length is virtually indistinguishable from assuming nearest-neighbour correlations only. Besides, we have also applied our analysis to the case of single-photon sources, and found that although the secret-key rate has a similar behavior to the decoy-state scenario, the latter can actually achieve a better performance when the magnitude of the correlations is large. These results highlight that decoy-state implementations are inherently more robust to correlated bit-and-basis encoders.  }

{Our analysis assumes ideal intensity modulation and ideal phase randomization, i.e., only the bit-and-basis encoder is assumed to suffer from memory effects. Since in practice intensity modulators are also expected to suffer from bandwidth limitations, and gain-switched lasers can suffer from correlated phases between consecutive pulses due to imperfect depletion of the cavity between emissions, extending the proof to incorporate intensity correlations~\cite{yoshinoQuantumKey2018,zapateroSecurityQuantum2021,sixtoSecurityDecoyState2022,trefilovIntensityCorrelations2025} and phase randomization correlations \cite{curras-lorenzoSecurityQuantum2023,marcominiCharacterisingHigherorder2025} is an important next step. However, to our knowledge, no general tools are yet available to incorporate these imperfections into finite-key proofs against general coherent attacks under realistic partially-characterized correlation models, and doing so will likely require the development of novel techniques {(see the Note added below for recent progress on intensity correlations)}.}

{\paragraph*{Note added.}
After the first version of this manuscript was uploaded to the arXiv, two related works involving some of the present authors have appeared. Ref.~\cite{navarreteNumericalSecurity2026} introduces a numerical finite-key security framework against coherent attacks that accommodates partially characterized, correlated bit-and-basis encoders, as well as intensity correlations under a fully characterized photon-number distribution. To handle both types of correlations, it builds on the key technique introduced in the present work: performing the main phase-error estimation on the full, undivided protocol data, and restricting the round partitioning to a fictitious step used only to bound an adversary-independent quantity determined by Alice's local state. However, unlike the present analysis, the encoding correlations model of Ref.~\cite{navarreteNumericalSecurity2026} considers a fixed correlation length and assigns the same worst-case strength to every round within the correlation window, which is pessimistic when the correlations decay slowly with the round separation{, although it is likely the analysis could be adapted to use the same correlations model as in the present work}. Ref.~\cite{sixtoFinitekeySecurity2026} provides an analytical finite-key security proof for decoy-state QKD that additionally incorporates state-preparation flaws, side-channel leakage and detector imperfections, by applying the round-partitioning technique of Ref.~\cite{curras-lorenzoRigorousPhaseerrorestimation2026} to the decoy-state setting, and thus incurs the statistical-efficiency loss that the present analysis avoids. These works are complementary to ours, which remains, to our knowledge, the only fully analytical security proof for decoy-state QKD with correlated encoders that uses all the protocol data simultaneously, without partitioning the rounds into groups. Moreover, unlike Refs.~\cite{navarreteNumericalSecurity2026,sixtoFinitekeySecurity2026}, which require solving semidefinite or linear programs, all the bounds in the present analysis are given in closed form.}

	{\section{Acknowledgements}
    We thank Vaisakh Mannalath and Álvaro Navarrete for valuable discussions. This work was supported by the Galician Regional Government (consolidation of research units: AtlanTTic), the Spanish Ministry of Economy and Competitiveness (MINECO), the Fondo Europeo de Desarrollo Regional (FEDER) through the grant No.~PID2024-162270OB-I00, the “Hub Nacional de Excelencia en Comunicaciones Cuanticas” funded by the Spanish Ministry for Digital Transformation and the Public Service and the European Union NextGenerationEU, the European Union’s Horizon Europe Framework Programme under the Marie Sklodowska-Curie Grant No.~101072637 (Project {"Quantum Safe Internet"}), the project “Quantum Secure Networks Partnership” (QSNP, grant agreement No 101114043) and the Project IberianQCI (grant 101249593), as well as the Programa de Cooperación Interreg VI-A España--Portugal (POCTEP) 2021--2027 through the project QUANTUM IBERIA. G.C.-L. acknowledges funding from the European Union's Horizon Europe research and innovation programme under the Marie Skłodowska-Curie Postdoctoral Fellowship grant agreement No.\ 101149523. K.T. acknowledges support from JSPS KAKENHI Grant Numbers 23K25793 and 23H01096. }


\appendix

\section{Proof of technical results} \label{app:proofs}

\subsection{Proof of \texorpdfstring{\cref{thm:variable_length_security_decoy}}{Theorem 1}}
\label{appsub:proof_thm:variable_length_security_decoy}

\begin{proof}
Let $W$ be the classical register storing the realization of $\bm{\vec n}$; let $\bm{\vec m}$ be a random vector containing the outcomes of Alice's measurements on the photon number registers $M_1^N$ (which are not announced); let $\Omega(\vec n,\vec m)$ be the event in which $\bm{\vec n} = \vec n$ {and} $\bm{\vec m} = \vec m $ is observed, let $\rho_{\vert \Omega(\vec n,\vec m)}$ be the state shared by Alice, Bob and Eve before the beginning of the postprocessing (step \ref*{step:postprocessing}) in the \textit{Actual protocol (source replaced)} conditional on $\Omega(\vec n, \vec m)$; let  $\rho_{\vert \Omega(\vec n, \vec m)}^\mathrm{virt}$ be the state shared by Alice, Bob and Eve at the end of the \textit{Phase-error estimation protocol} conditional on $\Omega(\vec n, \vec m)$; and let
\begin{equation}
\label{eq:kappa_nm}
    \kappa(\vec n, \vec m) \coloneqq \Pr[\bm{n_{K,(1)}}< \mathcal{N}_{K,(1)} (\bm{\vec{n}})\quad \cup \quad \bm\ephone > \mathcal{E}_{\mathrm{ph},(1)} (\bm{\vec{n}})\mid \Omega(\vec n, \vec m)].
\end{equation}
Also, let $Z_A^{n_K}$ be the register in $\rho_{\vert \Omega(\vec n, \vec m)}$ containing Alice's sifted key in the \textit{Actual protocol (source replaced)}, and let $X_A^{n_K}$ ($X_B^{n_K}$) be the register in $\rho_{\vert \Omega(\vec n, \vec m)}^\mathrm{virt}$ containing Alice's (Bob's) bit outcomes for the rounds in $\mathcal{D}_{\mathrm{keep},\mathrm{sifted},Z}$ in the \textit{Phase-error estimation protocol}. Note that we can divide $Z_A^{n_K} = Z_A^{n_{K,(1)}} Z_A^{n_{K,(\mathrm{rest})}}$, $X_A^{n_K} = X_A^{n_{K,(1)}} X_A^{n_{K,(\mathrm{rest})}}$ and $X_B^{n_K} = X_B^{n_{K,(1)}} X_B^{n_{K,(\mathrm{rest})}}$ based on the photon number. Using this division and applying the EUR on the $n_{K,(1)}$ rounds conditional on $\Omega(\vec n, \vec m)$ and with smoothing parameter $\kappa(\vec n, \vec m)$ as in \cite[Theorem 3]{tupkaryPhaseError2025}, we obtain
\begin{equation}
    \begin{gathered}
        H^{\sqrt{\kappa(\vec n, \vec m)}}_\mathrm{min}\big(Z_A^{n_K} \mid W E\big)_{\rho_{\vert\Omega(\vec n, \vec m)}}
        \;\geq\;
        H^{\sqrt{\kappa(\vec n, \vec m)}}_\mathrm{min}\big(Z_A^{n_{K,(1)}} \mid W E\big)_{\rho_{\vert\Omega(\vec n, \vec m)}}
        \\[4pt]
        \geq\;
        n_{K,(1)}
        - H^{\sqrt{\kappa(\vec n, \vec m)}}_\mathrm{max}\big(X_A^{n_{K,(1)}} \mid X_B^{n_{K,(1)}}\big)_{\rho^\mathrm{virt}_{\vert\Omega(\vec n, \vec m)}}
        \;\geq\;
        n_{K,(1)}\big[1-h\big(\mathcal{E}_{\mathrm{ph},(1)} (\vec{n})\big)\big],
        \quad \forall \vec n, \vec m,
    \end{gathered}
\end{equation}
Now, let $S_{\vec n} \coloneqq \{\vec m : n_{K,(1)} \geq \mathcal{N}_{K,(1)}(\vec n)\}$ be the set of photon-number outcomes for which the number of single photons in $\mathcal{D}_{\mathrm{keep},\mathrm{sifted},Z}$ is at least $\mathcal{N}_{K,(1)}(\vec n)$. For $\vec m \in S_{\vec n}$, we have $n_{K,(1)} \geq \mathcal{N}_{K,(1)}(\vec n)$, and therefore
\begin{equation}
H^{\sqrt{\kappa(\vec n, \vec m)}} _\mathrm{min}(Z_A^{n_K} \mid W E)_{\rho_{\vert\Omega(\vec n, \vec m)}} \geq \mathcal{N}_{K,(1)}(\vec n)\big[1-h\big(\mathcal{E}_{\mathrm{ph},(1)} (\vec{n})\big)\big], \quad \forall \vec n, \forall \vec m \in S_{\vec n}.
\end{equation}
Furthermore, by definition of $\kappa(\vec n, \vec m)$ in \cref{eq:kappa_nm} and the assumed guarantee in \cref{eq:statistical_guarantee_decoy}, we have
\begin{equation}
\begin{aligned}
&\sum_{\vec n} \sum_{\vec m \in S_{\vec n}} \Pr(\Omega(\vec n, \vec m)) \kappa(\vec n, \vec m) + \sum_{\vec n} \sum_{\vec m \notin S_{\vec n}} \Pr(\Omega(\vec n, \vec m)) \\
&= \Pr[\bm{n_{K,(1)}} \geq \mathcal{N}_{K,(1)}(\bm{\vec n}) \text{ and } \bm\ephone > \mathcal{E}_{\mathrm{ph},(1)}(\bm{\vec n})] + \Pr[\bm{n_{K,(1)}} < \mathcal{N}_{K,(1)}(\bm{\vec n})] \\
&= \Pr[\bm{n_{K,(1)}} < \mathcal{N}_{K,(1)}(\bm{\vec n}) \cup \bm\ephone > \mathcal{E}_{\mathrm{ph},(1)}(\bm{\vec n})] \leq \varepsilon_\mathrm{PE}.
\end{aligned}
\end{equation}
Then, security follows directly from \cite[Theorem 4]{tupkaryPhaseError2025} after identifying $i \mapsto \vec n$, $j \mapsto \vec m$, $\beta_i \mapsto \mathcal{N}_{K,(1)}(\vec n)\big[1-h\big(\mathcal{E}_{\mathrm{ph},(1)} (\vec n)\big)\big]$, $\kappa_{(i,j)} \mapsto \kappa(\vec n, \vec m)$, $\varepsilon_\mathrm{AT}^2 \mapsto \varepsilon_\mathrm{PE}$, $\vec Z \mapsto Z_A^{n_K}$, $\vec C \mapsto W$ and $\vec E \mapsto E$. 
\end{proof}

\subsection{Proof of \texorpdfstring{\cref{lem:quantum_coin_inequality_decoy}}{Lemma 1}}    \label{appsub:proof_quantum_coin_inequality_decoy}
\begin{proof}
    The proof is fairly similar to that in Ref.~\cite[SE1.A]{curras-lorenzoSecurityFramework2025}. To prove this, we consider a scenario that yields identical statistics for the single-photon sifted rounds as the \textit{Phase-error estimation protocol}, but in which Alice and Bob do things in a different order. Namely:

\begin{enumerate}[label=\arabic*'., ref=\arabic*']
\item Alice and Bob determine the set of detected sifted rounds with $m_k=1$, $\mathcal{D}_\mathrm{sifted}^{(1)}$. 
    \item For each round  $k \in \mathcal{D}_\mathrm{sifted}^{(1)}$:
    \begin{enumerate}
        \item Alice and Bob perform the joint POVM $\{\hat m_\mathrm{err}, \mathbb{I}-\hat m_\mathrm{err}\}$, where $\hat m_\mathrm{err} = \ketbra{0_X}_A \otimes G_1^{(X)} + \ketbra{1_X}_A \otimes G_0^{(X)}$. Let $\bm e_k \in \{\mathrm{err},\overline{\mathrm{err}}\}$ be the outcome of this measurement.
    \end{enumerate}
    
    \item Alice chooses $\bm t_k = \mathrm{keep}$ with probability $p_\mathrm{keep}$ or $\bm t_k = \mathrm{trash}$ otherwise. Then, if $\bm t_k = \mathrm{keep}$, Alice measures system $C_k$ in the $\{\ket{Z},\ket{X}\} \equiv \{\ket{0},\ket{1}\}$ basis, obtaining an outcome $\bm c_k=\{Z,X\}$. Conversely, if $\bm t_k = \mathrm{trash}$, Alice measures system $C_k$ in the $\{\ket{+},\ket{-}\}$ basis, obtaining an outcome $\bm c_k=\{+,-\}$.

    \item Alice and Bob perform all their other measurements.
    \end{enumerate}

{The equivalence follows from three observations. First, all classical tags (keep/trash, sifted/unsifted) are generated independently of everything else, so we may assume they are drawn at the very beginning; likewise, the photon numbers $m_k$ come from a measurement on Alice's registers $M_1^N$ whose outcome is never announced, so it too may be placed at any point. This means that we can assume that $\mathcal{D}_{\mathrm{sifted}}^{(1)}$ is fixed at step~1'.

Second, in the \emph{Phase-error estimation protocol} the measurement applied to $A_kB_k$ on a detected keep-sifted round is the \emph{same} in both bases: Alice measures $A_k$ in $\{\ket{0_X},\ket{1_X}\}$ and Bob applies $\{G_0^{(X)},G_1^{(X)}\}$, both when $\alpha_k=\beta_k=X$ (Step~7) and when $\alpha_k=\beta_k=Z$ (Step~8). Grouping the outcomes into error/no-error, this is precisely the joint POVM $\{\hat m_{\mathrm{err}},\mathbb{I}-\hat m_{\mathrm{err}}\}$ on $A_kB_k$, and it does \emph{not} depend on $\alpha_k$. Since it acts on $A_kB_k$ while the coin measurement acts on the disjoint system $C_k$, the two commute, and the former may be applied \emph{before} the latter. (This is a feature of the virtual protocol: in the \emph{Actual protocol} Bob's POVM depends on $\alpha_k$, and no such reordering would be possible.)

Third, on a detected trash-sifted round the protocol applies no measurement to $A_kB_k$, whereas the reordered scenario applies $\{\hat m_{\mathrm{err}},\mathbb{I}-\hat m_{\mathrm{err}}\}$ there as well. This changes nothing, because on such a round the outcome $\bm e_k$ is never announced, never used in any of the counts of the lemma, and the systems $A_k$ and $B_k$ are never touched again. Measuring a system and discarding the outcome is a local operation on that system alone, so it leaves the state of all other systems---in particular $C_k$, the registers of the other rounds, and Eve's system $E$---unchanged. The joint distribution of the announced data, the sifted key and the coin outcomes is therefore the same in both descriptions, with the sole difference that $\bm e_k$ is now defined on trash-sifted rounds too, which is what {let} us condition on it below.

Combining the three observations, we may (i) first determine $\mathcal{D}_{\mathrm{sifted}}^{(1)}$, (ii) apply $\{\hat m_{\mathrm{err}},\mathbb{I}-\hat m_{\mathrm{err}}\}$ on all $k\in\mathcal{D}_{\mathrm{sifted}}^{(1)}$, and (iii) only afterwards measure the coin systems $C_k$ in the basis fixed by the keep/trash choice, without changing the joint distribution of the classical outcomes. This is exactly the reordered scenario above.}

In our proof, we condition on a particular realization of the set of detected sifted rounds $\mathcal{D}_{\mathrm{sifted}}^{(1)}$, and work in the conditional probability space given this realization. All probabilities below should be understood as conditional on this fixed $\mathcal{D}_{\mathrm{sifted}}^{(1)}$, though we suppress this notation for brevity. At the end of the proof, we will argue that since the derived bound holds for any realization of $\mathcal{D}_{\mathrm{sifted}}^{(1)}$, it also holds unconditionally.

Let $\bm v_k = (\bm e_k,\bm t_k, \bm{c}_k)$ be a random vector holding all the outcomes for round $k \in \mathcal{D}_{\mathrm{sifted}}^{(1)}$, and let $\bm v_1^k \coloneqq (\bm v_1,\ldots,\bm v_k)$ be the random history until and including round $k$. Consider the state of system $C_k$ for a given round $k\in\mathcal{D}_{\mathrm{sifted}}^{(1)}$ conditional on a particular outcome $\bm v_1^{k-1} = v_1^{k-1}$ for all measurements on the previous rounds, and on an outcome $\bm e_k = \mathrm{err}$ for this round. Since no operation has acted directly on system $C_k$, it remains a qubit state $\rho_k$ in the two-dimensional space spanned by $\{\ket{0}_{C_k}, \ket{1}_{C_k}\}$. The Bloch sphere constraint implies that the length of the Bloch vector is bounded by 1, which yields the following inequality for measurement probabilities in different bases \cite{loSecurityQuantum2007}:
\begin{equation}
\label{eq:bloch_sphere_general}
    1- 2 P_{-} \leq 2\sqrt{P_0 \cdot P_1},
\end{equation}
where $P_0$ and $P_1$ denote the probabilities of obtaining outcomes $\ket{0}$ and $\ket{1}$ when measuring $\rho_k$ in the computational basis $\{\ket{0}, \ket{1}\}$, and $P_{-}$ denotes the probability of obtaining outcome $\ket{-}$ when measuring $\rho_k$ in the Hadamard basis $\{\ket{+}, \ket{-}\}$. Applying this bound to our conditional state, and noting that Alice's choice $\bm t_k \in \{\mathrm{keep}, \mathrm{trash}\}$ only determines which measurement basis is applied to the qubit state $\rho_k$, we obtain
\begin{equation}
\label{eq:m-Bloch-average-ineq}
\begin{aligned}
    &1- 2\Pr[\bm c_k = - \vert  v_1^{k-1}, \bm t_k = \mathrm{trash},\bm e_k = \mathrm{err}] \\ 
    &\leq 2\sqrt{\Pr[\bm c_k = Z \vert  v_1^{k-1}, \bm t_k = \mathrm{keep},\bm e_k = \mathrm{err}]\Pr[\bm c_k = X \vert  v_1^{k-1}, \bm t_k = \mathrm{keep},\bm e_k = \mathrm{err}]}.
\end{aligned}
\end{equation}
    On the other hand, we have that
\begin{equation}
	\label{eq:PXcresult}
	\begin{aligned}
		&\Pr(\bm e_k = \mathrm{err}\vert  v_1^{k-1}) \Pr (\bm c_k=-|v_1^{k-1},\bm t_k = \mathrm{trash},\bm e_k = \mathrm{err}) \\
        &= \Pr(\bm e_k = \mathrm{err} \vert  v_1^{k-1}) \frac{\Pr(\bm c_k=-,\bm t_k = \mathrm{trash},\bm e_k = \mathrm{err}|  v_1^{k-1})}{\Pr(\bm t_k = \mathrm{trash},\bm e_k = \mathrm{err}|  v_1^{k-1})} \\
		&= \Pr(\bm e_k = \mathrm{err}\vert  v_1^{k-1}) \frac{\Pr(\bm c_k=-,\bm t_k = \mathrm{trash},\bm e_k = \mathrm{err}|  v_1^{k-1})}{\Pr(\bm t_k = \mathrm{trash}|\bm e_k = \mathrm{err},  v_1^{k-1})\Pr(\bm e_k = \mathrm{err}|  v_1^{k-1})} \\
		&= \Pr(\bm e_k = \mathrm{err}\vert  v_1^{k-1}) \frac{\Pr(\bm c_k=-,\bm t_k = \mathrm{trash},\bm e_k = \mathrm{err}|  v_1^{k-1})}{(1-p_\mathrm{keep})\Pr(\bm e_k = \mathrm{err}|  v_1^{k-1})} \\
		&= \frac{\Pr(\bm c_k=-,\bm t_k = \mathrm{trash},\bm e_k = \mathrm{err}| v_1^{k-1})}{1-p_\mathrm{keep}},
	\end{aligned}
\end{equation}
where in the third equality we have used the fact that, in a sifted round with an error outcome (i.e., a round with outcome $\bm e_k = \mathrm{err}$), the probability to choose trash mode is independent of obtaining this error outcome, and equal to $1-p_\mathrm{keep}$. In a similar way, we have that
\begin{equation}
	\label{eq:PZcresult}
	\begin{aligned}
		&\Pr(\bm e_k = \mathrm{err}\vert  v_1^{k-1})\Pr(\bm c_k=Z|\bm t_k = \mathrm{keep},\bm e_k = \mathrm{err},  v_1^{k-1}) 
		= \frac{\Pr(\bm c_k=Z,\bm t_k = \mathrm{keep},\bm e_k = \mathrm{err}| v_1^{k-1})}{p_\mathrm{keep}}, \\
		&\Pr(\bm e_k = \mathrm{err}\vert  v_1^{k-1})\Pr(\bm c_k=X|\bm t_k = \mathrm{keep},\bm e_k = \mathrm{err},  v_1^{k-1}) 
		= \frac{\Pr(\bm c_k=X,\bm t_k = \mathrm{keep},\bm e_k = \mathrm{err}| v_1^{k-1})}{p_\mathrm{keep}}.
	\end{aligned}
\end{equation}
Multiplying by $\Pr(\bm e_k = \mathrm{err}\vert  v_1^{k-1})$ on both sides of \cref{eq:m-Bloch-average-ineq}, and using \cref{eq:PXcresult,eq:PZcresult} we obtain
\begin{equation}
	\label{eq:1st-ineq-bloch}
	\begin{aligned}
	&\Pr(\bm e_k = \mathrm{err}\vert  v_1^{k-1})-2\frac{\Pr(\bm c_k=-,\bm t_k = \mathrm{trash},\bm e_k = \mathrm{err}| v_1^{k-1})}{1-p_\mathrm{keep}}\\
	&\le \frac{2}{p_\mathrm{keep}}\sqrt{\Pr(\bm c_k=Z,\bm t_k = \mathrm{keep},\bm e_k = \mathrm{err}| v_1^{k-1})\Pr(\bm c_k=X,\bm t_k = \mathrm{keep},\bm e_k = \mathrm{err}| v_1^{k-1})}\,.\\
\end{aligned}
\end{equation}
Replacing $\bm e_k = \mathrm{err}$ by $\bm e_k = \overline{\mathrm{err}}$ in {Eqs.~(\ref{eq:m-Bloch-average-ineq}) to (\ref{eq:1st-ineq-bloch}}), we obtain
\begin{equation}
	\label{eq:2st-ineq-bloch}
	\begin{aligned}
		&\Pr(\bm e_k = \overline{\mathrm{err}}\vert  v_1^{k-1})-2\frac{\Pr(\bm c_k=-,\bm t_k = \mathrm{trash},\bm e_k = \overline{\mathrm{err}}| v_1^{k-1})}{1-p_\mathrm{keep}}\\
		&\le \frac{2}{p_\mathrm{keep}}\sqrt{\Pr(\bm c_k=Z,\bm t_k = \mathrm{keep},\bm e_k = \overline{\mathrm{err}}| v_1^{k-1})\Pr(\bm c_k=X,\bm t_k = \mathrm{keep},\bm e_k = \overline{\mathrm{err}}| v_1^{k-1})}\,\\
		&= {\frac{2}{p_\mathrm{keep}}}\sqrt{\Pr(\bm c_k=Z,\bm t_k = \mathrm{keep}| v_1^{k-1})-\Pr(\bm c_k=Z,\bm t_k = \mathrm{keep},\bm e_k = \mathrm{err}| v_1^{k-1})}\\
		&\times\sqrt{\Pr(\bm c_k=X,\bm t_k = \mathrm{keep}| v_1^{k-1})-\Pr(\bm c_k=X,\bm t_k = \mathrm{keep},\bm e_k = \mathrm{err}| v_1^{k-1})}\,.
	\end{aligned}
\end{equation}
Taking the sum of \cref{eq:1st-ineq-bloch,eq:2st-ineq-bloch}, we obtain
\begin{align}
	&1-2\frac{\Pr(\bm c_k=-,\bm t_k = \mathrm{trash}| v_1^{k-1})}{1-p_\mathrm{keep}}\nonumber\\
	&\le \frac{2}{p_\mathrm{keep}}\Big(\sqrt{\Pr(\bm c_k=Z,\bm t_k = \mathrm{keep},\bm e_k = \mathrm{err}| v_1^{k-1})\Pr(\bm c_k=X,\bm t_k = \mathrm{keep},\bm e_k = \mathrm{err}| v_1^{k-1})}\nonumber\\
	&+\sqrt{\Pr(\bm c_k=Z,\bm t_k = \mathrm{keep}| v_1^{k-1})-\Pr(\bm c_k=Z,\bm t_k = \mathrm{keep},\bm e_k = \mathrm{err}| v_1^{k-1})}\nonumber\\
	&\times\sqrt{\Pr(\bm c_k=X,\bm t_k = \mathrm{keep}| v_1^{k-1})-\Pr(\bm c_k=X,\bm t_k = \mathrm{keep},\bm e_k = \mathrm{err}| v_1^{k-1})}\Big)\,. \label{eq:expresion_cond_detection1}
\end{align}

Also, we have that,
\begin{equation}
	1 = \frac{p_\mathrm{keep}}{p_\mathrm{keep}} = \frac{\Pr(\bm t_k = \mathrm{keep}\vert v_1^{k-1})}{p_\mathrm{keep}}{.}\label{eq:Zc_trick}
\end{equation}
Substituting \cref{eq:Zc_trick} into \cref{eq:expresion_cond_detection1}, taking the sum over all rounds $k \in \mathcal{D}_{\mathrm{sifted}}^{(1)}$ on both sides of the inequality, and using the fact that $\sqrt{xy}$ is a concave function of both $x$ and $y$ and thus by Cauchy-Schwarz $\sum_k \sqrt{x_k y_k}\le \sqrt{\sum_k x_k \sum_k y_k}$, we obtain
\begin{equation}
	\begin{aligned}
	&\frac{\sum_k \Pr(\bm t_k = \mathrm{keep}\vert  v_1^{k-1})}{p_\mathrm{keep}}-2\frac{\sum_k \Pr(\bm c_k=-,\bm t_k = \mathrm{trash}| v_1^{k-1})}{1-p_\mathrm{keep}}\\
	&\le \frac{2}{p_\mathrm{keep}}\Bigg(\sqrt{\sum_k \Pr(\bm c_k=Z,\bm t_k = \mathrm{keep},\bm e_k = \mathrm{err}| v_1^{k-1})\sum_k \Pr(\bm c_k=X,\bm t_k = \mathrm{keep},\bm e_k = \mathrm{err}| v_1^{k-1})}\\
	&+\sqrt{\sum_k \Pr(\bm c_k=Z,\bm t_k = \mathrm{keep}| v_1^{k-1})-\sum_k \Pr(\bm c_k=Z,\bm t_k = \mathrm{keep},\bm e_k = \mathrm{err}| v_1^{k-1})}\\
	&\times\sqrt{\sum_k \Pr(\bm c_k=X,\bm t_k = \mathrm{keep}| v_1^{k-1})-\sum_k \Pr(\bm c_k=X,\bm t_k = \mathrm{keep},\bm e_k = \mathrm{err}| v_1^{k-1})}\Bigg)\,. \label{eq:expresion_cond_detection2}
	\end{aligned}
\end{equation}
Now, note that the inequality in \cref{eq:expresion_cond_detection2} holds for any particular realization $v_1^{k-1}$ of the measurement history up to round $k-1$. Since the inequality holds for all possible histories, it also holds when we consider the random variable $\bm v_1^{k-1}$. That is, we can rewrite \cref{eq:expresion_cond_detection2} by replacing the conditioning on fixed outcomes $v_1^{k-1}$ with conditioning on the random history $\bm v_1^{k-1}$. By doing so, we obtain
\begin{equation}
\begin{aligned}
	&\frac{\bm{\tilde{n}}_{\mathrm{keep},\mathrm{sifted},(1)}^{\mathrm{det}}}{p_\mathrm{keep}} - \frac{2 \bm{\tilde{n}}_{\mathrm{trash},\mathrm{sifted},-,(1)}^{\mathrm{det}}}{1-p_\mathrm{keep}} \\ &\leq \frac{2}{p_\mathrm{keep}} \bigg[\sqrt{\bm{\tilde{n}}_{\mathrm{keep},\mathrm{sifted},Z,(1)}^{\mathrm{err}} \bm{\tilde{n}}_{\mathrm{keep},\mathrm{sifted},X,(1)}^{\mathrm{err}}} + \sqrt{\big(\bm{\tilde{n}}_{\mathrm{keep},\mathrm{sifted},Z,(1)}^{\mathrm{det}}-\bm{\tilde{n}}_{\mathrm{keep},\mathrm{sifted},Z,(1)}^{\mathrm{err}}\big) \big(\bm{\tilde{n}}_{\mathrm{keep},\mathrm{sifted},X,(1)}^{\mathrm{det}}-\bm{\tilde{n}}_{\mathrm{keep},\mathrm{sifted},X,(1)}^{\mathrm{err}}\big)} \bigg],
 \label{eq:N_tilde}
 \end{aligned}
\end{equation}
where we have defined $\bm{\tilde{n}}_{\mathrm{keep},\mathrm{sifted},(1)}^{\mathrm{det}}:=\sum_k \Pr(\bm t_k = \mathrm{keep}\vert  \bm v_1^{k-1})$, $\bm{\tilde{n}}_{\mathrm{trash},\mathrm{sifted},-,(1)}^{\mathrm{det}} := \sum_k \Pr(\bm c_k=-,\bm t_k = \mathrm{trash}| \bm v_1^{k-1})$, $\bm{\tilde{n}}_{\mathrm{keep},\mathrm{sifted},b,(1)}^{\mathrm{err}}:=\sum_k \Pr(\bm c_k=b,\bm t_k = \mathrm{keep},\bm e_k = \mathrm{err}| \bm v_1^{k-1})$ and $\bm{\tilde{n}}_{\mathrm{keep},\mathrm{sifted},b,(1)}^{\mathrm{det}}:=\sum_k \Pr(\bm c_k=b,\bm t_k = \mathrm{keep}| \bm v_1^{k-1})$ for $b \in \{Z,X\}$, where the conditioning is now on the random history $\bm v_1^{k-1}$ for each round $k$. Note that these $\bm{\tilde{n}}$ quantities are random variables representing sums of conditional expectations. Then, {dividing by $\bm{\tilde{n}}_{\mathrm{keep},\mathrm{sifted},(1)}^{\mathrm{det}}$ on both sides and} using the fact that $\bm{\tilde{n}}_{\mathrm{keep},\mathrm{sifted},(1)}^{\mathrm{det}} = 
\bm{\tilde{n}}_{\mathrm{keep},\mathrm{sifted},Z,(1)}^{\mathrm{det}} + \bm{\tilde{n}}_{\mathrm{keep},\mathrm{sifted},X,(1)}^{\mathrm{det}} \geq 2\sqrt{ \bm{\tilde{n}}_{\mathrm{keep},\mathrm{sifted},Z,(1)}^{\mathrm{det}} \bm{\tilde{n}}_{\mathrm{keep},\mathrm{sifted},X,(1)}^{\mathrm{det}}}$, we obtain
\begin{equation}
	\label{eq:finite_formula_2}
	\begin{aligned}
		1 - \frac{2 p_\mathrm{keep} \bm{\tilde{n}}_{\mathrm{trash},\mathrm{sifted},-,(1)}^{\mathrm{det}}}{(1-p_\mathrm{keep}) \bm{\tilde{n}}_{\mathrm{keep},\mathrm{sifted},(1)}^{\mathrm{det}}} &\leq \frac{1}{\sqrt{ \bm{\tilde{n}}_{\mathrm{keep},\mathrm{sifted},Z,(1)}^{\mathrm{det}} \bm{\tilde{n}}_{\mathrm{keep},\mathrm{sifted},X,(1)}^{\mathrm{det}}}} \bigg[\sqrt{\bm{\tilde{n}}_{\mathrm{keep},\mathrm{sifted},Z,(1)}^{\mathrm{err}} \bm{\tilde{n}}_{\mathrm{keep},\mathrm{sifted},X,(1)}^{\mathrm{err}}} \\
        &+ \sqrt{\big(\bm{\tilde{n}}_{\mathrm{keep},\mathrm{sifted},Z,(1)}^{\mathrm{det}}-\bm{\tilde{n}}_{\mathrm{keep},\mathrm{sifted},Z,(1)}^{\mathrm{err}}\big) \big(\bm{\tilde{n}}_{\mathrm{keep},\mathrm{sifted},X,(1)}^{\mathrm{det}}-\bm{\tilde{n}}_{\mathrm{keep},\mathrm{sifted},X,(1)}^{\mathrm{err}}\big)} \bigg] \\
		&= \sqrt{\frac{\bm{\tilde{n}}_{\mathrm{keep},\mathrm{sifted},Z,(1)}^{\mathrm{err}}}{\bm{\tilde{n}}_{\mathrm{keep},\mathrm{sifted},Z,(1)}^{\mathrm{det}}} \frac{\bm{\tilde{n}}_{\mathrm{keep},\mathrm{sifted},X,(1)}^{\mathrm{err}}}{\bm{\tilde{n}}_{\mathrm{keep},\mathrm{sifted},X,(1)}^{\mathrm{det}}}} + \sqrt{\bigg(1-\frac{\bm{\tilde{n}}_{\mathrm{keep},\mathrm{sifted},Z,(1)}^{\mathrm{err}}}{\bm{\tilde{n}}_{\mathrm{keep},\mathrm{sifted},Z,(1)}^{\mathrm{det}}}\bigg) \bigg(1-\frac{\bm{\tilde{n}}_{\mathrm{keep},\mathrm{sifted},X,(1)}^{\mathrm{err}}}{\bm{\tilde{n}}_{\mathrm{keep},\mathrm{sifted},X,(1)}^{\mathrm{det}}}\bigg)}.
	\end{aligned}
\end{equation}
The inequality $z \leq \sqrt{y' y} + \sqrt{1-y'} \sqrt{1-y}$, with $z,y,y' \in [0,1]$, is equivalent to the inequality $G_{-} (y,z) \leq y' \leq G_{+}(y,z)$, where \cite{pereiraQuantumKey2020}
\begin{equation}
	G_-(y,z) =
	\begin{cases}
		g_-(y,z)  & \quad \text{if } y > 1 -z^{2} \\
		0  & \quad \text{otherwise}
	\end{cases}
	\quad \quad
	\textrm{and}
	\quad \quad
	G_+(y,z) =
	\begin{cases}
		g_+(y,z)  & \quad \text{if } y < z^{2} \\
		1  & \quad \text{otherwise}{,}
	\end{cases}
\end{equation}
and
\begin{equation}
\label{eq:g}
	g_{\pm} (y,z) = y + (1-z^2)(1-2y) \pm 2\sqrt{z^2(1-z^2)y(1-y)}.
\end{equation}
Therefore, \cref{eq:finite_formula_2} implies that
\begin{equation}
\label{eq:finite_formula_3}
	\frac{\bm{\tilde{n}}_{\mathrm{keep},\mathrm{sifted},Z,(1)}^{\mathrm{err}}}{\bm{\tilde{n}}_{\mathrm{keep},\mathrm{sifted},Z,(1)}^{\mathrm{det}}} \leq G_+ \bigg(\frac{\bm{\tilde{n}}_{\mathrm{keep},\mathrm{sifted},X,(1)}^{\mathrm{err}}}{\bm{\tilde{n}}_{\mathrm{keep},\mathrm{sifted},X,(1)}^{\mathrm{det}}},	1 - \frac{2 p_\mathrm{keep} \bm{\tilde{n}}_{\mathrm{trash},\mathrm{sifted},-,(1)}^{\mathrm{det}}}{(1-p_\mathrm{keep}) (\bm{\tilde{n}}_{\mathrm{keep},\mathrm{sifted},Z,(1)}^{\mathrm{det}} + \bm{\tilde{n}}_{\mathrm{keep},\mathrm{sifted},X,(1)}^{\mathrm{det}})}\bigg),
\end{equation}
where we have also substituted $\bm{\tilde{n}}_{\mathrm{keep},\mathrm{sifted},(1)}^{\mathrm{det}} = \bm{\tilde{n}}_{\mathrm{keep},\mathrm{sifted},Z,(1)}^{\mathrm{det}} + \bm{\tilde{n}}_{\mathrm{keep},\mathrm{sifted},X,(1)}^{\mathrm{det}}$. We can rewrite this as
\begin{equation}
\label{eq:finite_formula_4}
	\bm{\tilde{n}}_{\mathrm{keep},\mathrm{sifted},Z,(1)}^{\mathrm{err}} \leq \bm{\tilde{n}}_{\mathrm{keep},\mathrm{sifted},Z,(1)}^{\mathrm{det}} G_+ \bigg(\frac{\bm{\tilde{n}}_{\mathrm{keep},\mathrm{sifted},X,(1)}^{\mathrm{err}}}{\bm{\tilde{n}}_{\mathrm{keep},\mathrm{sifted},X,(1)}^{\mathrm{det}}},	1 - \frac{2 p_\mathrm{keep} \bm{\tilde{n}}_{\mathrm{trash},\mathrm{sifted},-,(1)}^{\mathrm{det}}}{(1-p_\mathrm{keep}) (\bm{\tilde{n}}_{\mathrm{keep},\mathrm{sifted},Z,(1)}^{\mathrm{det}} + \bm{\tilde{n}}_{\mathrm{keep},\mathrm{sifted},X,(1)}^{\mathrm{det}})}\bigg).
\end{equation}

Now, note that the difference $\bm{n}_{\mathrm{keep},\mathrm{sifted},b,(1)}^{\mathrm{det}} - \bm{\tilde{n}}_{\mathrm{keep},\mathrm{sifted},b,(1)}^{\mathrm{det}}$ forms a martingale difference sequence with bounded increments in the conditional probability space, where $\bm{n}_{\mathrm{keep},\mathrm{sifted},b,(1)}^{\mathrm{det}}$ is the actual (random) count of detected keep sifted rounds where both users employed {the} basis $b$. The same is true for the other quantities. Thus, since the number of elements in the set of detected sifted single-photon rounds $\vert\mathcal{D}_{\mathrm{sifted}}^{(1)}\vert$ has been fixed, we can apply Azuma's inequality\footnote{{As discussed in \cref{rem:azuma_vs_kato}, a tighter martingale inequality known as Kato's inequality~\cite{katoConcentrationInequality2020,curras-lorenzoTightFinitekey2021} {(see also recent improvement in Ref.~\cite{mannalathQuantumKey2026})} could be applied to replace Azuma's inequality at this step to improve the finite-key performance.}} to these martingales to obtain the following relations
\begin{equation}
\label{eq:azuma_relations}
\begin{gathered}
\bm{\tilde{n}}_{\mathrm{keep},\mathrm{sifted},Z,(1)}^{\mathrm{err}} \underset{\epsilon_A}{\geq} \bm{n}_{\mathrm{keep},\mathrm{sifted},Z,(1)}^{\mathrm{err}} - \Delta'_A, \quad \bm{\tilde{n}}_{\mathrm{keep},\mathrm{sifted},Z,(1)}^{\mathrm{det}} \underset{\epsilon_A}{\leq} \bm{n}_{\mathrm{keep},\mathrm{sifted},Z,(1)}^{\mathrm{det}} + \Delta'_A, \\
\bm{\tilde{n}}_{\mathrm{keep},\mathrm{sifted},X,(1)}^{\mathrm{err}} \underset{\epsilon_A}{\leq} \bm{n}_{\mathrm{keep},\mathrm{sifted},X,(1)}^{\mathrm{err}} + \Delta'_A, \quad \bm{\tilde{n}}_{\mathrm{keep},\mathrm{sifted},X,(1)}^{\mathrm{det}} \underset{\epsilon_A}{\geq} \bm{n}_{\mathrm{keep},\mathrm{sifted},X,(1)}^{\mathrm{det}} - \Delta'_A, \\
\bm{\tilde{n}}_{\mathrm{trash},\mathrm{sifted},-,(1)}^{\mathrm{det}}  \underset{\epsilon_A}{\leq} \bm{n}_{\mathrm{trash},\mathrm{sifted},-,(1)}^{\mathrm{det}} + \Delta'_A,
\end{gathered}
\end{equation}
where $\Delta'_A := \Delta'_A (n_\mathrm{sifted,(1)}^{\mathrm{det}}) = \sqrt{2\bm{n}_\mathrm{sifted,(1)}^{\mathrm{det}} \ln 1/\epsilon_A}$ with $n_\mathrm{sifted,(1)}^{\mathrm{det}} = \vert \mathcal{D}_\mathrm{sifted}^{(1)} \vert$ being the total number of detected sifted rounds, which is fixed in the conditional probability space; $\bm{n}_{\mathrm{trash},\mathrm{sifted},-,(1)}^{\mathrm{det}}$ is the total number of detected sifted rounds in which Alice chose trash mode and obtained outcome $-$; $\bm{n}_{\mathrm{keep},\mathrm{sifted},(1)}^{\mathrm{det}} := \bm{n}_{\mathrm{keep},\mathrm{sifted},Z,(1)}^{\mathrm{det}} + \bm{n}_{\mathrm{keep},\mathrm{sifted},X,(1)}^{\mathrm{det}}$ is the total number of detected keep sifted rounds; and $\bm{n}_{\mathrm{keep},\mathrm{sifted},Z,(1)}^{\mathrm{err}}$, $\bm{n}_{\mathrm{keep},\mathrm{sifted},X,(1)}^{\mathrm{err}}$, $\bm{n}_{\mathrm{keep},\mathrm{sifted},Z,(1)}^{\mathrm{det}}$ and $\bm{n}_{\mathrm{keep},\mathrm{sifted},X,(1)}^{\mathrm{det}}$ are defined above.

Then, using the fact that the RHS of \cref{eq:finite_formula_4} is monotonically increasing with respect to $\bm{\tilde{n}}_{\mathrm{keep},\mathrm{sifted},Z,(1)}^{\mathrm{det}}$ (see proof in Ref.~\cite[Supplementary Note D]{curras-lorenzoSecurityQuantum2026}), and the fact that $G_+(y,z) \leq G_+(\overline{y},\underline{z})$ with $\overline{y} \geq y$ and $\underline{z} \leq z$, and applying the relations in \cref{eq:azuma_relations}, we obtain
\begin{equation}
\begin{aligned}
	&\bm{n}_{\mathrm{keep},\mathrm{sifted},Z,(1)}^{\mathrm{err}} - \Delta'_A \underset{5\epsilon_A}{\leq} (\bm{n}_{\mathrm{keep},\mathrm{sifted},Z,(1)}^{\mathrm{det}} + \Delta'_A) \, \times \\
    &G_+ \bigg(\frac{\bm{n}_{\mathrm{keep},\mathrm{sifted},X,(1)}^{\mathrm{err}} + \Delta'_A}{\bm{n}_{\mathrm{keep},\mathrm{sifted},X,(1)}^{\mathrm{det}} - \Delta'_A},	1 - \frac{2 p_\mathrm{keep} (\bm{n}_{\mathrm{trash},\mathrm{sifted},-,(1)}^{\mathrm{det}} + \Delta'_A)}{(1-p_\mathrm{keep}) (\bm{n}_{\mathrm{keep},\mathrm{sifted},Z,(1)}^{\mathrm{det}}+\bm{n}_{\mathrm{keep},\mathrm{sifted},X,(1)}^{\mathrm{det}})}\bigg).
\label{eq:finite_azuma_p}
\end{aligned}
\end{equation}
Note that when applying {Azuma's} bounds to the sum $\bm{\tilde{n}}_{\mathrm{keep},\mathrm{sifted},Z,(1)}^{\mathrm{det}} + \bm{\tilde{n}}_{\mathrm{keep},\mathrm{sifted},X,(1)}^{\mathrm{det}}$ appearing in the denominator of the second argument of $G_+$, the $\Delta'_A$ terms cancel: $(\bm{n}_{\mathrm{keep},\mathrm{sifted},Z,(1)}^{\mathrm{det}} + \Delta'_A) + (\bm{n}_{\mathrm{keep},\mathrm{sifted},X,(1)}^{\mathrm{det}} - \Delta'_A) = \bm{n}_{\mathrm{keep},\mathrm{sifted},Z,(1)}^{\mathrm{det}} + \bm{n}_{\mathrm{keep},\mathrm{sifted},X,(1)}^{\mathrm{det}}$.

Since the bound in \cref{eq:finite_azuma_p} holds with probability at least $1-5\epsilon_A$ in the conditional probability space for any particular realization of $\mathcal{D}^{(1)}_\mathrm{sifted}$, by the law of total probability, it also holds with probability at least $1-5\epsilon_A$ in the unconditional probability space where $\mathcal{D}^{(1)}_\mathrm{sifted}$ itself is random. Importantly, since $\bm{n}_\mathrm{sifted,(1)}^{\mathrm{det}} = |\mathcal{D}^{(1)}_\mathrm{sifted}|$ is random, the {Azuma's} deviation term $\Delta'_A = \sqrt{2\bm{n}_\mathrm{sifted,(1)}^{\mathrm{det}} \ln 1/\epsilon_A}$ is now also a random variable that depends on the number of detected sifted rounds.

Finally, we note that $\bm{n}_{\mathrm{trash},\mathrm{sifted},-,(1)}^{\mathrm{det}} \leq \bm{n}_{\mathrm{trash},\mathrm{sifted},-,(1)}$, where $\bm{n}_{\mathrm{trash},\mathrm{sifted},-,(1)}$ is the total number of sifted rounds in which Alice chose trash mode and obtained outcome $-$, irrespective of whether they are detected or not. Also,  $\bm{n}_\mathrm{sifted,(1)}^{\mathrm{det}} \leq \bm{n}_\mathrm{sifted}^{\mathrm{det}}$. Using both of these bounds and the fact that $G_+(y,z)$ is monotonically decreasing in $z$, we obtain
\begin{equation}
\begin{aligned}
	&\bm{n}_{\mathrm{keep},\mathrm{sifted},Z,(1)}^{\mathrm{err}} - \Delta_A \underset{5\epsilon_A}{\leq} (\bm{n}_{\mathrm{keep},\mathrm{sifted},Z,(1)}^{\mathrm{det}} + \Delta_A) \, \times\\ 
    &G_+ \bigg(\frac{\bm{n}_{\mathrm{keep},\mathrm{sifted},X,(1)}^{\mathrm{err}} + \Delta_A}{\bm{n}_{\mathrm{keep},\mathrm{sifted},X,(1)}^{\mathrm{det}} - \Delta_A},	1 - \frac{2 p_\mathrm{keep} (\bm{n}_{\mathrm{trash},\mathrm{sifted},-,(1)} + \Delta_A)}{(1-p_\mathrm{keep}) (\bm{n}_{\mathrm{keep},\mathrm{sifted},Z,(1)}^{\mathrm{det}}+\bm{n}_{\mathrm{keep},\mathrm{sifted},X,(1)}^{\mathrm{det}})}\bigg),
\label{eq:finite_azuma_final}
\end{aligned}
\end{equation}
where $\Delta_A = \sqrt{2\bm{n}_\mathrm{sifted}^{\mathrm{det}} \ln 1/\epsilon_A}$, as we wanted to prove. We note that this bound is meaningful only when the arguments of $G_+$ are well-defined and in {the interval} $[0,1]$, in other cases, one could simply trivially replace it by the deterministic bound $\bm{n}_{\mathrm{keep},\mathrm{sifted},Z,(1)}^{\mathrm{err}} \leq \bm{n}_{\mathrm{keep},\mathrm{sifted},Z,(1)}^{\mathrm{det}}$.

\end{proof}

\subsection{Proof of \texorpdfstring{\cref{lem:bernstein_by_groups_decoy}}{Lemma 2}}    \label{appsub:proof_bernstein_by_groups_decoy}

\begin{proof}
For simplicity, we assume that the total number of rounds $N$ is a multiple of $(l_c+1)$, where $l_c$ is the source correlation length. If this does not hold, a very similar result can be derived by making trivial modifications.  Partition the rounds into $(l_c+1)$ sets according to $I_w = \{k : k \equiv w\mod{l_c+1}\}$, and express
\begin{equation}
\bm{n}_{\mathrm{trash},\mathrm{sifted},-,(1)} = \sum_{w=0}^{l_c} \bm{n}_{\mathrm{trash},\mathrm{sifted},-,(1)}^{(I_w)},
\end{equation}
where
\begin{equation}
\label{eq:n_trash_Iw_decoy}
\bm{n}_{\mathrm{trash},\mathrm{sifted},-,(1)}^{(I_w)} = \sum_{k \in I_w} \bm{X_k},
\end{equation}
and $\bm{X_k}$ is defined in \cref{eq:X_k_def}. Our objective is to show that
    \begin{equation}
        \label{eq:bound_n_trash_Iw_decoy}\Pr[\bm{n}_{\mathrm{trash},\mathrm{sifted},-,(1)}^{(I_w)} > n^\mathrm{U}_{\mathrm{trash},\mathrm{sifted},-,(1)}/(l_c+1)] \leq \epsilon_C,
    \end{equation}
    where $n^\mathrm{U}_{\mathrm{trash},\mathrm{sifted},-,(1)}$ is defined in \cref{eq:n_trash_U_decoy}. 
Then, we directly obtain \cref{eq:bound_n_trash_decoy} via the union bound.

 Let us now prove \cref{eq:bound_n_trash_Iw_decoy} for a particular $w \in \{0,1,\ldots,l_c\}$. For this, we define the following scenario:

    \vspace{3pt}

    \begin{mdframed}

    \noindent\textbf{$w$-th Modified Coin Measuring Scenario}

    \begin{enumerate}
        \item Alice prepares her global quantum coin state $\ket{\Psi_{N}}_{C_1^N A_1^N I_1^N M_1^N T_1^N}$.
        \item For each round $k \not\in I_w$, Alice measures the coin system $C_k$ in the $\{\ket{Z},\ket{X}\}$ basis to obtain $\alpha_k$, and then measures system $A_k$ in the $\{\ket{0_{\alpha_k}},\ket{1_{\alpha_k}}\}$ basis to obtain $a_k$.

        \item For each round $k \in I_w$, Alice measures the photon number system $M_k$ to learn the photon number $m$. If $m = 1$, Alice assigns the round to "trash" with probability $(1-p_\mathrm{keep})$, and to "sifted" with probability $1/2$. Then, if the round is assigned to both "trash" and "sifted", Alice measures the coin system $C_k$ in the $\{\ket{+}, \ket{-}\}$ basis.
    \end{enumerate}

    \end{mdframed}

    \vspace{5pt}

    Note that the measurements on the rounds $k \notin I_w$ act on different subsystems than the measurements on the rounds $k \in I_w$. Because of this, these measurements commute, and the \textit{a priori} distribution of $\bm{n}_{\mathrm{trash},\mathrm{sifted},-,(1)}^{(I_w)}$ must be the same in the Simple Coin Measuring Scenario {and} in the $w$-th Modified Coin Measuring Scenario. Thus, if we prove that \cref{eq:bound_n_trash_Iw_decoy} holds in the latter scenario, it must also hold in the former. From this point on, we consider the $w$-th Modified Coin Measuring Scenario.

Let $a_{I_{\bar w}}$ and $\alpha_{I_{\bar w}}$ denote the bit and basis outcomes of all the measurements on the rounds not in $I_w$. Once we condition on a particular outcome $\bm{a_{I_{\bar w}}} = a_{I_{\bar w}}$ and $\bm{\alpha_{I_{\bar w}}} = \alpha_{I_{\bar w}}$, the postmeasurement global state (after tracing out the systems already measured) factorizes as
    \begin{equation}
    \label{eq:post_meas_decoy}    \ket{\Psi_{a_{I_{\bar w}}, \alpha_{I_{\bar w}}}}_{C_{I_w} A_{I_w} I_1^N M_1^N T_1^N} = \bigotimes_{k \in I_w} \ket{\hat \Psi_{\mathrm{coin}|a_{I_{\bar w}}, \alpha_{I_{\bar w}}}^{(k)}}_{C_k A_k I_k^{k+l_c} M_k^{k+l_c} T_k^{k+l_c}},
    \end{equation}
    where we have used that, since consecutive elements of $I_w$ are separated by exactly $l_c+1$ rounds, the blocks $\{k,k+1,\ldots,k+l_c\}$ with $k \in I_w$ partition $\{1,\ldots,N\}$, so that every register $I_j M_j T_j$ belongs to exactly one factor. Here, for $k \in I_w$,
    \begin{equation}
    \label{eq:non_recursive_def_decoy}
    \begin{gathered}
        \ket{\hat \Psi_{\mathrm{coin}|a_{I_{\bar w}}, \alpha_{I_{\bar w}}}^{(k)}}_{C_k A_k I_k^{k+l_c} M_k^{k+l_c} T_k^{k+l_c}} = \frac{1}{\sqrt 2} \sum_{\alpha_k \in \{Z,X\}} \ket{\alpha_k}_{C_k} \ket{\hat \Psi_{\alpha_k|a_{I_{\bar w}}, \alpha_{I_{\bar w}}}^{(k)}}_{A_k I_k^{k+l_c} M_k^{k+l_c} T_k^{k+l_c}},\\[4pt] \ket{\hat \Psi_{\alpha_k|a_{I_{\bar w}}, \alpha_{I_{\bar w}}}^{(k)}}_{A_k I_k^{k+l_c} M_k^{k+l_c} T_k^{k+l_c}} = \frac{1}{\sqrt 2} \sum_{a_k \in \{0,1\}} \ket{(a_k)_{\alpha_k}}_{A_k} \ket*{\Psi_{a_{k-l_c}^k,\alpha_{k-l_c}^k}}_{I_k M_k T_k} \otimes \ket{\vec{\Psi}_{a_{\bar k},\alpha_{\bar k},a_k,\alpha_k}}_{I_{k+1}^{k+l_c} M_{k+1}^{k+l_c} T_{k+1}^{k+l_c}},
    \end{gathered}
    \end{equation}
    where $\ket*{\Psi_{a_{k-l_c}^k,\alpha_{k-l_c}^k}}_{I_k M_k T_k}$ is given by \cref{eq:round_state_decoy} and
    \begin{equation}
    \label{eq:vecPsi_decoy}
        \ket{\vec{\Psi}_{a_{\bar k},\alpha_{\bar k},a_k,\alpha_k}}_{I_{k+1}^{k+l_c} M_{k+1}^{k+l_c} T_{k+1}^{k+l_c}} = \bigotimes_{j=k+1}^{k+l_c} \ket*{\Psi_{a_{j-l_c}^j,\alpha_{j-l_c}^j}}_{I_j M_j T_j}.
    \end{equation}

    Thus, crucially, the measurement events in $I_w$ are now independent Bernoulli trials. Formally, under the conditional probability measure
    \begin{equation}
    \label{eq:cond_prob_meas_decoy}
        \Pr[\,\cdot \mid a_{I_{\bar w}}, \alpha_{I_{\bar w}}],
    \end{equation}
    $\{\bm{X_k}\}_{k \in I_w}$ is a sequence of independent Bernoulli trials, since {it corresponds} to the outcome of measurements on systems $M_k$ and $C_k$ for each $k \in I_w$ in \cref{eq:post_meas_decoy}, and the state of these systems factorizes.

    For each $k \in I_w$, we have $E[\bm{X_k} \mid a_{I_{\bar w}}, \alpha_{I_{\bar w}}] = p_1 \cdot \frac{1}{2}(1-p_\mathrm{keep}) \cdot p_{C_k = - \vert m_k=1,\mathrm{trash},\mathrm{sifted}}^{(k)}$, where $p_1 = \sum_\mu p_\mu p_{1|\mu}$ is the probability of obtaining $m_k=1$ when measuring system $M_k$. {To evaluate the last factor, note that projecting $M_k$ onto $\ket{1}$ in \cref{eq:non_recursive_def_decoy} leaves, on register $I_k$, the normalized state $\sum_\mu \sqrt{p_\mu p_{1|\mu}/p_1}\,\ket{\hat\mu}_{I_k}$, which is independent of $(a_k,\alpha_k)$ because the intensity modulation is assumed ideal. This factor therefore decouples and contributes an overlap of unity, and the normalized postmeasurement state is
    \begin{equation}
    \label{eq:postselected_state_decoy}
    \begin{gathered}
        \ket{\hat \Psi_{\mathrm{coin}|a_{I_{\bar w}}, \alpha_{I_{\bar w}},m_k=1}^{(k)}}_{C_k A_k T_k I_{k+1}^{k+l_c} M_{k+1}^{k+l_c} T_{k+1}^{k+l_c}} = \frac{1}{\sqrt 2} \sum_{\alpha_k \in \{Z,X\}} \ket{\alpha_k}_{C_k} \ket{\hat \Psi_{\alpha_k|a_{I_{\bar w}}, \alpha_{I_{\bar w}},m_k=1}^{(k)}}_{A_k T_k I_{k+1}^{k+l_c} M_{k+1}^{k+l_c} T_{k+1}^{k+l_c}},\\[4pt] \ket{\hat \Psi_{\alpha_k|a_{I_{\bar w}}, \alpha_{I_{\bar w}},m_k=1}^{(k)}}_{A_k T_k I_{k+1}^{k+l_c} M_{k+1}^{k+l_c} T_{k+1}^{k+l_c}} = \frac{1}{\sqrt 2} \sum_{a_k \in \{0,1\}} \ket{(a_k)_{\alpha_k}}_{A_k} \ket*{\Xi_{a_k,\alpha_k \vert a_{\bar k},\alpha_{\bar k}}}_{T_k I_{k+1}^{k+l_c} M_{k+1}^{k+l_c} T_{k+1}^{k+l_c}},
    \end{gathered}
    \end{equation}
    with $\ket{\Xi_{a_k,\alpha_k\vert a_{\bar k},\alpha_{\bar k}}}$ defined in \cref{eq:Xi_decoy}. Using $\braket*{(a_k)_{\alpha_k}}{(a'_k)_{\alpha'_k}} = (-1)^{a_k a'_k}/\sqrt{2}$ for $\alpha_k \neq \alpha'_k$, we then obtain}
    \begin{equation}
    \label{eq:p_coin_minus_decoy}
    \begin{aligned}
     p_{C_k = - \vert m_k=1,\mathrm{trash},\mathrm{sifted}}^{(k)}
     &= \norm{{}_{C_k}\!\braket{-} {\hat \Psi_{\mathrm{coin}|a_{I_{\bar w}}, \alpha_{I_{\bar w}},m_k=1}^{(k)}}_{C_k A_k T_k I_{k+1}^{k+l_c} M_{k+1}^{k+l_c} T_{k+1}^{k+l_c}}}^2 \\
     &= \frac{1}{2} \bigg(1-\Re \braket{\hat \Psi_{X|a_{I_{\bar w}}, \alpha_{I_{\bar w}},m_k=1}^{(k)}}{\hat \Psi_{Z|a_{I_{\bar w}}, \alpha_{I_{\bar w}},m_k=1}^{(k)}}\bigg) \\
     &= \frac{1}{2} \Re \Bigg(1 - \frac{1}{2\sqrt{2}} \Big[\braket{\Xi_{0,X\vert a_{\bar k},\alpha_{\bar k}}}{\Xi_{0,Z\vert a_{\bar k},\alpha_{\bar k}}} + \braket{\Xi_{0,X\vert a_{\bar k},\alpha_{\bar k}}}{\Xi_{1,Z\vert a_{\bar k},\alpha_{\bar k}}} \\
     &\qquad\qquad\qquad + \braket{\Xi_{1,X\vert a_{\bar k},\alpha_{\bar k}}}{\Xi_{0,Z\vert a_{\bar k},\alpha_{\bar k}}} - \braket{\Xi_{1,X\vert a_{\bar k},\alpha_{\bar k}}}{\Xi_{1,Z\vert a_{\bar k},\alpha_{\bar k}}}\Big] \Bigg),
    \end{aligned}
    \end{equation}
    where the subscript $m_k=1$ indicates that we have postselected on the single-photon component. Therefore, we have that
    \begin{equation}
    \label{eq:EXk_bound_decoy}
        E[\bm{X_k} \mid a_{I_{\bar w}}, \alpha_{I_{\bar w}}] \leq \frac{1}{2} p_1 (1-p_\mathrm{keep}) \Delta_{\rm coin}^{\rm U},
    \end{equation}
    where $\Delta_{\rm coin}^{\rm U}$ is defined in \cref{eq:Delta_coin_U_decoy}.

    As already argued, under the conditional probability measure in \cref{eq:cond_prob_meas_decoy}, the RV $\bm{n}_{\mathrm{trash},\mathrm{sifted},-,(1)}^{(I_w)}$ defined in \cref{eq:n_trash_Iw_decoy} is a sum of independent Bernoulli RVs. Thus, by Bernstein's inequality (Lemma \ref{lem:bernstein} in \cref{app:concentration_ineqs}), we have that
    \begin{equation}
    \label{eq:bernstein_applied_decoy}
    \begin{aligned}
\Pr\bigg[\bm{n}_{\mathrm{trash},\mathrm{sifted},-,(1)}^{(I_w)} &> E[\bm{n}_{\mathrm{trash},\mathrm{sifted},-,(1)}^{(I_w)} \vert a_{I_{\bar w}}, \alpha_{I_{\bar w}}] \\
        &+ \sqrt{2\, E[\bm{n}_{\mathrm{trash},\mathrm{sifted},-,(1)}^{(I_w)} \vert a_{I_{\bar w}}, \alpha_{I_{\bar w}}]\, \ln \tfrac{1}{\epsilon_C}} + \tfrac{2}{3}\ln\tfrac{1}{\epsilon_C} \;\Big\vert\; a_{I_{\bar w}}, \alpha_{I_{\bar w}}\bigg] \leq \epsilon_C.
    \end{aligned}
    \end{equation}

    Note that, from \cref{eq:EXk_bound_decoy},
    \begin{equation}
    \label{eq:E_n_trash_Iw_decoy}
        E[\bm{n}_{\mathrm{trash},\mathrm{sifted},-,(1)}^{(I_w)}\vert a_{I_{\bar w}}, \alpha_{I_{\bar w}}] = \sum_{k \in I_w} E[\bm{X_k} \vert a_{I_{\bar w}}, \alpha_{I_{\bar w}}] \leq \frac{N}{l_c+1} \frac{1}{2} p_1 (1-p_\mathrm{keep}) \Delta_{\rm coin}^{\rm U}.
    \end{equation}
    Combining \cref{eq:bernstein_applied_decoy} with \cref{eq:E_n_trash_Iw_decoy}, and using that $\mu \mapsto \mu + \sqrt{2\mu \ln(1/\epsilon_C)} + \tfrac{2}{3}\ln(1/\epsilon_C)$ is monotonically increasing in $\mu$, we have that
    \begin{equation}
    \begin{aligned}
        \Pr\Big[\bm{n}_{\mathrm{trash},\mathrm{sifted},-,(1)}^{(I_w)} &> \frac{N}{l_c+1} \frac{1}{2} p_1 (1-p_\mathrm{keep}) \Delta_{\rm coin}^{\rm U} + \sqrt{\frac{N\, p_1 (1-p_\mathrm{keep}) \Delta_{\rm coin}^{\rm U}}{l_c+1}\, \ln \tfrac{1}{\epsilon_C}} + \tfrac{2}{3}\ln\tfrac{1}{\epsilon_C} \;\Big\vert\; a_{I_{\bar w}}, \alpha_{I_{\bar w}}\Big] \leq \epsilon_C.
    \end{aligned}
    \end{equation}

    Using that $\abs{I_w} = \frac{N}{l_c+1}$ and the definition of $n_{\mathrm{trash},\mathrm{sifted},-,(1)}^\mathrm{U}$ in \cref{eq:n_trash_U_decoy}, we obtain
    \begin{equation}
        \Pr[\bm{n}_{\mathrm{trash},\mathrm{sifted},-,(1)}^{(I_w)} > n_{\mathrm{trash},\mathrm{sifted},-,(1)}^\mathrm{U}/(l_c+1) \Big\vert  a_{I_{\bar w}}, \alpha_{I_{\bar w}}] \leq \epsilon_C.
    \end{equation}

    Then, \cref{eq:bound_n_trash_Iw_decoy} directly follows, since
    \begin{equation}
    \begin{aligned}
     &\Pr[\bm{n}_{\mathrm{trash},\mathrm{sifted},-,(1)}^{(I_w)} > n_{\mathrm{trash},\mathrm{sifted},-,(1)}^\mathrm{U}/(l_c+1)] \\
     &= \sum_{a_{I_{\bar w}}, \alpha_{I_{\bar w}}} \Pr[a_{I_{\bar w}}, \alpha_{I_{\bar w}}]\Pr[\bm{n}_{\mathrm{trash},\mathrm{sifted},-,(1)}^{(I_w)} > n_{\mathrm{trash},\mathrm{sifted},-,(1)}^\mathrm{U}/(l_c+1) \Big\vert a_{I_{\bar w}}, \alpha_{I_{\bar w}}] \leq \epsilon_C.
     \end{aligned}
    \end{equation}

\end{proof}

\subsection{Proof of \texorpdfstring{\cref{thm:phase_error_bound_quantum_coin_decoy}}{Theorem 2}}
\label{appsub:proof:thm:phase_error_bound_quantum_coin_decoy}
\begin{proof}

We prove this by considering the complement of the failure events and applying a union bound. First, we define the "good" events:
\begin{itemize}
    \item $\Omega_A$: the event where the bound from Lemma~\ref{lem:quantum_coin_inequality_decoy} holds;
    \item $\Omega_C$: the event where the bound from Lemma~\ref{lem:bernstein_by_groups_decoy} holds;
    \item $\Omega_{\rm dec}$: the complement of $\Omega_{\rm dec,fail}$, i.e., the event where the decoy-state bounds hold.
\end{itemize}

From the lemmas, we have that (where the superscript $c$ denotes the complement of an event)
\begin{align}
    \Pr[\Omega_A^c] \leq 5\epsilon_A, \qquad
    \Pr[\Omega_C^c] \leq (l_c+1)\epsilon_C, \qquad
    \Pr[\Omega_{\rm dec}^c] \leq \epsilon_{\rm decoy}.
\end{align}

On the event $\Omega_A \cap \Omega_C \cap \Omega_{\rm dec}$, we have:
\begin{itemize}
    \item From $\Omega_A$: the inequality from \eqref{eq:finite_azuma_decoy} holds.

    \item From $\Omega_C$: $\bm{n}_{\mathrm{trash},\mathrm{sifted},-,(1)} \leq n_{\mathrm{trash},\mathrm{sifted},-,(1)}^\mathrm{U}$;

    \item From $\Omega_{\rm dec}$: 
    \begin{align}
        \bm{n}_{\mathrm{keep},\mathrm{sifted},Z,(1)}^{\mathrm{det,L}} &\leq \bm{n}_{\mathrm{keep},\mathrm{sifted},Z,(1)}^{\mathrm{det}} \leq \bm{n}_{\mathrm{keep},\mathrm{sifted},Z,(1)}^{\mathrm{det,U}}, \\
        \bm{n}_{\mathrm{keep},\mathrm{sifted},X,(1)}^{\mathrm{det,L}} &\leq \bm{n}_{\mathrm{keep},\mathrm{sifted},X,(1)}^{\mathrm{det}}, \\
        \bm{n}_{\mathrm{keep},\mathrm{sifted},X,(1)}^{\mathrm{err}} &\leq \bm{n}_{\mathrm{keep},\mathrm{sifted},X,(1)}^{\mathrm{err,U}};
    \end{align}    
\end{itemize}

Since $G_+(x,y)$ is increasing in $x$ and decreasing in $y$, it is clear that the RHS of \cref{eq:finite_azuma_decoy} is increasing in $\bm{n}_{\mathrm{keep},\mathrm{sifted},X,(1)}^{\mathrm{err}}$ and $\bm{n}_{\mathrm{trash},\mathrm{sifted},-,(1)}$, and decreasing in $\bm{n}_{\mathrm{keep},\mathrm{sifted},X,(1)}^{\mathrm{det}}$. Moreover, as proven in Ref.~\cite[Supplementary Note D]{curras-lorenzoSecurityQuantum2026}, the RHS of \cref{eq:finite_azuma_decoy} is increasing in $\bm{n}_{\mathrm{keep},\mathrm{sifted},Z,(1)}^{\mathrm{det}}$. Using all these monotonicity conditions, on the event $\Omega_A \cap \Omega_C \cap \Omega_{\rm dec}$, we can upper bound the right-hand side of \eqref{eq:finite_azuma_decoy} by substituting:
\begin{itemize}
    \item $\bm{n}_{\mathrm{keep},\mathrm{sifted},Z,(1)}^{\mathrm{det}} \to \bm{n}_{\mathrm{keep},\mathrm{sifted},Z,(1)}^{\mathrm{det,U}};$ 
    \item $\bm{n}_{\mathrm{keep},\mathrm{sifted},X,(1)}^{\mathrm{det}} \to \bm{n}_{\mathrm{keep},\mathrm{sifted},X,(1)}^{\mathrm{det,L}};$ 
    \item $\bm{n}_{\mathrm{keep},\mathrm{sifted},X,(1)}^{\mathrm{err}} \to \bm{n}_{\mathrm{keep},\mathrm{sifted},X,(1)}^{\mathrm{err,U}};$ 
    \item $\bm{n}_{\mathrm{trash},\mathrm{sifted},-,(1)} \to n_{\mathrm{trash},\mathrm{sifted},-,(1)}^\mathrm{U}.$
\end{itemize}

This results in 
\begin{equation}
\begin{aligned}
    &\bm{n}_{\mathrm{keep},\mathrm{sifted},Z,(1)}^{\mathrm{err}} \leq (\bm{n}_{\mathrm{keep},\mathrm{sifted},Z,(1)}^{\mathrm{det,U}} + \Delta_A) \, \times \\
   & G_+ \left(\frac{\bm{n}_{\mathrm{keep},\mathrm{sifted},X,(1)}^{\mathrm{err,U}} + \Delta_A}{\bm{n}_{\mathrm{keep},\mathrm{sifted},X,(1)}^{\mathrm{det,L}} - \Delta_A}, 1 - \frac{2 p_{\mathrm{keep}}(n_{\mathrm{trash},\mathrm{sifted},-,(1)}^\mathrm{U} + \Delta_A)}{(1-p_{\mathrm{keep}}) (\bm{n}_{\mathrm{keep},\mathrm{sifted},Z,(1)}^{\mathrm{det,U}}+\bm{n}_{\mathrm{keep},\mathrm{sifted},X,(1)}^{\mathrm{det,L}})}\right) + \Delta_A
   \end{aligned}
\end{equation}
on the event $\Omega_A \cap \Omega_C \cap \Omega_{\rm dec}$. Note that on $\Omega_{\rm dec}$, we have in particular that $\bm{n}_{\mathrm{keep},\mathrm{sifted},Z,(1)}^{\mathrm{det}} \geq \bm{n}_{\mathrm{keep},\mathrm{sifted},Z,(1)}^{\mathrm{det,L}}$. Dividing both sides of the above inequality by $\bm{n}_{\mathrm{keep},\mathrm{sifted},Z,(1)}^{\mathrm{det}}$ and using the fact that $\bm{n}_{\mathrm{keep},\mathrm{sifted},Z,(1)}^{\mathrm{det}} \geq \bm{n}_{\mathrm{keep},\mathrm{sifted},Z,(1)}^{\mathrm{det,L}}$, we obtain
\begin{equation}
\begin{aligned}
    &\bm{\ephone} = \frac{\bm{n}_{\mathrm{keep},\mathrm{sifted},Z,(1)}^{\mathrm{err}}}{\bm{n}_{\mathrm{keep},\mathrm{sifted},Z,(1)}^{\mathrm{det}}} \leq \frac{\bm{n}_{\mathrm{keep},\mathrm{sifted},Z,(1)}^{\mathrm{det,U}} + \Delta_A}{\bm{n}_{\mathrm{keep},\mathrm{sifted},Z,(1)}^{\mathrm{det,L}}} \, \times\\
    &G_+ \left(\frac{\bm{n}_{\mathrm{keep},\mathrm{sifted},X,(1)}^{\mathrm{err,U}} + \Delta_A}{\bm{n}_{\mathrm{keep},\mathrm{sifted},X,(1)}^{\mathrm{det,L}} - \Delta_A}, 1 - \frac{2 p_{\mathrm{keep}}(n_{\mathrm{trash},\mathrm{sifted},-,(1)}^\mathrm{U} + \Delta_A)}{(1-p_{\mathrm{keep}}) (\bm{n}_{\mathrm{keep},\mathrm{sifted},Z,(1)}^{\mathrm{det,U}}+\bm{n}_{\mathrm{keep},\mathrm{sifted},X,(1)}^{\mathrm{det,L}})}\right) + \frac{\Delta_A}{\bm{n}_{\mathrm{keep},\mathrm{sifted},Z,(1)}^{\mathrm{det,L}}} = \mathcal{E}_{\mathrm{ph},(1)}(\bm{\vec{n}};\epsilon_A,\epsilon_C)
\end{aligned}
\end{equation}
on $\Omega_A \cap \Omega_C \cap \Omega_{\rm dec}$. Therefore, we have shown that
\begin{equation}
    \Omega_A \cap \Omega_C \cap \Omega_{\rm dec} \subseteq \{\bm{\ephone} \leq \mathcal{E}_{\mathrm{ph},(1)}(\bm{\vec{n}};\epsilon_A,\epsilon_C)\}.
\end{equation}
Taking complements and intersecting both sides with $\Omega_{\rm dec}$, we have
\begin{equation}
\label{eq:set_containment}
    \{\bm{\ephone} > \mathcal{E}_{\mathrm{ph},(1)}(\bm{\vec{n}};\epsilon_A,\epsilon_C)\} \cap \Omega_{\rm dec} \subseteq \Omega_A^c \cup \Omega_C^c.
\end{equation}

Finally, applying the union bound:
\begin{equation}
\begin{aligned}
    &\Pr[\bm{n}_{\mathrm{keep},\mathrm{sifted},Z,(1)}^{\mathrm{det}} < \bm{n}_{\mathrm{keep},\mathrm{sifted},Z,(1)}^{\mathrm{det,L}} \cup \bm{\ephone} > \mathcal{E}_{\mathrm{ph},(1)}(\bm{\vec{n}};\epsilon_A,\epsilon_C)] \\
    &\leq \Pr[\Omega_{\rm dec}^c] + \Pr[\bm{\ephone} > \mathcal{E}_{\mathrm{ph},(1)}(\bm{\vec{n}};\epsilon_A,\epsilon_C) \cap \Omega_{\rm dec}] \\
    &\leq \Pr[\Omega_{\rm dec}^c] + \Pr[\Omega_A^c \cup \Omega_C^c] \\
    &\leq \epsilon_{\rm decoy} + \Pr[\Omega_A^c] + \Pr[\Omega_C^c] \\
    &= \epsilon_{\rm decoy} + 5\epsilon_A + (l_c+1)\epsilon_C,
\end{aligned}
\end{equation}
where the first inequality uses $\{\bm{n}_{\mathrm{keep},\mathrm{sifted},Z,(1)}^{\mathrm{det}} < \bm{n}_{\mathrm{keep},\mathrm{sifted},Z,(1)}^{\mathrm{det,L}}\} \subseteq \Omega_{\rm dec}^c$ and the second inequality uses \cref{eq:set_containment}.
\end{proof}

\subsection{Proof of \texorpdfstring{\cref{lem:bound_Delta_coin_decoy}}{Lemma 5}}
\label{appsub:proof:lem:bound_Delta_coin_decoy}

\begin{proof}
We bound $\Delta_{\rm coin}^{\rm U}$ by analyzing the overlaps appearing in its definition in \cref{eq:Delta_coin_U_decoy}. Recall from \cref{eq:Xi_decoy} that
\begin{equation}
    \ket*{\Xi_{a_k,\alpha_k \mid a_{\bar{k}},\alpha_{\bar{k}}}}_{T_k\, I_{k+1}^{k+l_c}\, M_{k+1}^{k+l_c}\, T_{k+1}^{k+l_c}}
    =
    \ket*{\psi_{a_{k-l_c}^{k},\,\alpha_{k-l_c}^{k}}^{(1)}}_{T_k}
    \;\otimes
    \bigotimes_{j=k+1}^{k+l_c}
        \ket*{\Psi_{a_{j-l_c}^{j},\,\alpha_{j-l_c}^{j}}}_{I_j M_j T_j},
\end{equation}
where $a_{\bar{k}} = (a_{k-l_c}^{k-1}, a_{k+1}^{k+l_c})$ and $\alpha_{\bar{k}} = (\alpha_{k-l_c}^{k-1}, \alpha_{k+1}^{k+l_c})$. This state factorizes into the single-photon state of round~$k$ and the full purification states of the subsequent $l_c$ rounds. Accordingly, overlaps between $\Xi$~states that differ only in $(a_k,\alpha_k)$ factorize as
\begin{equation}
\label{eq:Xi_overlap_factorization}
    \braket*{\Xi_{a'_k,\alpha'_k \mid a_{\bar{k}},\alpha_{\bar{k}}}}{\Xi_{a_k,\alpha_k \mid a_{\bar{k}},\alpha_{\bar{k}}}}_{T_k\, I_{k+1}^{k+l_c}\, M_{k+1}^{k+l_c}\, T_{k+1}^{k+l_c}}
    =
    \braket*{\psi_{a_{k-l_c}^{k-1}a'_k,\,\alpha_{k-l_c}^{k-1}\alpha'_k}^{(1)}}{\psi_{a_{k-l_c}^{k-1}a_k,\,\alpha_{k-l_c}^{k-1}\alpha_k}^{(1)}}_{T_k}
    \;\times\;
    \prod_{j=k+1}^{k+l_c} O_j(a_k,\alpha_k;\,a'_k,\alpha'_k),
\end{equation}
where $O_j$ denotes the overlap of the purifications for round $j$ when varying the $k$-th bit and basis choices.

Consider first the round-$k$ factor. Using the model in \cref{eq:correlations_model_state_truncated,eq:correlations_model_phase_truncated}, the single-photon overlap for round~$k$ depends only on the difference of encoding phases. Since both states share the same history $a_{k-l_c}^{k-1},\alpha_{k-l_c}^{k-1}$, the correlation contributions cancel so that
\begin{equation}
\label{eq:round_k_overlap}
    \braket*{\psi_{a_{k-l_c}^{k-1}a'_k,\,\alpha_{k-l_c}^{k-1}\alpha'_k}^{(1)}}{\psi_{a_{k-l_c}^{k-1}a_k,\,\alpha_{k-l_c}^{k-1}\alpha_k}^{(1)}}_{T_k}
    = \cos\!\big(\hat\theta_{a_k,\alpha_k} - \hat\theta_{a'_k,\alpha'_k}\big)
    = \braket*{(a'_k)_{\alpha'_k}}{(a_k)_{\alpha_k}},
\end{equation}
recovering the ideal BB84 overlap.

We now turn to the later-round overlaps. For each subsequent round $j = k+1,\ldots,k+l_c$, the two purifications being compared share all history entries except $(a_k,\alpha_k)$ versus $(a'_k,\alpha'_k)$. Define the overlap of their single-photon components as
\begin{equation}
\label{eq:gamma_def}
    \gamma_{a_k,\alpha_k,a'_k,\alpha'_k}^{(j-k)}
    \coloneqq
    \cos\!\big(\delta_{a_k,\alpha_k}^{(j-k)} - \delta_{a'_k,\alpha'_k}^{(j-k)}\big)
    \geq \cos\!\big(\Delta_{j-k}\big),
\end{equation}
where the inequality follows directly from the assumption in \cref{eq:correlations_model_delta_Delta_exp_truncated}. For an $m$-photon state of the form in \cref{eq:correlations_model_state_truncated}, which consists of $m$ identical photons, the overlap is $\big(\gamma_{a_k,\alpha_k,a'_k,\alpha'_k}^{(j-k)}\big)^m$. 

Combining this with the definition of the purifications in \cref{eq:round_state_decoy} and using the Poisson probability generating function $\sum_{m=0}^{\infty} p_{m|\mu}\, t^m = e^{-\mu(1-t)}$, we obtain
\begin{equation}
\label{eq:full_round_overlap}
    O_j(a_k,\alpha_k;\,a'_k,\alpha'_k)
    = \sum_{\mu} p_{\mu} \sum_{m=0}^{\infty} p_{m|\mu}\, \big(\gamma_{a_k,\alpha_k,a'_k,\alpha'_k}^{(j-k)}\big)^m
    = \sum_{\mu} p_{\mu}\, \exp\!\Big[-\mu\,\big(1 - \gamma_{a_k,\alpha_k,a'_k,\alpha'_k}^{(j-k)}\big)\Big].
\end{equation}
Since $\gamma_{a_k,\alpha_k,a'_k,\alpha'_k}^{(j-k)} \geq \cos(\Delta_{j-k})$ and $t \mapsto e^{-\mu(1-t)}$ is monotonically increasing, we have
\begin{equation}
\label{eq:full_round_overlap_bound}
    O_j(a_k,\alpha_k;\,a'_k,\alpha'_k)
    \geq \sum_{\mu} p_{\mu}\, \exp\!\Big[-\mu\,\big(1 - \cos(\Delta_{j-k})\big)\Big].
\end{equation}

It remains to evaluate the coin parameter. We want to compute the sum appearing in \cref{eq:Delta_coin_U_decoy}, i.e.,
\begin{equation}
    S \coloneqq \frac{1}{2\sqrt{2}} \sum_{a,a' \in \{0,1\}} (-1)^{aa'}\, \braket*{\Xi_{a,X \mid a_{\bar{k}},\alpha_{\bar{k}}}}{\Xi_{a',Z \mid a_{\bar{k}},\alpha_{\bar{k}}}}.
\end{equation}
Using the factorization in \cref{eq:Xi_overlap_factorization}, the ideal BB84 overlaps in \cref{eq:round_k_overlap}, and noting that $\braket{0_X}{0_Z} = \braket{0_X}{1_Z} = \braket{1_X}{0_Z} = 1/\sqrt{2}$ and $\braket{1_X}{1_Z} = -1/\sqrt{2}$, the factor of $(-1)^{aa'}$ exactly compensates for the sign of each overlap, obtaining
\begin{equation}
    S = \frac{1}{4} \sum_{a,a' \in \{0,1\}} \prod_{j=k+1}^{k+l_c} O_j\big(a,X;\,a',Z\big).
\end{equation}
Since every term in the sum is bounded below by \cref{eq:full_round_overlap_bound}, we obtain
\begin{equation}
    S \geq \prod_{l=1}^{l_c} \sum_{\mu} p_{\mu}\, \exp\!\Big[-\mu\,\big(1 - \cos(\Delta_{l})\big)\Big].
\end{equation}
Finally, since $\Delta_{\rm coin}^{\rm U} = \max_{k,\,a_{\bar{k}},\alpha_{\bar{k}}} \frac{1}{2}\Re(1 - S)$ and $S$ is real, the bound in \cref{eq:Delta_coin_U_decoy_bound} follows.
\end{proof}

\subsection{Proof of \texorpdfstring{\cref{lem:trace_distance_exponential_decay_decoy}}{Lemma 6}}
\label{appsub:proof:lem:trace_distance_exponential_decay_decoy}

\begin{proof}
Since both states are pure, the trace distance equals $T = \sqrt{1 - F^2}$, where $F = \big\lvert\!\braket*{\Psi_N^{(l_c)}}{\Psi_N^{(\infty)}}_{C_1^N A_1^N I_1^N M_1^N T_1^N}\!\big\rvert$ is the fidelity. The objective is thus to bound $F$ from below.

We begin by computing the inner product of the global states. Both $\ket*{\Psi_N^{(\infty)}}_{C_1^N A_1^N I_1^N M_1^N T_1^N}$ and $\ket*{\Psi_N^{(l_c)}}_{C_1^N A_1^N I_1^N M_1^N T_1^N}$ have the structure given in \cref{eq:global_state_decoy_correlations}, with identical coin, qubit, and intensity/photon-number registers $C_k$, $A_k$, $I_k$, $M_k$---they differ only in the photonic states $\ket*{\psi_{a_1^k,\alpha_1^k}^{(m)}}_{T_k}$ versus $\ket*{\psi_{a_{k-l_c}^k,\alpha_{k-l_c}^k}^{(m)}}_{T_k}$. Since the states $\{\ket{\alpha_k}_{C_k}\ket{(a_k)_{\alpha_k}}_{A_k}\}_{a_k,\alpha_k}$ are orthonormal, the inner product collapses to a single sum,
\begin{equation}
\label{eq:global_overlap}
\braket*{\Psi_N^{(l_c)}}{\Psi_N^{(\infty)}}_{C_1^N A_1^N I_1^N M_1^N T_1^N} = \sum_{\alpha_1^N,\, a_1^N} \frac{1}{4^N} \prod_{k=1}^{N} \braket*{\Psi_{a_{k-l_c}^k,\,\alpha_{k-l_c}^k}}{\Psi_{a_1^k,\,\alpha_1^k}}_{I_k M_k T_k},
\end{equation}
where the factor $1/4^N$ arises because each round contributes a coefficient $1/2$ in the global state \cref{eq:global_state_decoy_correlations}, giving $(1/2)^2 = 1/4$ per round in the squared amplitude.

We now evaluate each per-round overlap. Using the definition of the purifications in \cref{eq:round_state_decoy} and the orthonormality of the intensity and photon-number bases, the overlap for round~$k$ factorizes as
\begin{equation}
\label{eq:round_overlap}
\braket*{\Psi_{a_{k-l_c}^k,\,\alpha_{k-l_c}^k}}{\Psi_{a_1^k,\,\alpha_1^k}}_{I_k M_k T_k} = \sum_{\mu} p_{\mu} \sum_{m=0}^{\infty} p_{m|\mu}\, \braket*{\psi_{a_{k-l_c}^k,\,\alpha_{k-l_c}^k}^{(m)}}{\psi_{a_1^k,\,\alpha_1^k}^{(m)}}_{T_k}.
\end{equation}
The two $m$-photon states share the same bit and basis values $(a_k,\alpha_k)$ and the same recent history $(a_{k-l_c}^{k-1},\alpha_{k-l_c}^{k-1})$; they differ only in the contributions from rounds more than $l_c$ steps in the past. In the truncated model, these long-range contributions are evaluated at the fixed reference choices $(a^\ast,\alpha^\ast)$. From \cref{eq:correlations_model_phase,eq:correlations_model_phase_truncated}, the resulting phase difference is
\begin{equation}
\label{eq:phase_difference}
\Delta\theta_k \coloneqq \theta_{a_1^k,\alpha_1^k} - \theta_{a_{k-l_c}^k,\alpha_{k-l_c}^k} = \sum_{l=l_c+1}^{k-1} \left( \delta_{a_{k-l},\alpha_{k-l}}^{(l)} - \delta_{a^\ast,\alpha^\ast}^{(l)} \right).
\end{equation}
Using the pairwise correlation bound in \cref{eq:correlations_model_delta_Delta_exp_truncated}, we obtain
\begin{equation}
\label{eq:xi_tail_decoy}
\abs{\Delta\theta_k} \leq \sum_{l=l_c+1}^{\infty}\Delta_l \eqqcolon \Delta_{\mathrm{tail}}(l_c).
\end{equation}

To proceed, we need the $m$-photon overlap. By the correlations model in \cref{eq:correlations_model_state,eq:correlations_model_state_truncated}, both states have the symmetric Fock-space form $\ket*{\psi^{(m)}_\theta} = \frac{1}{\sqrt{m!}}(\cos\theta\, a_0^\dagger + \sin\theta\, a_1^\dagger)^m\ket{\mathrm{vac}}$, with angles $\theta$ and $\theta + \Delta\theta_k$ respectively. Expanding in the two-mode Fock basis $\ket{m-j,\,j}$, one has
\begin{equation}
\ket*{\psi^{(m)}_\theta} = \sum_{j=0}^{m} \sqrt{\binom{m}{j}}\, \cos^{m-j}\!\theta\;\sin^j\!\theta\;\ket{m-j,\,j},
\end{equation}
so the overlap between states with angles $\theta$ and $\theta' = \theta + \Delta\theta_k$ is
\begin{equation}
\label{eq:m_photon_overlap}
\braket*{\psi^{(m)}_{\theta'}}{\psi^{(m)}_\theta} = \sum_{j=0}^{m} \binom{m}{j} \big(\cos\theta'\cos\theta\big)^{m-j} \big(\sin\theta'\sin\theta\big)^{j} = \big(\cos\theta'\cos\theta + \sin\theta'\sin\theta\big)^m = \cos^m(\Delta\theta_k),
\end{equation}
where the second step uses the binomial theorem and the third uses the angle-subtraction identity for cosine.

Substituting \cref{eq:m_photon_overlap} into \cref{eq:round_overlap} and using the Poisson probability generating function $\sum_{m=0}^{\infty} p_{m|\mu} t^m = e^{-\mu(1-t)}$, we obtain
\begin{equation}
\braket*{\Psi_{a_{k-l_c}^k,\,\alpha_{k-l_c}^k}}{\Psi_{a_1^k,\,\alpha_1^k}}_{I_k M_k T_k} = \sum_{\mu} p_{\mu}\, e^{-\mu(1-\cos\Delta\theta_k)} .
\end{equation}
Since $1-\cos x \leq x^2/2$ for all real $x$, and since $\abs{\Delta\theta_k}\leq \Delta_{\mathrm{tail}}(l_c)$, this is bounded below by
\begin{equation}
\braket*{\Psi_{a_{k-l_c}^k,\,\alpha_{k-l_c}^k}}{\Psi_{a_1^k,\,\alpha_1^k}}_{I_k M_k T_k} \geq c\!\left(\Delta_{\mathrm{tail}}(l_c)\right), \qquad c(x) \coloneqq \sum_{\mu} p_{\mu}\, e^{-\mu x^2/2}.
\end{equation}

Since every per-round overlap is bounded below by $c(\Delta_{\mathrm{tail}}(l_c))$, \cref{eq:global_overlap} gives
\begin{equation}
F \geq \sum_{\alpha_1^N,\, a_1^N} \frac{1}{4^N} \prod_{k=1}^{N} c\!\left(\Delta_{\mathrm{tail}}(l_c)\right) = c\!\left(\Delta_{\mathrm{tail}}(l_c)\right)^N ,
\end{equation}
where the sum evaluates to unity by normalization (there are $4^N$ terms, each weighted by $1/4^N$).

We now convert this fidelity bound into a trace distance bound. Using $1 - t^2 \leq 2(1-t)$ for $t \in [0,1]$, we have $T = \sqrt{1 - F^2} \leq \sqrt{2(1-F)}$. To bound $1-F$, we write $c = c(\Delta_{\mathrm{tail}}(l_c))$ and apply Bernoulli's inequality $c^N = (1-(1-c))^N \geq 1-N(1-c)$, which gives $1-F \leq 1-c^N \leq N(1-c)$. Therefore
\begin{equation}
T \leq \sqrt{2N\left(1-c\!\left(\Delta_{\mathrm{tail}}(l_c)\right)\right)}.
\end{equation}
It remains to simplify $1-c(x)$. Using $1-e^{-y}\leq y$ for $y\geq0$, we have
\begin{equation}
1-c(x) = \sum_{\mu} p_{\mu} \left(1-e^{-\mu x^2/2}\right) \leq \sum_{\mu}p_{\mu}\frac{\mu x^2}{2} = \frac{\bar{\mu}x^2}{2}.
\end{equation}
Substituting $x=\Delta_{\mathrm{tail}}(l_c)$, we obtain
\begin{equation}
T \leq \sqrt{2N\cdot\frac{\bar{\mu}\left(\Delta_{\mathrm{tail}}(l_c)\right)^2}{2}} = \sqrt{N\bar{\mu}}\, \Delta_{\mathrm{tail}}(l_c),
\end{equation}
which establishes \cref{eq:trace_distance_asymp_decoy}.

Finally, we evaluate the tail sum explicitly. The tail sum in \cref{eq:xi_tail_decoy} is a geometric series:
\begin{equation}
\Delta_{\mathrm{tail}}(l_c) = \sum_{l=l_c+1}^{\infty}\Delta_l = \sum_{l=l_c+1}^{\infty} \Delta_1\,e^{-\lambda(l-1)} = \frac{\Delta_1\,e^{-\lambda l_c}}{1-e^{-\lambda}}.
\end{equation}
Substituting this into the trace-distance bound gives
\begin{equation}
\label{eq:trace_distance_asymp_decoy_proof}
T\Big(\ketbra*{\Psi_N^{(\infty)}}, \ketbra*{\Psi_N^{(l_c)}}\Big) \leq \frac{\sqrt{N\bar{\mu}}\,\Delta_1\, e^{-\lambda l_c}}{1-e^{-\lambda}}.
\end{equation}
Thus, to ensure $T\leq d$, it suffices to choose
\begin{equation}
\label{eq:l_c_formula_stable_decoy_proof}
l_c \geq \frac{1}{\lambda} \ln\!\left(\frac{\sqrt{N\bar{\mu}}\,\Delta_1}{d(1-e^{-\lambda})}\right).
\end{equation}
\end{proof}

{ \section{Reduction to single-photon sources}
\label{app:single_photon}

The main text presents the security proof for a decoy-state implementation, in which Alice emits phase-randomized {weak} coherent states that are diagonal in the Fock basis and therefore mixed even after conditioning on all of her setting choices. In this appendix we show how to apply the analysis to the conceptually simpler case of an ideal single-photon source, in which Alice emits exactly one photon per round.

In this case, three simplifications occur. First, the intensity register $I_k$ and the photon-number register $M_k$ are no longer needed, since the emitted state is pure conditional on the bit-and-basis history. Second, {now Alice sends exactly one photon in all rounds}, so the single-photon detection and error counts (which in the decoy-state case must be estimated via the {decoy-state} method) become directly observable, and the decoy-state estimation step (\cref{lem:decoy_state_bounds} of the main text) is eliminated. Third, the quantum-coin and bad-coin analyses simplify, with $p_1 \to 1$ and the Poisson average over intensities collapsing to a simple single-photon overlap. We emphasize that the resulting proof still uses the new quantum-coin approach of the main text, and in particular does \emph{not} partition the rounds into $l_c+1$ groups for the phase-error estimation step; it therefore also improves, in the single-photon setting, upon the group-partitioning analysis of Ref.~\cite{curras-lorenzoRigorousPhaseerrorestimation2026}.

We organize {this} appendix to follow the structure of the main text, indicating in each case the modifications relative to the decoy-state analysis. Throughout, single-photon counterparts of the quantities defined in the main text are written without the photon-number subscript $(1)$, since every round is now a single-photon round.

\subsection{Alice's source}
\label{appsub:sp_source}

In each round $k$, Alice independently selects a bit $a_k \in \{0,1\}$ and a basis $\alpha_k \in \{Z,X\}$, both uniformly at random. Based on these choices, she prepares the single-photon pure state
\begin{equation}
\label{eq:sp_state}
    \ket*{\psi_{a_1^k,\alpha_1^k}}_{T_k},
\end{equation}
which, due to the correlations introduced by the bit-and-basis encoder, depends not only on the current choices $a_k$ {and} $\alpha_k$ but also on the histories $a_1^{k-1}$ and $\alpha_1^{k-1}$ of previous choices.

Introducing, for each round $k$, a coin register $C_k$ encoding the basis choice and a qubit register $A_k$ encoding the bit in the corresponding basis, the global source-replacement state has the same structure as \cref{eq:global_state_decoy_correlations}, but without the intensity register $I_k$ and the photon-number register $M_k$:
\begin{equation}
\label{eq:sp_global_state}
    \ket{\Psi_N}_{C_1^N A_1^N T_1^N}
    = \sum_{\alpha_1^N \in \{Z,X\}^N}\;\sum_{a_1^N \in \{0,1\}^N}
    \bigotimes_{k=1}^{N} \frac{1}{2}\,
    \ket{\alpha_k}_{C_k}\,
    \ket{(a_k)_{\alpha_k}}_{A_k}\,
    \ket*{\psi_{a_1^k,\alpha_1^k}}_{T_k}.
\end{equation}
If Alice measures {$C_1^N$} in the $\{\ket{Z},\ket{X}\}$ basis and {$A_1^N$} in the basis determined by the outcome $\alpha_k$, the post-measurement state on $T_1^N$ is exactly the state the actual protocol would have produced for the same sequence of outcomes.

Bob's receiver is unchanged: he randomly selects between the two three-outcome POVMs $\vec\Gamma_Z$ and $\vec\Gamma_X$, and we retain the standard assumption of basis-independent detection efficiency, $\Gamma_\bot^{(Z)} = \Gamma_\bot^{(X)} \eqqcolon \Gamma_\bot$.

\subsection{Protocol and equivalent scenarios}
\label{appsub:sp_protocol}

The \emph{Actual protocol} is similar to that in the main text, except that the intensity choice $\mu_k$ and its announcement are removed throughout. Crucially, the keep/trash announcement step is retained, to enable the quantum-coin argument. Applying the source-replacement technique of \cref{eq:sp_global_state} together with the same decomposition of Bob's measurement into a basis-independent filter $\{F,\mathbb{I}-F\}$ followed by a two-outcome POVM $\{G_0^{(\beta_k)},G_1^{(\beta_k)}\}$ on detected rounds, we obtain the following equivalent protocol. Relative to the {\emph{Actual protocol (source-replaced)}} of the main text, the intensity-register measurement and announcement are removed, and the photon-number-register measurement disappears.

\vspace{3pt}
\begin{mdframed}
{\noindent\textbf{Actual protocol (source replaced) --- single photon}
\begin{enumerate}
    \item \textit{State preparation:} Alice prepares $\ket{\Psi_N}_{C_1^N A_1^N T_1^N}$ in \cref{eq:sp_global_state} and sends the photonic systems $T_1^N$ {to Bob} through the quantum channel.

    \item \textit{Eve's attack:} Eve performs the most general attack on $T_1^N$, re-sends systems $B_1^N$ to Bob, and keeps an ancilla $E$.

    \item \textit{Detection announcement:} For each round $k$, Bob applies the basis-independent filter $\{F,\mathbb{I}-F\}$ to $B_k$ and publicly announces the set of detected rounds $\mathcal{D}\subseteq\{1,\ldots,N\}$.

    \item \textit{Keep/trash and sifting decisions:} For each round $k$, Alice decides ``keep'' with probability $p_{\mathrm{keep}}$ or ``trash'' otherwise, and Bob independently decides ``sifted'' or ``unsifted'' with equal probability. For {the} detected rounds, Alice announces her keep/trash decision; for rounds in $\mathcal{D}_\mathrm{trash}$, Bob announces his sifted/unsifted decision.

    \item \textit{Coin measurements and basis announcement:}
    \begin{enumerate}
        \item For each round $k$ where Alice chose ``trash'' and Bob chose ``sifted'', Alice measures her coin register $C_k$ in the $\{\ket{+},\ket{-}\}$ basis.
        \item For each detected keep round $k\in\mathcal{D}_\mathrm{keep}$, Alice measures $C_k$ in the $\{\ket{Z},\ket{X}\}$ basis to obtain $\alpha_k$ and announces it. Bob announces his basis $\beta_k$: if he chose ``sifted'' then $\beta_k=\alpha_k$, otherwise $\beta_k\neq\alpha_k$.
    \end{enumerate}

    \item \textit{Round classification:} Alice and Bob determine $\mathcal{D}_{\mathrm{keep},\mathrm{sifted},Z}$ and $\mathcal{D}_{\mathrm{keep},\mathrm{sifted},X}$, the sets of detected keep sifted rounds with $\alpha_k=\beta_k=Z$ and $\alpha_k=\beta_k=X$, respectively.

    \item \textit{Test-round measurements:} For each round in $\mathcal{D}_{\mathrm{keep},\mathrm{sifted},X}$, Alice measures $A_k$ in the $\{\ket{0_X},\ket{1_X}\}$ basis, Bob performs $\{G_0^{(X)},G_1^{(X)}\}$, and they announce their bit outcomes $a_k$ and $b_k$.

    \item \textit{Sifted-key measurements:} For each round in $\mathcal{D}_{\mathrm{keep},\mathrm{sifted},Z}$, Alice measures $A_k$ in the $\{\ket{0_Z},\ket{1_Z}\}$ basis and Bob performs $\{G_0^{(Z)},G_1^{(Z)}\}$. They define their sifted keys from the outcomes $a_k$ and $b_k$.

    \item[9-11.] Same as the \textit{Actual Protocol} (variable-length decision, error correction and verification, privacy amplification).
\end{enumerate}}
\end{mdframed}
\vspace{5pt}

As in the main text, the Hadamard-basis measurement of the trash-sifted coin registers (step~5a) does not affect any announced information (its outcome is never revealed) and is introduced solely for the phase-error estimation argument. The associated \emph{Phase-error estimation protocol} replaces the sifted-key measurement by the conjugate ($X$-basis) measurement on the key rounds {in step 8}:

\vspace{3pt}
\begin{mdframed}
{\noindent\textbf{Phase-error estimation protocol --- single photon}
\begin{enumerate}
    \item[1-7.] Same as \emph{Actual protocol (source replaced) --- single photon}.

    \item[8.] \textit{Phase-error measurements:} For each round in $\mathcal{D}_{\mathrm{keep},\mathrm{sifted},Z}$, Alice measures her qubit register $A_k$ in the $\{\ket{0_X},\ket{1_X}\}$ basis and Bob performs $\{G_0^{(X)},G_1^{(X)}\}$. We denote by $\bm{n_K}$ the number of rounds in $\mathcal{D}_{\mathrm{keep},\mathrm{sifted},Z}$ (i.e.\ the sifted-key length) and by $\bm{\eph}$ the fraction of those rounds in which $a_k\neq b_k$.
\end{enumerate}}
\end{mdframed}
\vspace{5pt}

Exactly as in the main text, the \textit{Actual protocol (source replaced) --- single photon} is statistically equivalent to the single-photon \textit{Actual protocol}, so proving security of the former establishes security of the latter.

\subsection{Variable-length security statement}
\label{appsub:sp_security_statement}

Now, since every detected round contains exactly one photon, the sifted-key length $\bm{n_K}=\abs{\mathcal{D}_{\mathrm{keep},\mathrm{sifted},Z}}$ is itself part of the announced data vector. There is therefore no need to lower-bound a single-photon subset of the key, and the statistical guarantee involves only the phase-error rate. The variable-length security statement simplifies as follows.

\begin{theorem}[Variable-length security via EUR --- single photon]
\label{thm:sp_variable_length_security}
Suppose that, for any initial global source-replacement state $\ket{\Psi_N}_{C_1^N A_1^N T_1^N}$, the \emph{Phase-error estimation protocol --- single photon} satisfies a bound of the form
\begin{equation}
\label{eq:sp_statistical_guarantee}
    \Pr[\,\bm{\eph} > \mathcal{E}_\mathrm{ph}(\bm{\vec{n}})\,]\leq \varepsilon_\mathrm{PE},
\end{equation}
where $\mathcal{E}_\mathrm{ph}(\vec n)$ is a function of the observed data vector $\vec n$. Let $\lambda_\mathrm{EC}(\vec n)$ be the number of bits revealed in error correction, and let
\begin{equation}
\label{eq:sp_final_key_length}
    l(\vec n) = \max\!\Big(0,\;
        n_K\,\big[1-h\big(\mathcal{E}_\mathrm{ph}(\vec n)\big)\big]
        - \lambda_\mathrm{EC}(\vec n)
        - 2\log_2 \tfrac{1}{2\varepsilon_\mathrm{PA}}
        - \log_2 \tfrac{2}{\varepsilon_\mathrm{EV}}\Big),
\end{equation}
where $n_K$ is the observed sifted-key length and $h(\cdot)$ is the binary entropy function (with $h(x)=1$ for $x>1/2$). Then, if Alice and Bob run the \emph{Actual protocol (source replaced) --- single photon} with this choice of $\lambda_\mathrm{EC}$ and $l$, the output key is $(2\sqrt{\varepsilon_\mathrm{PE}}+\varepsilon_\mathrm{PA}+\varepsilon_\mathrm{EV})$-secure.
\end{theorem}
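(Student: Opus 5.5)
The plan is to mirror the proof of \cref{thm:variable_length_security_decoy} in \cref{appsub:proof_thm:variable_length_security_decoy}, simplified because there is no unannounced photon-number vector $\bm{\vec m}$ to condition on. I will let $W$ be the classical register holding $\bm{\vec n}$ and $\Omega(\vec n)$ the event $\bm{\vec n}=\vec n$. Then $\rho_{|\Omega(\vec n)}$ denotes the state of the \emph{Actual protocol (source replaced) --- single photon} just before step~9, and $\rho^{\mathrm{virt}}_{|\Omega(\vec n)}$ the corresponding state at the end of the \emph{Phase-error estimation protocol --- single photon}. I will also define $\kappa(\vec n)\coloneqq\Pr[\bm{\eph}>\mathcal{E}_\mathrm{ph}(\bm{\vec n})\mid\Omega(\vec n)]$. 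The two protocols coincide through step~7 and differ only in which basis Alice and Bob measure $A_k$ and $B_k$ on the rounds in $\mathcal{D}_{\mathrm{keep},\mathrm{sifted},Z}$. Hence they induce the same distribution of $\bm{\vec n}$ and the same conditional pre-measurement state, which is exactly what the EUR needs.

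Conditioned on $\Omega(\vec n)$, the sifted-key length $n_K$ is fixed by $\vec n$. I will apply the EUR with smoothing $\sqrt{\kappa(\vec n)}$, as in \cite[Theorem 3]{tupkaryPhaseError2025}, to all $n_K$ key rounds directly. No restriction to a single-photon subset is needed, so the first inequality of the decoy proof, $H_{\min}(Z_A^{n_K})\ge H_{\min}(Z_A^{n_{K,(1)}})$, drops out. This gives
\[
H^{\sqrt{\kappa(\vec n)}}_{\min}\big(Z_A^{n_K}\mid WE\big)_{\rho_{|\Omega(\vec n)}}\ \ge\ n_K-H^{\sqrt{\kappa(\vec n)}}_{\max}\big(X_A^{n_K}\mid X_B^{n_K}\big)_{\rho^{\mathrm{virt}}_{|\Omega(\vec n)}}\ \ge\ n_K\big[1-h(\mathcal{E}_\mathrm{ph}(\vec n))\big].
\]
Here the last step uses the standard bound on the smooth max-entropy when, with conditional probability at least $1-\kappa(\vec n)$, the bitwise disagreement rate between $X_A^{n_K}$ and $X_B^{n_K}$ is at most $\mathcal{E}_\mathrm{ph}(\vec n)$. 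The convention $h(x)=1$ for $x>1/2$ makes the bound trivially valid in the remaining regime.

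Next I average the smoothing parameters. By definition of $\kappa$ and \cref{eq:sp_statistical_guarantee}, $\sum_{\vec n}\Pr(\Omega(\vec n))\,\kappa(\vec n)=\Pr[\bm{\eph}>\mathcal{E}_\mathrm{ph}(\bm{\vec n})]\le\varepsilon_\mathrm{PE}$. I will then invoke \cite[Theorem 4]{tupkaryPhaseError2025} with a trivial second index, identifying $i\mapsto\vec n$, $\beta_i\mapsto n_K[1-h(\mathcal{E}_\mathrm{ph}(\vec n))]$, $\kappa_i\mapsto\kappa(\vec n)$, $\varepsilon_\mathrm{AT}^2\mapsto\varepsilon_\mathrm{PE}$, $\vec Z\mapsto Z_A^{n_K}$, $\vec C\mapsto W$ and $\vec E\mapsto E$. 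With the key length in \cref{eq:sp_final_key_length}, privacy amplification via the leftover hash lemma contributes $\varepsilon_\mathrm{PA}$. The $\lambda_\mathrm{EC}$ and $\log_2(2/\varepsilon_\mathrm{EV})$ leaked bits are subtracted by the chain rule, error verification contributes $\varepsilon_\mathrm{EV}$ for correctness, and the averaged smoothing contributes $2\sqrt{\varepsilon_\mathrm{PE}}$. Statistical equivalence to the single-photon \emph{Actual protocol} then transfers the guarantee.

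The main obstacle is confirming that the hypotheses of the cited EUR and variable-length theorems are met. The EUR requires the key-round measurements to be conjugate qubit measurements on $A_k$ together with a valid Bob-side measurement for the virtual protocol. Both are provided by the source-replaced structure and the basis-independent filter $F$, which makes the detected set and $n_K$ independent of which basis Bob uses on the key rounds. Everything else is a direct specialization of the decoy-state argument.
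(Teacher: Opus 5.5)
Your proposal is correct and follows essentially the same route as the paper: you specialize the decoy-state proof of \cref{thm:variable_length_security_decoy} by dropping $\vec m$ and $S_{\vec n}$. You then apply the EUR of \cite[Theorem 3]{tupkaryPhaseError2025} directly to all $n_K$ key rounds with smoothing $\sqrt{\kappa(\vec n)}$, bound the averaged smoothing parameter by $\varepsilon_\mathrm{PE}$, and invoke \cite[Theorem 4]{tupkaryPhaseError2025} with the same identifications (you spell out the trivial second index, which the paper leaves implicit when it says the remaining steps carry over verbatim).
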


\begin{proof} 
The argument is identical to that of \cref{thm:variable_length_security_decoy} (see \cref{appsub:proof_thm:variable_length_security_decoy}), with two simplifications. Since all detected rounds are single-photon, there is no photon-number outcome vector $\vec m$ and no need to introduce the set $S_{\vec n}$: the entropic uncertainty relation is applied directly to all $n_K$ key rounds, conditional on the observed data $\bm{\vec n}=\vec n$ with smoothing parameter $\kappa(\vec n)\coloneqq\Pr[\bm\eph>\mathcal{E}_\mathrm{ph}(\vec n)\mid \Omega(\vec n)]$, obtaining
\begin{equation}
    H^{\sqrt{\kappa(\vec n)}}_\mathrm{min}\big(Z_A^{n_K}\mid WE\big)_{\rho_{\vert\Omega(\vec n)}}
    \geq n_K\big[1-h\big(\mathcal{E}_\mathrm{ph}(\vec n)\big)\big], \qquad \forall \vec n .
\end{equation}
Since $n_K$ is directly observed (it is a component of $\vec n$), one uses it in place of the lower bound $\mathcal{N}_{K,(1)}(\vec n)$. The remaining steps---bounding the average smoothing parameter by $\varepsilon_\mathrm{PE}$ through \cref{eq:sp_statistical_guarantee}, and invoking the variable-length leftover-hashing argument of \cite[Theorem 4]{tupkaryPhaseError2025}---carry over verbatim.
\end{proof}

\subsection{Quantum coin inequality}
\label{appsub:sp_quantum_coin}

To obtain a bound of the form in \cref{eq:sp_statistical_guarantee}, we again use the quantum-coin argument. The single-photon counterpart of \cref{lem:quantum_coin_inequality_decoy} reads as follows.

\begin{lemma}[Single-photon quantum coin inequality]
\label{lem:sp_quantum_coin_inequality}
Consider the \emph{Phase-error estimation protocol --- single photon}. Let:
\begin{itemize}
    \item $\bm{n}_{\mathrm{keep},\mathrm{sifted},Z}^{\mathrm{err}}$ = number of (phase) errors in detected {keep-sifted} rounds with $\alpha_k=\beta_k=Z$ (= number of phase errors $\bm{\nph}$);
    \item $\bm{n}_{\mathrm{keep},\mathrm{sifted},Z}^{\mathrm{det}}$ = number of detected {keep-sifted} rounds with $\alpha_k=\beta_k=Z$ (= $\bm{n_K}$);
    \item $\bm{n}_{\mathrm{keep},\mathrm{sifted},X}^{\mathrm{err}}$ = number of errors in detected {keep-sifted} rounds with $\alpha_k=\beta_k=X$;
    \item $\bm{n}_{\mathrm{keep},\mathrm{sifted},X}^{\mathrm{det}}$ = number of detected {keep-sifted} rounds with $\alpha_k=\beta_k=X$;
    \item $\bm{n}_\mathrm{sifted}^{\mathrm{det}}$ = total number of detected sifted rounds (keep and trash);
    \item $\bm{n}_{\mathrm{trash},\mathrm{sifted},-}$ = number of trash sifted rounds where Alice obtained $\ket{-}_C$.
\end{itemize}
Then, for any $\epsilon_A>0$, with $\Delta_A=\sqrt{2\,\bm{n}_\mathrm{sifted}^{\mathrm{det}}\ln(1/\epsilon_A)}$,
\begin{equation}
\begin{aligned}
\label{eq:sp_finite_azuma}
    &\bm{n}_{\mathrm{keep},\mathrm{sifted},Z}^{\mathrm{err}}
    \underset{5\epsilon_A}{\leq} \big(\bm{n}_{\mathrm{keep},\mathrm{sifted},Z}^{\mathrm{det}} + \Delta_A\big)\, \times \\
&G_+\!\left(\frac{\bm{n}_{\mathrm{keep},\mathrm{sifted},X}^{\mathrm{err}} + \Delta_A}{\bm{n}_{\mathrm{keep},\mathrm{sifted},X}^{\mathrm{det}} - \Delta_A},\;
        1 - \frac{2 p_{\mathrm{keep}}\big(\bm{n}_{\mathrm{trash},\mathrm{sifted},-} + \Delta_A\big)}
                 {(1-p_{\mathrm{keep}})\big(\bm{n}_{\mathrm{keep},\mathrm{sifted},Z}^{\mathrm{det}} + \bm{n}_{\mathrm{keep},\mathrm{sifted},X}^{\mathrm{det}}\big)}
    \right) + \Delta_A
    \eqqcolon \bm{n}_\mathrm{ph}^{\mathrm{U}},
\end{aligned}
\end{equation}
where $G_+$ is the function defined in \cref{eq:g}. Consequently, the phase-error rate is bounded as
\begin{equation}
    \Pr\!\left[\,\bm{\eph} > \frac{\bm{n}_\mathrm{ph}^{\mathrm{U}}}{\bm{n}_{\mathrm{keep},\mathrm{sifted},Z}^{\mathrm{det}}}\,\right] \leq 5\epsilon_A .
\end{equation}
\end{lemma}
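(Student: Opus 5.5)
The plan is to rerun the proof of \cref{lem:quantum_coin_inequality_decoy} (Appendix~\ref{appsub:proof_quantum_coin_inequality_decoy}), dropping everything tied to the photon-number register $M_k$. Every round is now a single-photon round, so the set $\mathcal{D}_\mathrm{sifted}^{(1)}$ is simply the set of detected sifted rounds $\mathcal{D}_\mathrm{sifted}$. Its size is $\bm{n}_\mathrm{sifted}^{\mathrm{det}}$, so the deviation term $\Delta_A'$ from that proof already equals $\Delta_A$, and the final relaxation $\bm{n}_\mathrm{sifted,(1)}^{\mathrm{det}}\le\bm{n}_\mathrm{sifted}^{\mathrm{det}}$ becomes an equality.

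\textbf{Reordering.} First I would set up the same reordered scenario:
\begin{enumerate}
\item Draw all keep/trash and sifted/unsifted tags at the start, and fix $\mathcal{D}_\mathrm{sifted}$.
\item On every $k\in\mathcal{D}_\mathrm{sifted}$, apply the basis-independent joint POVM $\{\hat m_\mathrm{err},\mathbb{I}-\hat m_\mathrm{err}\}$ on $A_kB_k$, with $\hat m_\mathrm{err}=\ketbra{0_X}\otimes G_1^{(X)}+\ketbra{1_X}\otimes G_0^{(X)}$.
\item Only then measure $C_k$, in $\{\ket{Z},\ket{X}\}$ on keep rounds and in $\{\ket{+},\ket{-}\}$ on trash rounds.
\end{enumerate}
This reordering is justified exactly as before. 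In the \emph{Phase-error estimation protocol --- single photon}, steps 7 and 8 apply the same $X$-basis measurement on both bases. The error measurement also commutes with the coin measurement because they act on disjoint systems. Measuring $A_kB_k$ on trash-sifted rounds and discarding the outcome is a local operation that leaves all other systems unchanged.

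\textbf{Per-round inequality.} I would condition on $\mathcal{D}_\mathrm{sifted}$ and on the history $v_1^{k-1}$. The state of $C_k$ given $\bm e_k=\mathrm{err}$ (respectively $\overline{\mathrm{err}}$) is still a qubit, so the Bloch-sphere inequality \cref{eq:bloch_sphere_general} applies. Next I would use independence of $\bm t_k$ from $\bm e_k$ to convert conditional probabilities into joint ones, as in \cref{eq:PXcresult,eq:PZcresult}. Adding the err and $\overline{\mathrm{err}}$ inequalities and substituting $1=\Pr(\bm t_k=\mathrm{keep}\mid v_1^{k-1})/p_\mathrm{keep}$ gives the per-round analogue of \cref{eq:expresion_cond_detection1}.

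\textbf{Summation and inversion.} I would sum over $k$, using Cauchy--Schwarz for $\sum_k\sqrt{x_ky_k}$, and then divide using $\tilde n_Z+\tilde n_X\ge2\sqrt{\tilde n_Z\tilde n_X}$. Inverting $z\le\sqrt{yy'}+\sqrt{(1-y)(1-y')}$ via $G_+$ yields \cref{eq:finite_azuma_4}-type bounds on the conditional-expectation sums $\tilde{\bm n}$.

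\textbf{Concentration and conclusion.} Each $\bm n-\tilde{\bm n}$ is a martingale with increments bounded by one over $|\mathcal{D}_\mathrm{sifted}|$ steps, so Azuma's inequality (\cref{lem:azuma}) applies to five quantities: $n^\mathrm{err}_Z$, $n^\mathrm{det}_Z$, $n^\mathrm{err}_X$, $n^\mathrm{det}_X$ and $n_{\mathrm{trash},\mathrm{sifted},-}^{\mathrm{det}}$. A union bound over these gives the failure probability $5\epsilon_A$. I would substitute these bounds using the following monotonicity properties:
\begin{itemize}
\item $G_+$ is increasing in its first argument and decreasing in its second.
\item The right-hand side is increasing in $\tilde n_Z^\mathrm{det}$, by Ref.~\cite[Supplementary Note D]{curras-lorenzoSecurityQuantum2026}.
\item The $\Delta_A$ terms cancel in the denominator sum.
\end{itemize}
Finally, I would replace $n^{\mathrm{det}}_{\mathrm{trash},\mathrm{sifted},-}$ by the larger $\bm n_{\mathrm{trash},\mathrm{sifted},-}$ and uncondition on $\mathcal{D}_\mathrm{sifted}$ by the law of total probability. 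Dividing by $\bm n_{\mathrm{keep},\mathrm{sifted},Z}^{\mathrm{det}}$ then gives the stated bound on $\bm\eph$.

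The main point needing care is the same as in the decoy case: justifying the reordering so that $\bm e_k$ is defined on trash rounds and the coin remains an untouched qubit when conditioning on it. Since this carries over verbatim, I expect no genuine new obstacle; the proof is a specialization of the decoy-state argument.
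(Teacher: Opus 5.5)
Your proposal is correct and matches the paper's proof. The paper likewise reruns the argument of \cref{lem:quantum_coin_inequality_decoy} with the photon-number postselection removed, conditioning on $\mathcal{D}_\mathrm{sifted}$ itself and reusing the same measurement reordering, Bloch-sphere step and five Azuma applications, with $\Delta_A$ now defined directly through $\bm{n}_\mathrm{sifted}^{\mathrm{det}}$ (so, as you note, the relaxation $\bm{n}_{\mathrm{sifted},(1)}^{\mathrm{det}} \le \bm{n}_\mathrm{sifted}^{\mathrm{det}}$ becomes an equality).
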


\begin{proof}
The proof is identical to that of \cref{lem:quantum_coin_inequality_decoy} (\cref{appsub:proof_quantum_coin_inequality_decoy}), with the photon-number postselection removed. One conditions on a realization of the set of detected sifted rounds $\mathcal{D}_\mathrm{sifted}$ (rather than its single-photon subset $\mathcal{D}_\mathrm{sifted}^{(1)}$), and applies the Bloch-sphere inequality \cref{eq:bloch_sphere_general} to the coin qubit $C_k$ of each such round, exactly as before. The reordering of measurements (first determining $\mathcal{D}_\mathrm{sifted}$, then applying the joint error POVM $\{\hat m_\mathrm{err},\mathbb{I}-\hat m_\mathrm{err}\}$, and only afterwards measuring the coins) and the subsequent {Azuma's step} are unchanged, with $\Delta_A$ now defined directly in terms of the total number of detected sifted rounds $\bm{n}_\mathrm{sifted}^{\mathrm{det}}$. 
\end{proof}

\subsection{Bounding the bad coin events}
\label{appsub:sp_bad_coin}

The term $\bm{n}_{\mathrm{trash},\mathrm{sifted},-}$ in \cref{eq:sp_finite_azuma} counts outcomes of the fictitious Hadamard-basis measurement on Alice's coin registers during trash rounds {with associated outcome "$-$"}. As in the main text, it depends only on Alice's local source-replacement state---being independent of Eve's attack and Bob's outcomes---and its distribution can therefore be analyzed in a simplified, adversary-free scenario:

\vspace{3pt}
\begin{mdframed}
{\textbf{Simple Coin Measuring Scenario --- single photon}
\begin{enumerate}
    \item \textit{State preparation:} Alice prepares her global source-replacement state $\ket{\Psi_N}_{C_1^N A_1^N T_1^N}$.
    \item \textit{Trash/sifted assignment and coin measurement:} For each round $k\in\{1,\ldots,N\}$, Alice assigns the round to ``trash'' with probability $(1-p_\mathrm{keep})$ and to ``sifted'' with probability $1/2$. If the round is assigned to both ``trash'' and ``sifted'', Alice measures the coin system $C_k$ in the $\{\ket{+},\ket{-}\}$ basis. We denote by $\bm{n}_{\mathrm{trash},\mathrm{sifted},-}$ the number of trash sifted rounds in which Alice obtained $\ket{-}$.
\end{enumerate}}
\end{mdframed}
\vspace{5pt}

Within this scenario, we define the indicator random variables
\begin{equation}
\label{eq:sp_X_k_def}
    \bm{X_k} \coloneqq \mathbf{1}\bigl\{\text{Alice chooses ``trash'' and ``sifted'' and obtains } \ket{-}_{C_k}\text{ on round } k\bigr\},
\end{equation}
so that $\bm{n}_{\mathrm{trash},\mathrm{sifted},-}=\sum_{k=1}^N \bm{X_k}$. The problem reduces to bounding this sum of binary random variables derived from measurements on the (partially known) state $\ket{\Psi_N}_{C_1^N A_1^N T_1^N}$. As in the {decoy-state} case, when the source has memory the $\bm{X_k}$ are not independent; assuming a finite correlation length $l_c$, we partition the rounds into $l_c+1$ groups of mutually independent variables and apply a Chernoff-type bound within each group.

\begin{lemma}[Bad-coin bound by groups --- single photon]
\label{lem:sp_bernstein_by_groups}
Suppose Alice runs the \emph{Phase-error estimation protocol --- single photon} with a source of maximum correlation length $l_c$. For each round $k$, let $a_{\bar k}\coloneqq(a_{k-l_c}^{k-1},a_{k+1}^{k+l_c})$ and $\alpha_{\bar k}\coloneqq(\alpha_{k-l_c}^{k-1},\alpha_{k+1}^{k+l_c})$ denote the bit and basis choices in the $l_c$-neighbourhood of round $k$ (excluding round $k$). Then, for any $\epsilon_C\in(0,1)$,
\begin{equation}
\label{eq:sp_bound_n_trash}
    \bm{n}_{\mathrm{trash},\mathrm{sifted},-}
    \;\underset{(l_c+1)\epsilon_C}{\leq}\;
    n_{\mathrm{trash},\mathrm{sifted},-}^{\mathrm{U}},
\end{equation}
where
\begin{equation}
\label{eq:sp_n_trash_U}
    n_{\mathrm{trash},\mathrm{sifted},-}^{\mathrm{U}}
    \coloneqq
    \frac{N(1-p_{\mathrm{keep}})}{2}\,\Delta_{\mathrm{coin}}^{\mathrm{U}}
    + \sqrt{N(l_c+1)(1-p_{\mathrm{keep}})\,\Delta_{\mathrm{coin}}^{\mathrm{U}}\,\ln\tfrac{1}{\epsilon_C}}
    + \tfrac{2(l_c+1)}{3}\,\ln\tfrac{1}{\epsilon_C}\,.
\end{equation}
Here $\Delta_{\mathrm{coin}}^{\mathrm{U}}$ quantifies the worst-case deviation of Alice's states from an ideal BB84 source,
\begin{equation}
\label{eq:sp_Delta_coin_U}
    \Delta_{\mathrm{coin}}^{\mathrm{U}}
    \coloneqq
    \max_{k}\;\max_{a_{\bar k},\,\alpha_{\bar k}}\;
    \frac{1}{2}\,\Re\!\Bigg(
        1 - \frac{1}{2\sqrt{2}}
        \sum_{a,a'\in\{0,1\}}(-1)^{aa'}\,
        \braket*{\Xi_{a,X\mid a_{\bar k},\alpha_{\bar k}}}{\Xi_{a',Z\mid a_{\bar k},\alpha_{\bar k}}}_{T_k\, T_{k+1}^{k+l_c}}
    \Bigg),
\end{equation}
where, for each round $k$, bit $a_k\in\{0,1\}$, and basis $\alpha_k\in\{Z,X\}$,
\begin{equation}
\label{eq:sp_Xi}
    \ket*{\Xi_{a_k,\alpha_k\mid a_{\bar k},\alpha_{\bar k}}}_{T_k\, T_{k+1}^{k+l_c}}
    \coloneqq
    \bigotimes_{j=k}^{k+l_c}
    \ket*{\psi_{a_{j-l_c}^{j},\,\alpha_{j-l_c}^{j}}}_{T_j}\,.
\end{equation}
\end{lemma}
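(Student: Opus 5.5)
The plan is to follow the proof of \cref{lem:bernstein_by_groups_decoy} in \cref{appsub:proof_bernstein_by_groups_decoy} essentially line by line, removing the photon-number postselection. Assume $N$ is a multiple of $l_c+1$; other $N$ need only trivial changes. Partition the rounds into the sets $I_w=\{k: k\equiv w \bmod (l_c+1)\}$ for $w=0,\ldots,l_c$, and write $\bm{n}_{\mathrm{trash},\mathrm{sifted},-}=\sum_w \sum_{k\in I_w}\bm{X_k}$. It suffices to show that each group sum exceeds $n^{\mathrm U}_{\mathrm{trash},\mathrm{sifted},-}/(l_c+1)$ with probability at most $\epsilon_C$. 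A union bound over the $l_c+1$ groups then gives \cref{eq:sp_bound_n_trash}.

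Fix $w$ and use the $w$-th modified coin measuring scenario. For every $k\notin I_w$, Alice measures $C_k$ in $\{\ket{Z},\ket{X}\}$ and then $A_k$ in the resulting basis. These measurements act on registers disjoint from those used to define $\{\bm{X_k}\}_{k\in I_w}$, so they commute with them and leave the distribution of the group sum unchanged. Condition on the outcomes $a_{I_{\bar w}},\alpha_{I_{\bar w}}$. Because the source has correlation length $l_c$, the blocks $\{k,\ldots,k+l_c\}$ with $k\in I_w$ partition $\{1,\ldots,N\}$, and each $T_j$ depends only on choices within distance $l_c$. The post-measurement state therefore factorizes as $\bigotimes_{k\in I_w}\ket{\hat\Psi^{(k)}_{\mathrm{coin}}}_{C_kA_kT_k^{k+l_c}}$, with
\[
\ket{\hat\Psi^{(k)}_{\mathrm{coin}}}=\tfrac12\sum_{\alpha_k,a_k}\ket{\alpha_k}_{C_k}\ket{(a_k)_{\alpha_k}}_{A_k}\ket{\Xi_{a_k,\alpha_k\mid a_{\bar k},\alpha_{\bar k}}}_{T_kT_{k+1}^{k+l_c}},
\]
where $\Xi$ is as in \cref{eq:sp_Xi}. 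This is the analogue of \cref{eq:post_meas_decoy,eq:postselected_state_decoy}, except that there is no $I,M$ register and hence no projection onto $m_k=1$. Under the conditional measure the $\bm{X_k}$, $k\in I_w$, are therefore independent Bernoulli variables.

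Next compute the mean. The trash and sifted tags are classical coins independent of everything else, so $E[\bm{X_k}\mid\cdot]=\tfrac12(1-p_{\mathrm{keep}})\,\|{}_{C_k}\!\braket{-}{\hat\Psi^{(k)}_{\mathrm{coin}}}\|^2$. Expanding the $\ket{-}$ projection gives $\tfrac12\big(1-\Re\braket{\hat\Psi_X}{\hat\Psi_Z}\big)$. The $A_k$ overlaps $\braket{(a)_X}{(a')_Z}=(-1)^{aa'}/\sqrt2$ then turn this into the expression inside \cref{eq:sp_Delta_coin_U}, exactly as in \cref{eq:p_coin_minus_decoy}. Hence $E[\bm{X_k}\mid\cdot]\le\tfrac12(1-p_{\mathrm{keep}})\Delta^{\mathrm U}_{\mathrm{coin}}$, and the conditional mean of the group sum is at most $\frac{N}{l_c+1}\cdot\tfrac12(1-p_{\mathrm{keep}})\Delta^{\mathrm U}_{\mathrm{coin}}$.

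Now apply Bernstein's inequality (\cref{lem:bernstein}) with deviation $\sqrt{2E\ln(1/\epsilon_C)}+\tfrac23\ln(1/\epsilon_C)$. The map $E\mapsto E+\sqrt{2E\ln(1/\epsilon_C)}+\tfrac23\ln(1/\epsilon_C)$ is monotone increasing, so we may substitute the upper bound on the mean. The resulting threshold is exactly $n^{\mathrm U}_{\mathrm{trash},\mathrm{sifted},-}/(l_c+1)$, which is \cref{eq:sp_n_trash_U} with $p_1\to1$. Averaging over the conditioning outcomes removes the conditioning.

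The only step needing real care is the factorization. One must check that conditioning on the non-$I_w$ bit and basis outcomes leaves each $T_j$ attached to exactly one block, and that the residual dependence of $T_{k+1}^{k+l_c}$ on $(a_k,\alpha_k)$ is captured entirely by $\Xi$. Everything else is simpler than in the decoy case.
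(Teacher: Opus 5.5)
Your proposal is correct and follows essentially the same route as the paper. The paper proves this lemma by rerunning the decoy-state argument of \cref{appsub:proof_bernstein_by_groups_decoy} with the intensity and photon-number registers removed and $p_1\to 1$: group the rounds into the sets $I_w$, condition on the outcomes outside $I_w$ to get a product state over blocks, bound each conditional mean by $\tfrac12(1-p_{\mathrm{keep}})\Delta_{\mathrm{coin}}^{\mathrm{U}}$, then apply Bernstein's inequality, average over the conditioning, and take a union bound over the groups.
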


\begin{proof}
The proof follows that of \cref{lem:bernstein_by_groups_decoy} (\cref{appsub:proof_bernstein_by_groups_decoy}) with the photon-number register removed throughout. 
Two changes occur relative to the {decoy-state} proof. First, every round emits a photon with certainty, so the factor $p_1=\sum_\mu p_\mu p_{1\vert\mu}$ that appears in the decoy expectation is replaced by $1$; hence $E[\bm{X_k}\mid a_{I_{\bar w}},\alpha_{I_{\bar w}}]\leq\tfrac12(1-p_\mathrm{keep})\Delta_\mathrm{coin}^\mathrm{U}$, and the per-group expectation is bounded by $\tfrac{N}{l_c+1}\tfrac12(1-p_\mathrm{keep})\Delta_\mathrm{coin}^\mathrm{U}$. Second, the $\Xi$-states carry no intensity/photon-number registers, so the conditional coin probability $p_{C_k=-}^{(k)}$ reduces to the single-photon overlap expression that defines $\Delta_\mathrm{coin}^\mathrm{U}$ in \cref{eq:sp_Delta_coin_U}. 
\end{proof}

\subsection{Phase-error bound and security}
\label{appsub:sp_phase_error}

In the single-photon case, the counts $\bm{n}_{\mathrm{keep},\mathrm{sifted},Z}^{\mathrm{det}}$, $\bm{n}_{\mathrm{keep},\mathrm{sifted},X}^{\mathrm{det}}$ and $\bm{n}_{\mathrm{keep},\mathrm{sifted},X}^{\mathrm{err}}$ entering \cref{eq:sp_finite_azuma}  are directly observable and contained in the announced data vector $\bm{\vec n}$. The decoy-state estimation step (\cref{lem:decoy_state_bounds}) is therefore unnecessary, and combining \cref{lem:sp_quantum_coin_inequality,lem:sp_bernstein_by_groups} immediately results in a bound on the phase-error rate of the full sifted key.

\begin{theorem}[Phase-error-rate bound for single-photon BB84 with encoding correlations]
\label{thm:sp_phase_error_bound}
Consider the \emph{Phase-error estimation protocol --- single photon} with encoding correlations up to maximum length $l_c$. Then, for any $\epsilon_A>0$ and $\epsilon_C\in(0,1)$,
\begin{equation}
    \Pr\!\left[\,\bm{\eph} > \mathcal{E}_\mathrm{ph}(\bm{\vec n};\epsilon_A,\epsilon_C)\,\right]
    \leq 5\epsilon_A + (l_c+1)\epsilon_C,
\end{equation}
where
\begin{equation}
\label{eq:sp_E_ph_def}
    \mathcal{E}_\mathrm{ph}(\bm{\vec n};\epsilon_A,\epsilon_C)
    \coloneqq
    \frac{
        \big(\bm{n}_{\mathrm{keep},\mathrm{sifted},Z}^{\mathrm{det}} + \Delta_A\big)\,
        G_+\!\left(
            \dfrac{\bm{n}_{\mathrm{keep},\mathrm{sifted},X}^{\mathrm{err}} + \Delta_A}
                  {\bm{n}_{\mathrm{keep},\mathrm{sifted},X}^{\mathrm{det}} - \Delta_A},\;
            1 - \dfrac{2 p_{\mathrm{keep}}\big(n_{\mathrm{trash},\mathrm{sifted},-}^{\mathrm{U}} + \Delta_A\big)}
                     {(1-p_{\mathrm{keep}})\big(\bm{n}_{\mathrm{keep},\mathrm{sifted},Z}^{\mathrm{det}} + \bm{n}_{\mathrm{keep},\mathrm{sifted},X}^{\mathrm{det}}\big)}
        \right) + \Delta_A
    }{
        \bm{n}_{\mathrm{keep},\mathrm{sifted},Z}^{\mathrm{det}}
    },
\end{equation}
with $\Delta_A=\sqrt{2\,\bm{n}_\mathrm{sifted}^{\mathrm{det}}\ln(1/\epsilon_A)}$ and $n_{\mathrm{trash},\mathrm{sifted},-}^{\mathrm{U}}$ given by \cref{eq:sp_n_trash_U}.
\end{theorem}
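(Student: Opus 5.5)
The plan is to mirror the proof of \cref{thm:phase_error_bound_quantum_coin_decoy}, which simplifies because no decoy-state estimation is needed: the counts $\bm{n}_{\mathrm{keep},\mathrm{sifted},Z}^{\mathrm{det}}$, $\bm{n}_{\mathrm{keep},\mathrm{sifted},X}^{\mathrm{det}}$ and $\bm{n}_{\mathrm{keep},\mathrm{sifted},X}^{\mathrm{err}}$ are directly part of $\bm{\vec n}$. I will define two good events. The first is $\Omega_A$, on which the inequality \cref{eq:sp_finite_azuma} of \cref{lem:sp_quantum_coin_inequality} holds; it has $\Pr[\Omega_A^c]\leq 5\epsilon_A$. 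The second is $\Omega_C$, on which $\bm{n}_{\mathrm{trash},\mathrm{sifted},-}\leq n_{\mathrm{trash},\mathrm{sifted},-}^{\mathrm{U}}$; by \cref{lem:sp_bernstein_by_groups} it has $\Pr[\Omega_C^c]\leq (l_c+1)\epsilon_C$.

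A subtlety is that \cref{lem:sp_bernstein_by_groups} is proved in the Simple Coin Measuring Scenario rather than in the \emph{Phase-error estimation protocol --- single photon}. I will therefore justify, as in the main text, that the distribution of $\bm{n}_{\mathrm{trash},\mathrm{sifted},-}$ is identical in both settings. The reason is that it depends only on measurements of Alice's local registers $C_1^N$ together with independent classical tags, and these commute with Eve's and Bob's operations on $T_1^N$. This means the bound transfers to the same probability space in which $\Omega_A$ is defined, so the union bound is legitimate.

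On $\Omega_A\cap\Omega_C$, I will upper bound the right-hand side of \cref{eq:sp_finite_azuma} by replacing $\bm{n}_{\mathrm{trash},\mathrm{sifted},-}$ with $n_{\mathrm{trash},\mathrm{sifted},-}^{\mathrm{U}}$. This uses the fact that $G_+(y,z)$ is nonincreasing in $z$, while the second argument of $G_+$ is decreasing in the trash count, since $p_{\mathrm{keep}}/(1-p_{\mathrm{keep}})>0$. Only this monotonicity is needed here. Unlike the decoy case, the other arguments stay at their observed values, so the monotonicity in $\bm{n}^{\mathrm{det}}_{\mathrm{keep},\mathrm{sifted},Z}$ from \cite{curras-lorenzoSecurityQuantum2026} is not required. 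Dividing by $\bm{n}_{\mathrm{keep},\mathrm{sifted},Z}^{\mathrm{det}}$ then gives $\bm{\eph}\leq\mathcal{E}_\mathrm{ph}(\bm{\vec n};\epsilon_A,\epsilon_C)$. If $\bm{n}_{\mathrm{keep},\mathrm{sifted},Z}^{\mathrm{det}}=0$, the convention $\bm{\eph}=0$ makes the bound trivial. If the arguments of $G_+$ fall outside $[0,1]$, the convention $G_+=1$ gives $\mathcal{E}_\mathrm{ph}\geq 1\geq\bm{\eph}$.

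It follows that $\{\bm{\eph}>\mathcal{E}_\mathrm{ph}\}\subseteq\Omega_A^c\cup\Omega_C^c$, and the union bound yields the claimed failure probability $5\epsilon_A+(l_c+1)\epsilon_C$. I expect the only delicate step to be the transfer of \cref{lem:sp_bernstein_by_groups} between the two scenarios. If needed, I would make it explicit by constructing a joint scenario in which the Hadamard coin measurements on trash-sifted rounds are performed first, before Eve acts.
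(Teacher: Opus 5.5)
Your proposal is correct and follows essentially the same route as the paper. The paper's proof of this theorem simply specializes the argument for \cref{thm:phase_error_bound_quantum_coin_decoy}: it intersects the good events of \cref{lem:sp_quantum_coin_inequality} and \cref{lem:sp_bernstein_by_groups}, substitutes $n_{\mathrm{trash},\mathrm{sifted},-}^{\mathrm{U}}$ using that $G_+$ is nonincreasing in its second argument, divides by $\bm{n}_{\mathrm{keep},\mathrm{sifted},Z}^{\mathrm{det}}$, and applies the union bound. You are also right on two further points. Since the remaining counts are directly observed, only this monotonicity is needed here. And the bad-coin bound transfers from the Simple Coin Measuring Scenario because it depends only on Alice's local coin registers and independent classical tags.
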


\begin{proof}
Similar to  \cref{thm:phase_error_bound_quantum_coin_decoy} (\cref{appsub:proof:thm:phase_error_bound_quantum_coin_decoy}). 
\end{proof}

Since the bound in \cref{thm:sp_phase_error_bound} is exactly of the form required by \cref{eq:sp_statistical_guarantee}, combining it with \cref{thm:sp_variable_length_security} immediately gives a security statement.

\begin{corollary}[Security of single-photon BB84 with bounded encoding correlations]
\label{cor:sp_security_bounded}
Consider the single-photon {Actual protocol} with encoding correlations up to maximum length $l_c$. If Alice and Bob set the final key length as in \cref{eq:sp_final_key_length} with $\mathcal{E}_\mathrm{ph}$ given by \cref{eq:sp_E_ph_def}, the protocol produces an $\varepsilon_\mathrm{sec}$-secure key with
\begin{equation}
    \varepsilon_\mathrm{sec} = 2\sqrt{5\epsilon_A + (l_c+1)\epsilon_C}\;+\;\varepsilon_\mathrm{PA}+\varepsilon_\mathrm{EV}.
\end{equation}
\end{corollary}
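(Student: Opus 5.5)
The statement is Corollary~\ref{cor:sp_security_bounded}. I would prove it as a direct composition of \cref{thm:sp_phase_error_bound} with \cref{thm:sp_variable_length_security}, in the same way \cref{cor:phase_error_bound_decoy} follows from \cref{thm:phase_error_bound_quantum_coin_decoy} and \cref{thm:variable_length_security_decoy}.

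First, I would observe that \cref{thm:sp_phase_error_bound} provides, for the \emph{Phase-error estimation protocol --- single photon}, the guarantee
\[
\Pr[\bm{\eph} > \mathcal{E}_\mathrm{ph}(\bm{\vec n};\epsilon_A,\epsilon_C)] \leq 5\epsilon_A + (l_c+1)\epsilon_C .
\]
This holds for any eavesdropping attack and any source-replacement state whose correlation length is at most $l_c$. I would then check that $\mathcal{E}_\mathrm{ph}(\bm{\vec n};\epsilon_A,\epsilon_C)$ in \cref{eq:sp_E_ph_def} is a function of the announced data vector alone:
\begin{itemize}
\item the counts $\bm{n}_{\mathrm{keep},\mathrm{sifted},Z}^{\mathrm{det}}$, $\bm{n}_{\mathrm{keep},\mathrm{sifted},X}^{\mathrm{det}}$ and $\bm{n}_{\mathrm{keep},\mathrm{sifted},X}^{\mathrm{err}}$ are announced (sifting and test-round announcements);
\item $\bm{n}_\mathrm{sifted}^{\mathrm{det}}$, which enters through $\Delta_A$, is determined by the detection and sifting announcements on keep and trash rounds;
\item $n_{\mathrm{trash},\mathrm{sifted},-}^{\mathrm{U}}$ is a deterministic constant fixed by $N$, $p_\mathrm{keep}$, $l_c$, $\epsilon_C$ and $\Delta_\mathrm{coin}^\mathrm{U}$.
\end{itemize}
The guarantee is therefore exactly of the form \cref{eq:sp_statistical_guarantee}, with $\mathcal{E}_\mathrm{ph}(\vec n) \coloneqq \mathcal{E}_\mathrm{ph}(\vec n;\epsilon_A,\epsilon_C)$ and $\varepsilon_\mathrm{PE} = 5\epsilon_A+(l_c+1)\epsilon_C$.

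Second, I would invoke \cref{thm:sp_variable_length_security} with this $\varepsilon_\mathrm{PE}$, using the key length \cref{eq:sp_final_key_length} built from this $\mathcal{E}_\mathrm{ph}$. This gives $\varepsilon_\mathrm{sec}$-security of the \emph{Actual protocol (source replaced) --- single photon}, with
\[
\varepsilon_\mathrm{sec} = 2\sqrt{5\epsilon_A+(l_c+1)\epsilon_C}+\varepsilon_\mathrm{PA}+\varepsilon_\mathrm{EV}.
\]
Finally, I would transfer this to the single-photon \emph{Actual protocol}. The two protocols give the same joint distribution of the sifted key and the announced data, as argued in \cref{appsub:sp_protocol}.

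The only delicate point is one mismatch in the hypothesis of \cref{thm:sp_variable_length_security}: it asks for the bound "for any initial global source replacement state", whereas \cref{thm:sp_phase_error_bound} gives it only for states of the assumed correlated form. I would resolve this by reading the theorem's proof: it only uses the guarantee for the actual state, so the restriction to sources with correlation length at most $l_c$ is harmless. I would also note that \cref{eq:sp_E_ph_def} can be ill-defined, for example when $\bm{n}_{\mathrm{keep},\mathrm{sifted},X}^{\mathrm{det}}\le\Delta_A$. In those cases I would set $\mathcal{E}_\mathrm{ph}=1$, which still satisfies the guarantee and makes the resulting key length zero.
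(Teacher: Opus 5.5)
Your proposal is correct and follows the paper's own route. The paper proves the corollary by feeding the guarantee of \cref{thm:sp_phase_error_bound}, with $\varepsilon_\mathrm{PE}=5\epsilon_A+(l_c+1)\epsilon_C$, into \cref{thm:sp_variable_length_security}, together with the stated equivalence between the source-replaced and actual single-photon protocols. Your extra checks are sound refinements of points the paper leaves implicit: that $\mathcal{E}_\mathrm{ph}$ depends only on announced data, how to read the ``any source-replacement state'' hypothesis, and setting $\mathcal{E}_\mathrm{ph}=1$ where the bound is ill-defined.
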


\begin{proof}
Immediate from \cref{thm:sp_phase_error_bound,thm:sp_variable_length_security} with $\varepsilon_\mathrm{PE}=5\epsilon_A+(l_c+1)\epsilon_C$.
\end{proof}

\subsection{{Application to a partially characterized LTI correlations model}}
\label{appsub:sp_LTI}

{We now particularize to the single-photon analogue of the LTI model of \cref{subsec:particular_correlations_model}. Namely, we assume that the state in \cref{eq:sp_state} takes the form}
\begin{equation}
\label{eq:sp_correlations_model_state}
    \ket*{\psi_{a_1^k,\alpha_1^k}}_{T_k}
    = \cos\!\big(\theta_{a_1^k,\alpha_1^k}\big)\, a_0^\dagger \ket{\rm vac}_{T_k}
    + \sin\!\big(\theta_{a_1^k,\alpha_1^k}\big)\, a_1^\dagger \ket{\rm vac}_{T_k},
\end{equation}
i.e.,\ a qubit encoded in the $XZ$ plane of the Bloch sphere, with the same encoding phase as in \cref{eq:correlations_model_phase,eq:correlations_model_delta,eq:correlations_model_Delta_exp}:
\begin{equation}
\label{eq:sp_correlations_model_phase}
    \theta_{a_1^k,\alpha_1^k} = \hat\theta_{a_k,\alpha_k} + \sum_{l=1}^{k-1}\delta_{a_{k-l},\alpha_{k-l}}^{(l)},
    \qquad
    \abs{\delta_{a_{k-l},\alpha_{k-l}}^{(l)}-\delta_{a'_{k-l},\alpha'_{k-l}}^{(l)}} \leq \Delta_l = \Delta_1 e^{-\lambda(l-1)}.
\end{equation}
{Since the residual contributions $\Delta_l$ decay exponentially but never strictly vanish, the source has correlations of unbounded length.} As in the main text, we (i) define a fictitious source with correlations truncated at a finite length $l_c^\mathrm{eff}$, (ii) prove security for it using the tools above, and (iii) lift the guarantee to the actual source via a trace-distance argument. Step (iii) relies on the model-independent lifting lemma of the main text, {\cref{lem:unbounded_correlations}}.

\begin{lemma}[Unbounded correlations~\cite{curras-lorenzoRigorousPhaseerrorestimation2026}]
\label{lem:sp_unbounded_correlations}
Let $\ket*{\Psi_N^{(\infty)}}_{C_1^N A_1^N T_1^N}$ be the source-replacement state for a source with correlations of unbounded length, and let $\ket*{\Psi_N^{(l_c^\mathrm{eff})}}_{C_1^N A_1^N T_1^N}$ be the corresponding state for a fictitious source truncated at length $l_c^\mathrm{eff}$. If
$T\big(\ketbra*{\Psi_N^{(\infty)}},\ketbra*{\Psi_N^{(l_c^\mathrm{eff})}}\big)\leq d$
and, for the fictitious source, $\Pr_{(l_c^\mathrm{eff})}[\bm\eph>\mathcal{E}_\mathrm{ph}(\bm{\vec n})]\leq\varepsilon_\mathrm{PE}$ for any attack, then, for the actual source and any attack, $\Pr_{(\infty)}[\bm\eph>\mathcal{E}_\mathrm{ph}(\bm{\vec n})]\leq\varepsilon_\mathrm{PE}+d$.
\end{lemma}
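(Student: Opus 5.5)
The plan is to reduce \cref{lem:sp_unbounded_correlations} to the operational meaning of trace distance, exactly as for \cref{lem:unbounded_correlations}. Fix an arbitrary eavesdropping attack. For that attack, the entire \emph{Phase-error estimation protocol --- single photon} (Eve's channel on $T_1^N$, Bob's filter and two-outcome POVMs, Alice's classical tag choices, and her measurements on $C_1^N$ and $A_1^N$) is one fixed quantum instrument $\mathcal{M}$. It acts on the initial source-replacement state and outputs the classical data $(\bm{\vec n},\bm{\eph})$. The point to stress is that $\mathcal{M}$ does not depend on which initial state is fed in. Alice's procedure is the same in both cases, and Eve's attack is a fixed map on $T_1^N$, so the same attack is admissible against both sources.

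Next, define the binary ``failure'' test $\Pi$ as the two-outcome POVM obtained by running $\mathcal{M}$ and checking whether $\bm\eph>\mathcal{E}_\mathrm{ph}(\bm{\vec n})$. This is a valid POVM $\{\Pi,\mathbb{I}-\Pi\}$ on $C_1^N A_1^N T_1^N$, obtained as the Heisenberg-picture pullback of a classical indicator. Hence $\Pr_{(\infty)}[\mathrm{fail}]=\mathrm{Tr}[\Pi\,\ketbra*{\Psi_N^{(\infty)}}]$, and likewise $\Pr_{(l_c^\mathrm{eff})}[\mathrm{fail}]=\mathrm{Tr}[\Pi\,\ketbra*{\Psi_N^{(l_c^\mathrm{eff})}}]$.

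By the variational characterization $T(\rho,\sigma)=\max_{0\le\Pi\le\mathbb{I}}\mathrm{Tr}[\Pi(\rho-\sigma)]$, the two failure probabilities differ by at most $d$. Therefore
\[
\Pr_{(\infty)}[\mathrm{fail}]\le \Pr_{(l_c^\mathrm{eff})}[\mathrm{fail}]+d\le\varepsilon_\mathrm{PE}+d,
\]
where the last step uses the hypothesis for the fictitious source under the same attack. Since the attack was arbitrary, the claim follows.

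The only delicate point is justifying that $\mathcal{M}$ is genuinely state-independent and that the event is a fixed measurement. Some of the protocol's steps depend on earlier outcomes: the coin measurement basis is set by the keep/trash tags, and the test and key rounds by the announcements. I would handle this by noting that an adaptive sequence of instruments still composes into a single CPTP map with classical output, so the monotonicity of trace distance under channels (data processing) gives the bound directly. Equivalently, the total-variation distance between the output distributions of $(\bm{\vec n},\bm\eph)$ is at most $d$, and the failure event is one subset of outcomes.
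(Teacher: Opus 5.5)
Your proposal is correct and follows the same route as the paper. The paper only invokes the operational meaning of trace distance as the maximum distinguishing probability, deferring to Lemma~1 of the cited reference: for a fixed attack, the whole phase-error estimation protocol is a state-independent measurement, so the failure probabilities under the two source-replacement states differ by at most $d$. Your treatment of adaptivity, composing the protocol into a single CPTP map with classical output, is exactly the justification that argument needs.
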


\begin{proof}
Identical to \cref{lem:unbounded_correlations}; see \cite[Lemma~1]{curras-lorenzoRigorousPhaseerrorestimation2026}. The statement is unaffected by the absence of the $I_k$ {and} $M_k$ registers.
\end{proof}

\paragraph*{Truncated model.}
The truncated single-photon source is identical to \cref{eq:sp_correlations_model_state,eq:sp_correlations_model_phase}, except that the bit and basis choices beyond length $l_c^\mathrm{eff}$ are replaced by fixed reference choices $a^\ast,\alpha^\ast$; the encoding phase for round $k$ {is the same as \cref{eq:correlations_model_phase_truncated}}
with the same residual bounds $\abs{\delta_{a_{k-l},\alpha_{k-l}}^{(l)}-\delta_{a'_{k-l},\alpha'_{k-l}}^{(l)}}\leq\Delta_l=\Delta_1 e^{-\lambda(l-1)}$. By construction, the state of round $k$ depends on at most $l_c^\mathrm{eff}$ preceding rounds, so \cref{cor:sp_security_bounded} applies with $l_c=l_c^\mathrm{eff}$. To use it we need the coin parameter $\Delta_\mathrm{coin}^\mathrm{U}$.

\begin{lemma}[Coin parameter for the truncated single-photon LTI model]
\label{lem:sp_bound_Delta_coin}
For the truncated single-photon model of {\cref{eq:sp_correlations_model_state,eq:correlations_model_phase_truncated}}, the coin parameter of \cref{eq:sp_Delta_coin_U} satisfies
\begin{equation}
\label{eq:sp_Delta_coin_U_bound}
    \Delta_\mathrm{coin}^\mathrm{U}
    \leq \frac{1}{2}\left[1 - \prod_{l=1}^{l_c^\mathrm{eff}}\cos\!\big(\Delta_l\big)\right].
\end{equation}
\end{lemma}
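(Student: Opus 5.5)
The plan mirrors the proof of \cref{lem:bound_Delta_coin_decoy}, with the Poisson averaging over photon numbers removed. Starting from the definition of $\Delta_\mathrm{coin}^\mathrm{U}$ in \cref{eq:sp_Delta_coin_U}, I first exploit the product structure of $\ket*{\Xi_{a_k,\alpha_k\mid a_{\bar k},\alpha_{\bar k}}}$ in \cref{eq:sp_Xi}. Two such states that differ only in $(a_k,\alpha_k)$ have an overlap equal to a product of single-round overlaps over $j=k,\ldots,k+l_c^\mathrm{eff}$. Rounds $j<k$ never appear, and for $j>k+l_c^\mathrm{eff}$ the truncation makes the states independent of $(a_k,\alpha_k)$, so they also do not appear.

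For every factor I use that the single-photon states of \cref{eq:sp_correlations_model_state} are real unit vectors in the $XZ$ plane. Hence $\braket*{\psi_{\theta'}}{\psi_{\theta}}=\cos(\theta-\theta')$, which is the $m=1$ case of \cref{eq:m_photon_overlap}.

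\begin{itemize}
\item \textbf{Round $j=k$.} Both states share the history $a_{k-l_c^\mathrm{eff}}^{k-1},\alpha_{k-l_c^\mathrm{eff}}^{k-1}$, and the tail terms are fixed at $(a^\ast,\alpha^\ast)$. All correlation contributions in \cref{eq:correlations_model_phase_truncated} therefore cancel. The factor reduces to the ideal BB84 overlap $\cos(\hat\theta_{a_k,\alpha_k}-\hat\theta_{a'_k,\alpha'_k})=\braket*{(a'_k)_{\alpha'_k}}{(a_k)_{\alpha_k}}$, exactly as in \cref{eq:round_k_overlap}.
\item \textbf{Rounds $j=k+l$ with $1\le l\le l_c^\mathrm{eff}$.} The phases differ only through $\delta^{(l)}_{a_k,\alpha_k}-\delta^{(l)}_{a'_k,\alpha'_k}$. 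The factor is therefore $\gamma^{(l)}_{a_k,\alpha_k,a'_k,\alpha'_k}=\cos(\delta^{(l)}_{a_k,\alpha_k}-\delta^{(l)}_{a'_k,\alpha'_k})$. Since cosine is even and decreasing on $[0,\pi]$, and the absolute difference is at most $\Delta_l\le\pi$, this factor is at least $\cos(\Delta_l)$.
\end{itemize}

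Substituting into the sum $S=\frac{1}{2\sqrt2}\sum_{a,a'}(-1)^{aa'}\braket*{\Xi_{a,X}}{\Xi_{a',Z}}$, the ideal overlaps $\braket{a_X}{a'_Z}=(-1)^{aa'}/\sqrt2$ exactly cancel the signs $(-1)^{aa'}$. This leaves $S=\frac14\sum_{a,a'}\prod_{l=1}^{l_c^\mathrm{eff}}\gamma^{(l)}_{a,X,a',Z}$, which is real.

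The main obstacle is the step from $\gamma^{(l)}\ge\cos\Delta_l$ to a lower bound on the product. This needs each factor to be nonnegative, which holds if $\Delta_l\le\pi/2$; that is automatic for the physically relevant small $\Delta_1$ used in \cref{sec:numerics}. If some $\Delta_l>\pi/2$, the right-hand side of \cref{eq:sp_Delta_coin_U_bound} is at least $1/2$. I would then argue directly that $\Delta_\mathrm{coin}^\mathrm{U}\le\frac12(1+\max|\text{products}|)$ is not covered, and instead note that the claim should be read under $\Delta_l\le\pi/2$, or else handle it by observing that $|\gamma|\le1$.

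Given nonnegativity, each of the four terms is at least $\prod_l\cos\Delta_l$, so $S\ge\prod_{l=1}^{l_c^\mathrm{eff}}\cos\Delta_l$. Taking $\frac12(1-S)$ and maximizing over $k$ and over $a_{\bar k},\alpha_{\bar k}$, which do not affect the bound, yields \cref{eq:sp_Delta_coin_U_bound}.
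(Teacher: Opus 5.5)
Your argument is the paper's proof, with the Poisson average removed: factorize the $\Xi$-overlaps, recover the ideal BB84 overlap on round $k$, bound each later-round factor below by $\cos(\Delta_l)$, use the sign cancellation to get $S=\frac14\sum_{a,a'}\prod_l\gamma^{(l)}_{a,X,a',Z}$, and bound each product below by $\prod_l\cos(\Delta_l)$. Your nonnegativity concern is legitimate, and the paper's ``similar to the decoy case'' proof skips it. In \cref{lem:bound_Delta_coin_decoy} every per-round factor and its lower bound $\sum_\mu p_\mu e^{-\mu(1-\cos\Delta_l)}$ are strictly positive, so multiplying the inequalities is automatic. Here $\cos(\Delta_l)<0$ whenever $\Delta_l>\pi/2$, which the model ($0\le\Delta_l\le\pi$) allows.

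Your fallback for that regime is wrong and should be dropped. First, the right-hand side of \cref{eq:sp_Delta_coin_U_bound} need not be at least $1/2$. If an even number of the $\Delta_l$ exceed $\pi/2$ (possible with $\Delta_1$ near $\pi$ and small $\lambda$), the product is positive.

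Second, no argument based on $|\gamma|\le1$ can save the bound, because it is false there. Take $l_c^{\rm eff}\ge2$ and $\Delta_1\in(\pi/2,\pi]$. Set $\delta^{(1)}_{a,X}=\Delta_1$ and $\delta^{(1)}_{a,Z}=0$ for both $a$, and make all $\delta^{(l)}$ with $l\ge2$ equal; this respects \cref{eq:correlations_model_delta_Delta_exp_truncated}. Then every product equals $\cos\Delta_1$, so $S=\cos\Delta_1$ and $\Delta_\mathrm{coin}^\mathrm{U}\ge\frac12(1-\cos\Delta_1)$. This strictly exceeds $\frac12\bigl(1-\cos\Delta_1\prod_{l=2}^{l_c^{\rm eff}}\cos\Delta_l\bigr)$, because $\cos\Delta_1<0$ and $\prod_{l\ge2}\cos\Delta_l<1$ (as $0<\Delta_2<\pi$).

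So the right conclusion is the first one you offered: the lemma needs the extra hypothesis $\Delta_1\le\pi/2$. By the exponential decay this gives $\Delta_l\le\pi/2$ for all $l$, and it covers every value used in \cref{sec:numerics}. Under that hypothesis your proof is complete.
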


\begin{proof}
Similar to \cref{lem:bound_Delta_coin_decoy} (\cref{appsub:proof:lem:bound_Delta_coin_decoy}). The main change relative to the decoy proof is the overlap of the later-round states: here each later round $j$ contributes a single-photon overlap $\cos\!\big(\delta_{a_k,\alpha_k}^{(j-k)}-\delta_{a'_k,\alpha'_k}^{(j-k)}\big)\geq\cos(\Delta_{j-k})$, with no intensity/photon-number purification and thus no Poisson average. 
\end{proof}

\paragraph*{Truncation error.}
The cost of truncating the correlations is again controlled by a trace-distance bound, now without the average-intensity factor.

\begin{lemma}[Trace distance for exponentially decaying single-photon correlations]
\label{lem:sp_trace_distance}
Let {the states} $\ket*{\Psi_N^{(\infty)}}_{C_1^N A_1^N T_1^N}$ and $\ket*{\Psi_N^{(l_c^\mathrm{eff})}}_{C_1^N A_1^N T_1^N}$ denote, respectively, the source-replacement states of the unbounded and truncated single-photon LTI models. Then
\begin{equation}
\label{eq:sp_trace_distance}
    T\Big(\ketbra*{\Psi_N^{(\infty)}},\ketbra*{\Psi_N^{(l_c^\mathrm{eff})}}\Big)
    \leq \sqrt{N}\sum_{l=l_c^\mathrm{eff}+1}^{\infty}\Delta_l
    = \frac{\sqrt{N}\,\Delta_1\, e^{-\lambda l_c^\mathrm{eff}}}{1-e^{-\lambda}}.
\end{equation}
To ensure $T\leq d$ for a tolerance $d>0$, it suffices to choose
\begin{equation}
\label{eq:sp_l_c_formula}
    l_c^\mathrm{eff} = \left\lceil \frac{1}{\lambda}\ln\!\left(\frac{\sqrt{N}\,\Delta_1}{d(1-e^{-\lambda})}\right)\right\rceil.
\end{equation}
\end{lemma}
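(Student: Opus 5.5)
The plan is to mirror the proof of \cref{lem:trace_distance_exponential_decay_decoy}, dropping the intensity and photon-number registers. Both $\ket*{\Psi_N^{(\infty)}}$ and $\ket*{\Psi_N^{(l_c^\mathrm{eff})}}$ are pure. Hence $T=\sqrt{1-F^2}$ with $F=\abs{\braket*{\Psi_N^{(l_c^\mathrm{eff})}}{\Psi_N^{(\infty)}}}$, and the task reduces to lower-bounding the fidelity.

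First I would compute the global overlap from \cref{eq:sp_global_state}. The registers $C_k A_k$ are identical in both states and the vectors $\{\ket{\alpha_k}_{C_k}\ket{(a_k)_{\alpha_k}}_{A_k}\}$ are orthonormal. The double sum therefore collapses to
\begin{equation}
\braket*{\Psi_N^{(l_c^\mathrm{eff})}}{\Psi_N^{(\infty)}}=\frac{1}{4^N}\sum_{\alpha_1^N,a_1^N}\prod_{k=1}^N\braket*{\psi_{a_{k-l_c^\mathrm{eff}}^k,\alpha_{k-l_c^\mathrm{eff}}^k}}{\psi_{a_1^k,\alpha_1^k}}_{T_k}.
\end{equation}
Both states in each factor have the form \cref{eq:sp_correlations_model_state}, a real qubit in the $XZ$ plane. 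Each per-round overlap is thus exactly $\cos(\Delta\theta_k)$, where $\Delta\theta_k=\sum_{l=l_c^\mathrm{eff}+1}^{k-1}(\delta^{(l)}_{a_{k-l},\alpha_{k-l}}-\delta^{(l)}_{a^\ast,\alpha^\ast})$. The current choice and the recent history cancel, as in \cref{eq:phase_difference}. By the pairwise bound, $\abs{\Delta\theta_k}\le\Delta_\mathrm{tail}:=\sum_{l>l_c^\mathrm{eff}}\Delta_l$.

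The main point needing care is positivity. I need $\Delta_\mathrm{tail}\le\pi/2$ so that every factor satisfies $\cos(\Delta\theta_k)\ge\cos(\Delta_\mathrm{tail})\ge 0$, which makes all terms of the sum nonnegative and lets the bound pass through the average. If $\Delta_\mathrm{tail}>\pi/2$, the claimed bound $\sqrt N\Delta_\mathrm{tail}\ge 1$ is trivial anyway, so that case can be dispatched separately. In the nontrivial case, $F\ge\cos(\Delta_\mathrm{tail})^N$.

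Finally I would convert to trace distance. Using $T\le\sqrt{2(1-F)}$, Bernoulli's inequality $1-c^N\le N(1-c)$, and $1-\cos x\le x^2/2$, I get $T\le\sqrt{2N\cdot\Delta_\mathrm{tail}^2/2}=\sqrt N\,\Delta_\mathrm{tail}$. The geometric series gives $\Delta_\mathrm{tail}=\Delta_1e^{-\lambda l_c^\mathrm{eff}}/(1-e^{-\lambda})$. Solving $\sqrt N\Delta_1e^{-\lambda l_c^\mathrm{eff}}/(1-e^{-\lambda})\le d$ for $l_c^\mathrm{eff}$ and taking the ceiling yields \cref{eq:sp_l_c_formula}. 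This is exactly the decoy computation with the Poisson average replaced by $m=1$, equivalently $\bar\mu\to1$.
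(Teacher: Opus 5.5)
Your proposal is correct and follows the paper's route: fidelity of the two pure states, the collapse of the global overlap to an average over histories of products of the per-round overlaps $\cos(\Delta\theta_k)$, and then $T\le\sqrt{2(1-F)}$, Bernoulli's inequality, $1-\cos x\le x^2/2$ and the geometric tail, which is the decoy computation with $\bar\mu\to 1$. Your explicit case split guaranteeing $\cos(\Delta\theta_k)\ge 0$ (the bound is trivial when $\Delta_\mathrm{tail}>\pi/2$) is a detail the paper's one-line sketch leaves implicit and is a welcome addition; in the decoy case it is unnecessary, because the Poisson-averaged per-round overlaps are strictly positive.
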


\begin{proof}
Similar to \cref{lem:trace_distance_exponential_decay_decoy} (\cref{appsub:proof:lem:trace_distance_exponential_decay_decoy}). The main difference is that here each per-round overlap is a \emph{single-photon} overlap, $\braket*{\psi_{a_{k-l_c^\mathrm{eff}}^k,\alpha_{k-l_c^\mathrm{eff}}^k}}{\psi_{a_1^k,\alpha_1^k}}=\cos(\Delta\theta_k)$, where the phase difference $\Delta\theta_k=\sum_{l=l_c^\mathrm{eff}+1}^{k-1}\big(\delta_{a_{k-l},\alpha_{k-l}}^{(l)}-\delta_{a^\ast,\alpha^\ast}^{(l)}\big)$ obeys $\abs{\Delta\theta_k}\leq\Delta_\mathrm{tail}(l_c^\mathrm{eff})\coloneqq\sum_{l=l_c^\mathrm{eff}+1}^\infty\Delta_l$. There is no Poisson resummation step (the state is deterministically single photon), so the average intensity $\bar\mu$ of the decoy bound is replaced by $1$. 
\end{proof}

As in the decoy case, $l_c^\mathrm{eff}$ grows only logarithmically in the block size $N$, so the truncation introduces a very mild overhead.

\paragraph*{Combined security statement.}
Collecting the above results gives a complete security guarantee for the single-photon LTI model.

\begin{corollary}[Security of single-photon BB84 with LTI encoding correlations]
\label{cor:sp_security_LTI}
Consider the single-photon {Actual protocol} with the LTI correlations model of \cref{eq:sp_correlations_model_state,eq:sp_correlations_model_phase}, and a truncation length $l_c^\mathrm{eff}$ satisfying \cref{eq:sp_l_c_formula} for some tolerance $d>0$. Let $\mathcal{E}_\mathrm{ph}$ be given by \cref{eq:sp_E_ph_def} with $\Delta_\mathrm{coin}^\mathrm{U}$ evaluated via \cref{eq:sp_Delta_coin_U_bound} (with $l_c=l_c^\mathrm{eff}$). Then
\begin{equation}
    \Pr\!\left[\,\bm\eph > \mathcal{E}_\mathrm{ph}(\bm{\vec n};\epsilon_A,\epsilon_C)\,\right]
    \leq 5\epsilon_A + (l_c^\mathrm{eff}+1)\epsilon_C + d,
\end{equation}
and, if Alice and Bob set the final key length as in \cref{eq:sp_final_key_length}, the protocol produces an $\varepsilon_\mathrm{sec}$-secure key with
\begin{equation}
    \varepsilon_\mathrm{sec} = 2\sqrt{5\epsilon_A + (l_c^\mathrm{eff}+1)\epsilon_C + d}\;+\;\varepsilon_\mathrm{PA}+\varepsilon_\mathrm{EV}.
\end{equation}
\end{corollary}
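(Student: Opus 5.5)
The plan is to assemble the result from the single-photon lemmas already established, mirroring the proof of \cref{cor:security_LTI}. First, I consider the fictitious truncated source with correlation length $l_c^\mathrm{eff}$: its state in round $k$ depends only on $a_{k-l_c^\mathrm{eff}}^k,\alpha_{k-l_c^\mathrm{eff}}^k$. So it satisfies the hypotheses of \cref{lem:sp_bernstein_by_groups} and \cref{thm:sp_phase_error_bound} with $l_c=l_c^\mathrm{eff}$. Next, I apply \cref{lem:sp_bound_Delta_coin} to get the explicit upper bound \cref{eq:sp_Delta_coin_U_bound} on $\Delta_\mathrm{coin}^\mathrm{U}$.

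The one point to check here is that using this bound instead of the true $\Delta_\mathrm{coin}^\mathrm{U}$ keeps the statement valid. This holds because $n_{\mathrm{trash},\mathrm{sifted},-}^{\mathrm{U}}$ in \cref{eq:sp_n_trash_U} is monotonically increasing in $\Delta_\mathrm{coin}^\mathrm{U}$. The proof of \cref{lem:sp_bernstein_by_groups} (through \cref{eq:EXk_bound_decoy}) only uses $\Delta_\mathrm{coin}^\mathrm{U}$ as an upper bound on the per-round conditional expectation, so any larger value also works. In addition, $\mathcal{E}_\mathrm{ph}$ in \cref{eq:sp_E_ph_def} is increasing in $n_{\mathrm{trash},\mathrm{sifted},-}^{\mathrm{U}}$, since $G_+$ is decreasing in its second argument. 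With these two facts, \cref{thm:sp_phase_error_bound} gives, for the truncated source and any attack,
\[
\Pr_{(l_c^\mathrm{eff})}\big[\bm\eph>\mathcal{E}_\mathrm{ph}(\bm{\vec n};\epsilon_A,\epsilon_C)\big]\le 5\epsilon_A+(l_c^\mathrm{eff}+1)\epsilon_C .
\]

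I then pass to the actual unbounded source. By \cref{lem:sp_trace_distance}, choosing $l_c^\mathrm{eff}$ as in \cref{eq:sp_l_c_formula} ensures that the two source-replacement states are within trace distance $d$; the ceiling only increases $l_c^\mathrm{eff}$, and the bound decreases in $l_c^\mathrm{eff}$. \Cref{lem:sp_unbounded_correlations} then adds $d$ to the failure probability, which gives the claimed inequality for the actual source. Note that the function $\mathcal{E}_\mathrm{ph}(\vec n)$ is the same for both sources, as that lemma requires, because it is built only from announced data and fixed parameters.

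Finally, this bound has exactly the form \cref{eq:sp_statistical_guarantee} with $\varepsilon_\mathrm{PE}=5\epsilon_A+(l_c^\mathrm{eff}+1)\epsilon_C+d$. \Cref{thm:sp_variable_length_security}, together with the equivalence between the source-replaced and actual single-photon protocols, therefore gives the stated $\varepsilon_\mathrm{sec}$.

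I expect the only delicate step to be the monotonicity argument that justifies substituting the bound on $\Delta_\mathrm{coin}^\mathrm{U}$; everything else is a direct chaining of earlier results.
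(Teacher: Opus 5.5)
Your proposal is correct and follows essentially the same route as the paper. Both establish the phase-error bound for the truncated source with $l_c=l_c^{\mathrm{eff}}$ (the paper cites \cref{cor:sp_security_bounded}, which amounts to your appeal to \cref{thm:sp_phase_error_bound}), lift it to the actual source at cost $d$ via \cref{lem:sp_trace_distance} and \cref{lem:sp_unbounded_correlations}, and conclude with \cref{thm:sp_variable_length_security}. Your explicit monotonicity argument for substituting the upper bound on $\Delta_{\mathrm{coin}}^{\mathrm{U}}$ is correct, and it spells out a step the paper leaves implicit.
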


\begin{proof}
By \cref{cor:sp_security_bounded}, the phase-error bound holds for the truncated source with failure probability $5\epsilon_A+(l_c^\mathrm{eff}+1)\epsilon_C$. By \cref{lem:sp_trace_distance}, the trace distance between the actual and truncated source-replacement states is at most $d$. Applying \cref{lem:sp_unbounded_correlations} lifts the guarantee to the actual source, adding $d$ to the failure probability, and the security parameter follows from \cref{thm:sp_variable_length_security}.
\end{proof}

\subsection{Numerical results}

{As in the decoy-state case, the simulations are based on a standard channel and detection model for BB84 with an active-basis-choice receiver with two threshold single-photon detectors. The expected detection and error statistics are taken from Ref.~\cite{panayiMemoryassistedMeasurementdeviceindependent2014} and are given in \cref{app:channel_model} for completeness.} For the numerical simulations, {we consider the same parameter regime as in the main text. In particular,} we target an overall security parameter of $\varepsilon_\mathrm{sec} = 2 \cdot 10^{-10}$, and fix $\varepsilon_\mathrm{EV} = 10^{-10}$ and $\varepsilon_\mathrm{PA} = \varepsilon_\mathrm{sec}/6$. {The remaining failure parameters are allocated following the same procedure as in \cref{sec:numerics}, the only difference being the count of individual failure events: by \cref{thm:sp_phase_error_bound}, we now have $\varepsilon_\mathrm{PE} = 5\epsilon_A + (l_c+1)\epsilon_C$, i.e.,~$6+l_c$ events in total, since no decoy-state estimation is needed.} For the scenario with finite correlation length, we set
\begin{equation}
\epsilon_A = \epsilon_C = \frac{10^{-20}}{9\,(6 + l_c)},
\end{equation}
{which satisfies $\varepsilon_\mathrm{PE} = (6+l_c)\,\epsilon_A = 10^{-20}/9$ and thus achieves the required $\varepsilon_\mathrm{sec}$.} For the scenario with unbounded correlation length, {we again proceed in two steps: we first assign to $d$ one seventh of the overall failure probability budget, $d = \tfrac{1}{7}\cdot\tfrac{10^{-20}}{9} = 10^{-20}/63$, and compute the required effective truncation length $l_c^\mathrm{eff}$ via \cref{eq:sp_l_c_formula}. Then, we distribute the remaining budget of $\tfrac{6}{63}\cdot 10^{-20}$ evenly among the $6 + l_c^\mathrm{eff}$ remaining failure events, setting}
\begin{equation}
\epsilon_A = \epsilon_C = \frac{6 \cdot 10^{-20}}{63\,(6 + l_c^\mathrm{eff})},
\end{equation}
which achieves the required $\varepsilon_\mathrm{sec}$.

{The results are shown in \cref{fig:results_SP}. Qualitatively, the behavior mirrors that of the decoy-state case in \cref{fig:results}: the secret-key rate degrades as $\Delta_1$ increases, and the curves corresponding to different values of $l_c$ are again nearly indistinguishable, confirming that handling correlations of unbounded length {shows negligible cost}. Quantitatively, however, the penalty is more severe here, as can be seen in the larger gap between the two values of $\Delta_1$ considered.

This already suggests that single-photon implementations are more sensitive to encoding correlations than decoy-state ones, and in \cref{fig:results_comparison} we confirm this by comparing the two directly for a large block size ($N = 10^{16}$) and $\Delta_1 \in \{0.001,0.01,0.1\}$. {The large block size is chosen to suppress statistical fluctuations, which affect the decoy-state protocol more strongly due to the additional concentration inequalities required for the decoy-state estimation, so that the comparison better reflects the intrinsic impact of the correlations.} For weak correlations, the single-photon source performs better, as expected, but for sufficiently large $\Delta_1$ the decoy-state protocol eventually surpasses the secret-key rates achievable by the single-photon case. Although this appears counterintuitive, it follows from the difference in the photon-number statistics of the two sources. Weak coherent pulses with a low mean photon number are predominantly vacuum, and a vacuum emission is identical regardless of Alice's previous setting choices, and therefore does not leak information about them; only the small non-vacuum fraction does. The single-photon source, in contrast, emits exactly one photon per pulse with no vacuum component, so the state of every emitted photon depends on the previous bit-and-basis settings and can therefore leak information about them. Decoy-state implementations are thus inherently more robust to correlated  bit-and-basis encoders.}

\begin{figure}
\centering
\includegraphics[width=10cm]{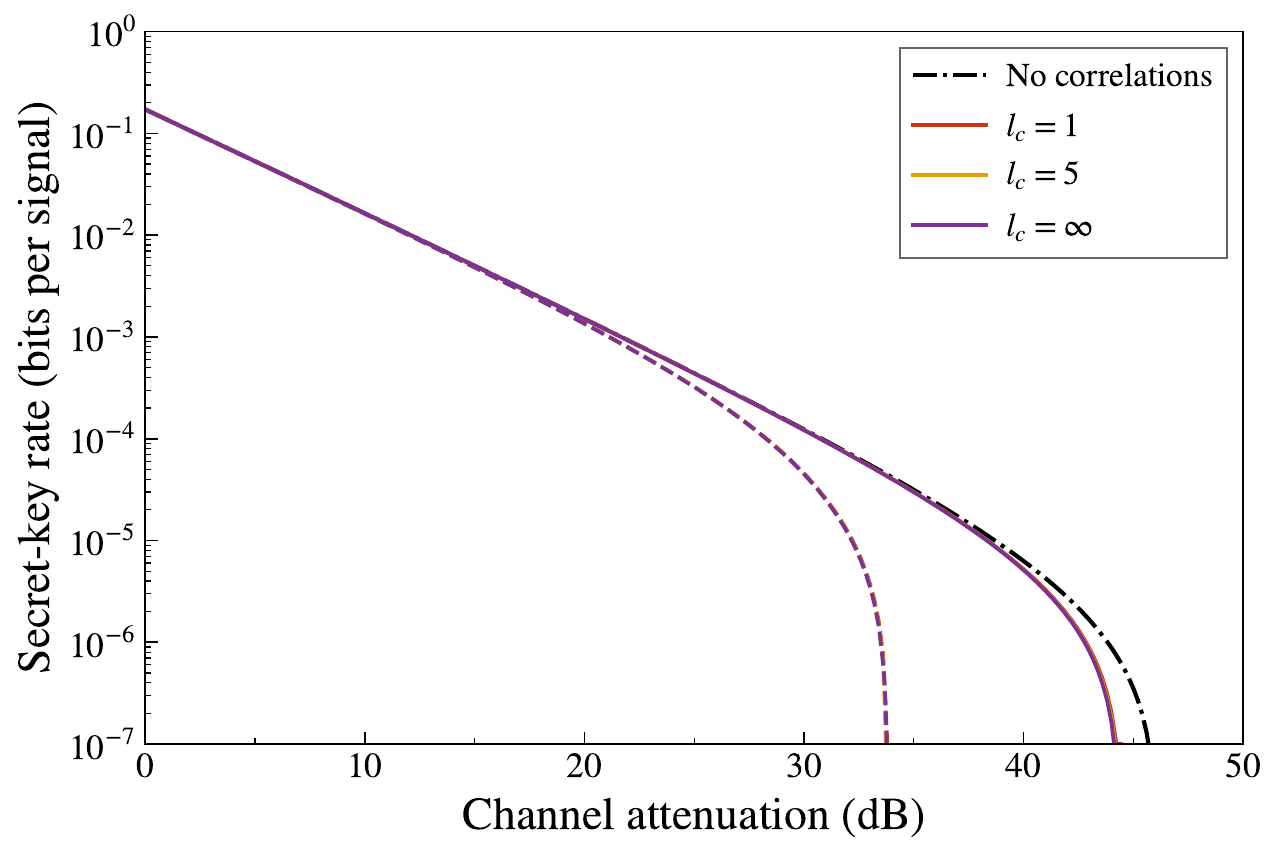}
    \caption{{Secret-key rate as a function of the channel attenuation for a single-photon BB84 protocol with a source suffering from LTI encoding correlations following the exponential-decay model of {\cref{appsub:sp_LTI}}. Solid lines correspond to $\Delta_1 = 0.001$ and dashed lines to $\Delta_1 = 0.01$, {and we consider $N=10^{12}$}. The curves labeled $l_c = 1$ and $l_c = 5$ assume correlations artificially truncated at length $l_c$ (i.e.,~$\Delta_l = 0$ for $l > l_c$), while $l_c = \infty$ corresponds to correlations of unbounded length. Note that both the dashed and the solid lines for different values of $l_c$ are nearly indistinguishable. The dash-dotted line shows the ideal uncorrelated case for reference. }}
    \label{fig:results_SP}
\end{figure}

\begin{figure}
\includegraphics[width=10cm]{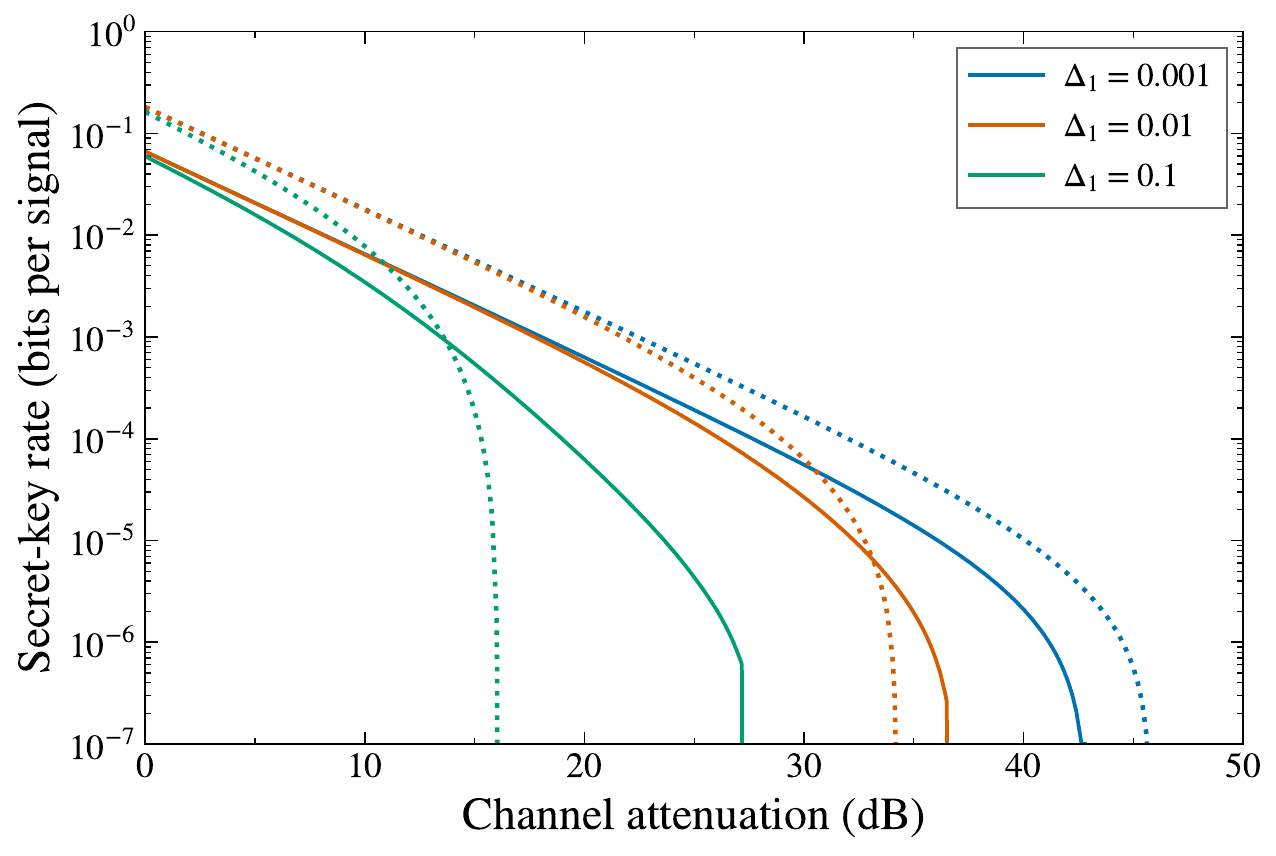}
    \caption{{Secret-key rate as a function of the channel attenuation for a decoy-state (solid lines) and a single-photon (dotted lines) BB84 protocol under the LTI correlations model. For the decoy-state case, we employ three intensity settings, fix the weakest to vacuum, and numerically optimize over the remaining two intensities and their associated probabilities.  Also, we consider $N=10^{16}$, $l_c=\infty$ and $\Delta_1 \in \{0.001,0.01,0.1\}$. }}
    \label{fig:results_comparison}
\end{figure}
}
\section{Concentration inequalities}
\label{app:concentration_ineqs}

{\begin{lemma}[Azuma's inequality \cite{azuma1967weighted} for sums of indicator variables]
\label{lem:azuma}
Let $\bm{X_1}, \ldots, \bm{X_n}$ be a sequence of (not necessarily independent) random variables taking values in $\{0,1\}$, and let $\mathcal{F}_{k-1}$ denote the history up to (and including) trial $k-1$. Define
\begin{equation}
    \bm{S} \coloneqq \sum_{k=1}^{n} \bm{X_k},
    \qquad\quad
    \bm{\tilde{S}} \coloneqq \sum_{k=1}^{n} \Pr[\bm{X_k} = 1 \mid \mathcal{F}_{k-1}].
\end{equation}
Then, for any $\epsilon \in (0,1)$,
\begin{equation}
    \Pr\!\left[\bm{S} > \bm{\tilde{S}} + \sqrt{2 n \ln(1/\epsilon)}\,\right] \leq \epsilon
    \qquad \text{and} \qquad
    \Pr\!\left[\bm{S} < \bm{\tilde{S}} - \sqrt{2 n \ln(1/\epsilon)}\,\right] \leq \epsilon.
\end{equation}
This follows from applying the standard Azuma inequality \cite{azuma1967weighted} to the martingale $\bm{\zeta_j} = \sum_{k=1}^{j}(\bm{X_k} - \mathbb{E}[\bm{X_k}\mid\mathcal{F}_{k-1}])$, whose increments are bounded by $1$ in absolute value.
\end{lemma}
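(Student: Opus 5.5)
We will obtain the statement from the standard Azuma--Hoeffding inequality applied to the martingale built from the centred indicators. Define $\bm{D_k} \coloneqq \bm{X_k} - \Pr[\bm{X_k}=1\mid\mathcal{F}_{k-1}]$. Since $\bm{X_k}$ is $\{0,1\}$-valued, $\mathbb{E}[\bm{X_k}\mid\mathcal{F}_{k-1}] = \Pr[\bm{X_k}=1\mid\mathcal{F}_{k-1}]$, and therefore $\mathbb{E}[\bm{D_k}\mid\mathcal{F}_{k-1}]=0$. It follows that $\bm{\zeta_j}=\sum_{k\le j}\bm{D_k}$, with $\bm{\zeta_0}=0$, is a martingale with respect to the filtration $\{\mathcal{F}_j\}$, and its final value is $\bm{\zeta_n}=\bm{S}-\bm{\tilde S}$.

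Next we need the increment bound. Each $\bm{D_k}$ is a difference of two numbers in $[0,1]$, so $|\bm{D_k}|\le 1$. More sharply, conditional on $\mathcal{F}_{k-1}$, $\bm{D_k}$ lies in an interval of length $1$, namely $[-p_k,1-p_k]$ with $p_k$ being $\mathcal{F}_{k-1}$-measurable. We then use the exponential-moment bound: Hoeffding's lemma applied conditionally gives $\mathbb{E}[e^{t\bm{D_k}}\mid\mathcal{F}_{k-1}]\le e^{t^2/8}$, and the cruder version with $c_k=1$ gives $e^{t^2/2}$. Iterating with the tower property yields $\mathbb{E}[e^{t\bm{\zeta_n}}]\le e^{nt^2/2}$. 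Markov's inequality then gives $\Pr[\bm{\zeta_n}>\delta]\le e^{-t\delta+nt^2/2}$, and optimizing at $t=\delta/n$ gives $\exp(-\delta^2/(2n))$. Setting this equal to $\epsilon$ gives $\delta=\sqrt{2n\ln(1/\epsilon)}$, which is the upper-tail statement. The lower tail follows by applying the same argument to $-\bm{\zeta_j}$, whose increments satisfy the same bound.

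There is essentially no obstacle here, since this is the textbook inequality with $c_k=1$. The bound in the statement is looser than the conditional-Hoeffding version by a factor $2$ in the exponent, so it is comfortably implied. The only point to handle carefully is the conditioning: the filtration $\mathcal{F}_{k-1}$ must be the same one under which the conditional probabilities $\Pr[\bm{X_k}=1\mid\mathcal{F}_{k-1}]$ are computed. It may also carry extra information, such as Alice's and Bob's other outcomes in the proof of \cref{lem:quantum_coin_inequality_decoy}, as long as $\bm{X_k}$ is $\mathcal{F}_k$-measurable. The number of trials $n$ must be fixed in advance, which in that proof is ensured by conditioning on $\mathcal{D}_{\mathrm{sifted}}^{(1)}$.
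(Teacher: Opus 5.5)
Your proposal is correct and matches the paper's argument. The paper simply invokes the standard Azuma inequality for the martingale $\bm{\zeta_j}$ with increments bounded by $1$, and you additionally spell out the usual exponential-moment derivation with $c_k=1$. One minor slip, which does not affect the proof: the conditional-Hoeffding version with increments in an interval of length $1$ gives the tail $e^{-2\delta^2/n}$, so the stated bound is looser by a factor $4$ in the exponent (a factor $2$ in the deviation term), not $2$.
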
}

\begin{lemma}[Bernstein's inequality \cite{bernstein1924modification,boucheron2013concentration}]
\label{lem:bernstein}
Let \( \bm{X_1}, \bm{X_2}, \dots, \bm{X_n} \) be independent Bernoulli random variables with expectations \( q_1, q_2, \dots, q_n \). Define their sum as
\begin{equation}
    \bm{S} \coloneqq \sum_{k=1}^{n} \bm{X_k},
\end{equation}
with expectation
\begin{equation}
    \mu_S \coloneqq \mathbb{E}[\bm{S}] = \sum_{k=1}^{n} q_k.
\end{equation}
Then, for any \(\epsilon_C \in (0, 1)\),
\begin{align}
     \Pr[\bm{S} > \mu_S + \Delta^{+}(\mu_S, \epsilon_C)] \leq \epsilon_C,
\end{align}
where
\begin{equation}
    \Delta^{+}(x, y) = \sqrt{2x \ln(1/y)} + \tfrac{2}{3}\ln(1/y).
\end{equation}
\end{lemma}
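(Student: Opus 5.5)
This is Bernstein's inequality for a sum of independent Bernoulli variables. I would prove it by the standard Cramér--Chernoff route and then invert the tail bound.

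\textbf{Step 1: moment generating function.} For each $k$ and each $\lambda>0$, write $\mathbb{E}[e^{\lambda(\bm{X_k}-q_k)}] = e^{-\lambda q_k}(1-q_k+q_k e^{\lambda})$. Then use $1+x\le e^x$ and the centered-variable estimate $e^{x}-1-x\le \frac{x^2/2}{1-x/3}$ for $0\le x<3$. Since $|\bm{X_k}-q_k|\le 1$ and $\mathrm{Var}(\bm{X_k})=q_k(1-q_k)\le q_k$, this gives
\[
\ln \mathbb{E}[e^{\lambda(\bm{X_k}-q_k)}] \le q_k\,(e^\lambda-1-\lambda) \le \frac{q_k\lambda^2/2}{1-\lambda/3}.
\]
By independence these bounds add, so
\[
\ln\mathbb{E}[e^{\lambda(\bm S-\mu_S)}]\le \frac{\mu_S\lambda^2}{2(1-\lambda/3)}
\]
for every $\lambda\in(0,3)$.

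\textbf{Step 2: Chernoff bound.} Markov's inequality gives, for every $t>0$,
\[
\Pr[\bm S-\mu_S>t]\le \exp\Big(-\lambda t+\frac{\mu_S\lambda^2}{2(1-\lambda/3)}\Big).
\]
I would choose $\lambda = t/(\mu_S+t/3)$, which lies in $(0,3)$. Substituting yields the standard Bernstein tail
\[
\Pr[\bm S-\mu_S>t]\le \exp\Big(-\frac{t^2}{2(\mu_S+t/3)}\Big).
\]
This is the one step that needs care. Citing Boucheron--Lugosi--Massart for the sub-gamma bound with variance factor $v=\mu_S$ and scale $c=1/3$ would be an acceptable shortcut. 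In that language the bound reads $\Pr[\bm S-\mu_S>\sqrt{2vL}+cL]\le e^{-L}$ directly, with no further inversion needed.

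\textbf{Step 3: inversion.} Set $L=\ln(1/\epsilon_C)$ and $t=\Delta^+(\mu_S,\epsilon_C)=\sqrt{2\mu_S L}+\tfrac23 L$. It suffices to check
\[
t^2\ge 2L\,(\mu_S+t/3),
\]
that is, $t^2-\tfrac23 L t-2\mu_S L\ge 0$. The positive root of this quadratic in $t$ is
\[
\tfrac13 L+\sqrt{\tfrac19 L^2+2\mu_S L}\le \tfrac13L+\tfrac13L+\sqrt{2\mu_S L}=t,
\]
where the inequality is subadditivity of the square root. So the chosen $t$ lies at or beyond the positive root, the quadratic is nonnegative there, and $\Pr[\bm S>\mu_S+\Delta^+(\mu_S,\epsilon_C)]\le e^{-L}=\epsilon_C$, as claimed.

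The extra slack (using $\tfrac23 L$ instead of $\tfrac13 L$) means even the cruder sub-gamma form suffices.
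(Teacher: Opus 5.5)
Your proof is correct and is the standard Cramér--Chernoff argument from the cited Boucheron--Lugosi--Massart reference; the paper does not reproduce a proof and simply invokes that result, so your route and the paper's coincide. Two small nits. First, when $\mu_S=0$ your $\lambda$ equals $3$ rather than lying in $(0,3)$, but then $\bm S=0$ almost surely and the claim is trivial. Second, your closing remark is backwards: the sub-gamma threshold $\sqrt{2\mu_S L}+\tfrac13 L$ is the sharper form, and it is your simplified tail $\exp\bigl(-t^2/(2(\mu_S+t/3))\bigr)$ that needs the $\tfrac23 L$ slack.
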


{\begin{lemma}[Confidence bounds for the average of independent Bernoulli parameters \cite{mannalathSharpFinite2025}]
\label{lem:relaxed_chernoff}
Let $\bm{\hat{p}}$ be the average of $n$ independent, not necessarily identically distributed, Bernoulli random variables, and let $p = \mathbb{E}[\bm{\hat{p}}]$. For any $\epsilon \in (0,1)$, let $\kappa_{n,\epsilon} = \frac{2}{9n}\ln\frac{1}{\epsilon}$ and
\begin{equation}
    \gamma^{\pm}_{n,\epsilon}(x) = \frac{1}{1+4\kappa_{n,\epsilon}}\left[3\kappa_{n,\epsilon} + (1-2\kappa_{n,\epsilon})\,x \pm 3\sqrt{\kappa_{n,\epsilon}\left(\kappa_{n,\epsilon}+x-x^2\right)}\right],
\end{equation}
and define
\begin{equation}
    \Gamma^{-}_{n,\epsilon}(x) =
    \begin{cases}
        \gamma^{-}_{n,\epsilon}(x) & \text{if } x \in \left[\frac{3\kappa_{n,\epsilon}}{1+\kappa_{n,\epsilon}},\, 1\right], \\[4pt]
        -\epsilon & \text{otherwise},
    \end{cases}
    \qquad\quad
    \Gamma^{+}_{n,\epsilon}(x) =
    \begin{cases}
        \gamma^{+}_{n,\epsilon}(x) & \text{if } x \in \left[0,\, \frac{1-2\kappa_{n,\epsilon}}{1+\kappa_{n,\epsilon}}\right], \\[4pt]
        1+\epsilon & \text{otherwise}.
    \end{cases}
\end{equation}
Then,
\begin{equation}
    \Pr[\Gamma^{-}_{n,\epsilon}(\bm{\hat{p}}) \geq p] \leq \epsilon
    \qquad \text{and} \qquad
    \Pr[\Gamma^{+}_{n,\epsilon}(\bm{\hat{p}}) \leq p] \leq \epsilon.
\end{equation}
\end{lemma}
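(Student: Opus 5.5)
The plan is to reduce both tails to a single Bernstein-type bound for the sample mean and then invert it algebraically in the unknown parameter $p$. Write $\bm{\hat p} = \frac1n\sum_i \bm{Y_i}$, with $\bm{Y_i}\sim\mathrm{Bern}(p_i)$ independent and $p=\frac1n\sum_i p_i$.

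\emph{Step 1: reduce to a common mean and variance.} For the moment generating function I will use Hoeffding's concavity argument: $\prod_i (1-p_i+p_i e^{s}) \leq (1-p+pe^{s})^n$ for all real $s$, by concavity of $q\mapsto \ln(1-q+qe^{s})$. So any Chernoff bound valid for $\mathrm{Bin}(n,p)$ also holds for $\bm{\hat p}$. The same concavity gives $\sum_i p_i(1-p_i)\le np(1-p)$, which is the variance proxy I will use.

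\emph{Step 2: tail bounds with explicit constants.} For deviation $t\ge0$ I will establish
\begin{equation*}
\Pr[\bm{\hat p}\ge p+t],\ \Pr[\bm{\hat p}\le p-t] \;\le\; \exp\!\Big(-\frac{n t^2}{2(p(1-p)+c\,t)}\Big).
\end{equation*}
This follows from Bernstein's inequality (\cref{lem:bernstein}) applied to $\bm{Y_i}-p_i$ and to $p_i-\bm{Y_i}$, whose increments are bounded by $1$. Setting the right-hand side equal to $\epsilon$ and using $\ln(1/\epsilon)/n = 9\kappa_{n,\epsilon}/2$ turns the threshold condition into a quadratic relation of the form
\begin{equation*}
(p-x)^2 = 9\kappa_{n,\epsilon}\big(p(1-p)+c\,(p-x)\big), \qquad x=\hat p .
\end{equation*}
The constant $c$ may need to be replaced by a Bennett-type refinement that depends on $p$, for example a term proportional to $(1-2p)\,t/3$. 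I would first try $c=1/3$ and check whether the resulting roots match $\gamma^{\pm}_{n,\epsilon}$. If they do not, for instance because the leading coefficient comes out as $1+9\kappa$ rather than $1+4\kappa$, I would switch to the sharper Bernoulli-specific exponent that reproduces the stated coefficients. That exponent is Bennett's bound, or the Kullback--Leibler Chernoff bound relaxed by a Pad\'e-type inequality.

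\emph{Step 3: invert in $p$.} Viewed as a quadratic in $p$ for fixed $x$, the relation has roots $\gamma^{-}_{n,\epsilon}(x)\le\gamma^{+}_{n,\epsilon}(x)$. The real content is showing the implication $\Gamma^{+}_{n,\epsilon}(\bm{\hat p})\le p \Rightarrow \bm{\hat p}\le p-t(p)$, where $t(p)$ is the deviation at which the tail bound equals $\epsilon$; the analogous implication for $\Gamma^-$ uses the upper tail. I will prove this by monotonicity: the map $p\mapsto p-t(p)$ is increasing, and $\gamma^{+}(x)$ is exactly the $p$ at which $p-t(p)=x$. Then $\Pr[\Gamma^+(\bm{\hat p})\le p]\le \Pr[\bm{\hat p}\le p-t(p)]\le\epsilon$.

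\emph{Step 4: boundary cases.} Outside the intervals $[3\kappa/(1+\kappa),1]$ and $[0,(1-2\kappa)/(1+\kappa)]$, the definitions set $\Gamma^-=-\epsilon<0\le p$ and $\Gamma^+=1+\epsilon>1\ge p$. The corresponding failure events are therefore empty and those cases are trivial. Inside the intervals I must check that the relevant root is the correct branch and lies in $[0,1]$.

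The main obstacle is Steps 2--3: choosing the exact variance/range form of the exponential bound so that the inverted quadratic reproduces the precise coefficients $1+4\kappa$, $3\kappa$ and $(1-2\kappa)$, and verifying the monotonicity on the stated domains. If matching fails, I would fall back to citing \cite[Prop.~1]{mannalathSharpFinite2025} for the sharper exponent, keeping Steps 1, 3 and 4 unchanged.
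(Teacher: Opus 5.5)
The paper gives no argument of its own here beyond citing \cite[Prop.~1]{mannalathSharpFinite2025}, so your fallback coincides with its proof. Your self-contained route, however, has a genuine gap at Step 2, and the bounds you would try first cannot close it. Expanding, $\gamma^{\pm}_{n,\epsilon}(x)$ are exactly the roots in $p$ of
\[
(1+4\kappa)p^2-2\big(3\kappa+(1-2\kappa)x\big)p+x\big((1+\kappa)x-3\kappa\big)=0
\quad\Longleftrightarrow\quad
(x-p)^2=9\kappa\,m(1-m),\qquad m=\tfrac{2p+x}{3},
\]
with $\kappa=\kappa_{n,\epsilon}$. So the lemma is the inversion of the tail bound $\exp[-n(x-p)^2/(2m(1-m))]$.

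Now $m(1-m)=p(1-p)+(1-2p)\frac{x-p}{3}-\frac{(x-p)^2}{9}\le p(1-p)+\frac{|x-p|}{3}$, so this exponent strictly exceeds Bernstein's whenever $x\neq p$. Inverting Bernstein gives leading coefficient $1+9\kappa$, as you anticipated, and a strictly wider interval, from which the stated one cannot follow. Bennett is also too weak: at $p=1/2$, $x=1$ the required exponent per round is $9/16$, while Bennett gives at most $2\ln 2-1\approx 0.386$. Your guess $(1-2p)t/3$ still lacks the $-(x-p)^2/9$ term that turns $1+3\kappa$ into $1+4\kappa$.

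The missing idea is to keep the full Chernoff--Hoeffding bound $\Pr[\bm{\hat p}\ge x]\le e^{-nD(x\|p)}$ for $x\ge p$, and its mirror for $x\le p$. Here $D$ is the binary relative entropy, and your Step 1 already delivers this bound. Then relax it via
\[
D(x\|p)=\int_p^x\frac{x-s}{s(1-s)}\,ds=\frac{(x-p)^2}{2}\int_0^1 2(1-u)\,f\big(p+u(x-p)\big)\,du\;\ge\;\frac{(x-p)^2}{2\,m(1-m)},
\]
where $f(s)=1/(s(1-s))$ is convex. The inequality is Jensen's under the probability density $2(1-u)$ on $[0,1]$, whose mean is $1/3$.

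Step 3 also needs a different justification. The map $p\mapsto p-t(p)$ is not increasing near $p=0$: for $\kappa=0.01$ the lower threshold is $0$ at $p=0$ but about $-0.0017$ at $p=0.001$. Monotonicity is not what you need anyway. For fixed $p$, the map $x\mapsto (x-p)^2-9\kappa\,m(1-m)$ is a convex quadratic with leading coefficient $1+\kappa$. It takes the value $-9\kappa p(1-p)\le 0$ at $x=p$, so the set where it is nonpositive is an interval $[x_-(p),x_+(p)]$ containing $p$. You also have $\gamma^-_{n,\epsilon}(x)\le x\le \gamma^+_{n,\epsilon}(x)$, which reduces to $(1+4\kappa)x(1-x)\ge 0$. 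Together these show that the event $\{\gamma^+_{n,\epsilon}(\bm{\hat p})\le p\}$ forces $\bm{\hat p}\le x_-(p)$, a deterministic threshold. Its probability is at most $e^{-nD(x_-(p)\|p)}\le e^{-9n\kappa/2}=\epsilon$. Symmetrically, $\{\gamma^-_{n,\epsilon}(\bm{\hat p})\ge p\}\subseteq\{\bm{\hat p}\ge x_+(p)\}$.

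Your Step 4 is fine. The cut-offs in $\Gamma^{\pm}_{n,\epsilon}$ are exactly where $\gamma^-_{n,\epsilon}$ reaches $0$, since the product of the two roots is $x((1+\kappa)x-3\kappa)/(1+4\kappa)$. Likewise they are where $\gamma^+_{n,\epsilon}(x)=1-\gamma^-_{n,\epsilon}(1-x)$ reaches $1$.
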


\begin{proof}
    See \cite[Prop.~1]{mannalathSharpFinite2025}. 
\end{proof}}

{\section{Channel and detection model used in the simulations}
\label{app:channel_model}

In this appendix, we specify the channel and detection model used to generate the statistics entering the secret-key-rate evaluations of \cref{fig:results,fig:results_SP,fig:results_comparison}. We consider a lossy channel followed by an active-basis-choice receiver with two threshold single-photon detectors, and adopt the standard approach in which double-click events (caused, e.g., by dark counts) are assigned a uniformly random bit value. We use the formulas in Ref.~\cite{panayiMemoryassistedMeasurementdeviceindependent2014} to calculate the expected detection and error statistics. The overall system transmittance is
\begin{equation}
    \eta = \eta_d \, 10^{-\alpha_{\mathrm{att}} L / 10},
\end{equation}
where $\eta_d$ is the detection efficiency of Bob's detectors {(which we assume is the same for both of them)}, $\alpha_{\mathrm{att}}$ is the attenuation coefficient of the fiber (in dB/km), and $L$ is the channel length, such that the channel attenuation plotted in the figures is $\alpha_{\mathrm{att}} L$.

\paragraph*{Decoy-state implementation.}
For a phase-randomized {weak} coherent pulse of intensity $\mu$, the probability that Bob observes a detection in a given round, and the probability that he observes a detection with a bit error, are respectively given by
\begin{equation}
\label{eq:channel_qdet_qerr}
    q_{\mathrm{det}}^{(\mu)} = g_C(\mu) + g_E(\mu),
    \qquad\quad
    q_{\mathrm{err}}^{(\mu)} = p_{\mathrm{mis}}\, g_C(\mu) + (1-p_{\mathrm{mis}})\, g_E(\mu),
\end{equation}
where $p_{\mathrm{mis}}$ is the misalignment probability (set to $p_{\mathrm{mis}} = 0$ in our simulations), and~\cite{panayiMemoryassistedMeasurementdeviceindependent2014}
\begin{equation}
\label{eq:channel_gC_gE}
    g_C(\mu) = \Big(1-\frac{p_d}{2}\Big)\Big[1 - (1-p_d)\, e^{-\mu\eta}\Big],
    \qquad\quad
    g_E(\mu) = p_d \Big[\Big(1-\frac{p_d}{2}\Big) e^{-\mu\eta} + \frac{1-e^{-\mu\eta}}{2}\Big],
\end{equation}
with $p_d$ the dark-count probability of each detector. Here, $g_C(\mu)$ {($g_E(\mu)$)} is the probability that Bob registers a detection whose recorded bit coincides with (differs from) the one encoded by Alice, for a perfectly aligned setup. Both quantities include dark-count contributions: in particular, {they account} for the fact that a click caused solely by a dark count still results in the correct bit half of the time, so that $g_C(0) = g_E(0) = p_d(1-p_d/2)$, reflecting the uniformly random outcome associated with a vacuum emission. Since $g_E(\mu) \propto p_d$, in the absence of misalignment all errors originate from dark counts, as expected.

Since Alice's bit, basis, keep/trash and intensity choices are independent of the detection process in this model, the expected values of the observed quantities entering our bounds are, for $\gamma \in \{Z,X\}$,
\begin{equation}
\label{eq:channel_expected_counts}
    E\big[\bm{n}_{\mathrm{keep},\mathrm{sifted},\gamma,\mu}^{\mathrm{det}}\big]
    = \frac{N p_\mu p_{\mathrm{keep}}}{4}\, q_{\mathrm{det}}^{(\mu)},
    \qquad
    E\big[\bm{n}_{\mathrm{keep},\mathrm{sifted},X,\mu}^{\mathrm{err}}\big]
    = \frac{N p_\mu p_{\mathrm{keep}}}{4}\, q_{\mathrm{err}}^{(\mu)},
    \qquad
    E\big[\bm{n}_{\mathrm{sifted}}^{\mathrm{det}}\big]
    = \frac{N}{2} \sum_\mu p_\mu\, q_{\mathrm{det}}^{(\mu)},
\end{equation}
where the factor $1/4$ accounts for the probability that Alice chooses basis $\gamma$ and Bob's basis matches, and the factor $1/2$ in the last expression {accounts} for the probability of a basis match (this last quantity includes both keep and trash rounds). Since the model is basis symmetric, the expected $Z$-basis error counts, which determine the error-correction leakage, coincide with the $X$-basis ones, and we set $\lambda_{\mathrm{EC}} = f\, n_{\mathrm{keep},\mathrm{sifted},Z}^{\mathrm{det}}\, h\big(e_Z\big)$, where $e_Z$ is the resulting $Z$-basis bit-error rate and $f$ is the error-correction inefficiency. As is customary, when evaluating the finite-key bounds we substitute the random observed counts by their expected values in \cref{eq:channel_expected_counts}. For each value of the channel attenuation, we numerically optimize the secret-key rate over the two non-vacuum intensities, their probabilities, and $p_{\mathrm{keep}}$.

\paragraph*{Single-photon implementation.}
For the single-photon source of \cref{app:single_photon}, the detection and error probabilities are obtained from \cref{eq:channel_qdet_qerr,eq:channel_gC_gE} by simply replacing the vacuum-survival probability $e^{-\mu\eta}$ of a phase-randomized {weak} coherent pulse with that of a single-photon pulse, $1-\eta$. That is, the probability that Bob observes a detection in a given round, and the probability that he observes a detection with a bit error, are respectively given by
\begin{equation}
    Q_1 = g_C^{(1)} + g_E^{(1)} = 1 - (1-p_d)^2 (1-\eta),
    \qquad\quad
    q_{\mathrm{err}}^{(1)} = p_{\mathrm{mis}}\, g_C^{(1)} + (1-p_{\mathrm{mis}})\, g_E^{(1)},
\end{equation}
where
\begin{equation}
    g_C^{(1)} = \Big(1-\frac{p_d}{2}\Big)\Big[1 - (1-p_d)(1-\eta)\Big],
    \qquad\quad
    g_E^{(1)} = p_d \Big[\Big(1-\frac{p_d}{2}\Big)(1-\eta) + \frac{\eta}{2}\Big],
\end{equation}
so that, since $p_{\mathrm{mis}} = 0$ in our simulations, the bit-error rate $q_{\mathrm{err}}^{(1)}/Q_1 = g_E^{(1)}/Q_1$ is determined solely by dark counts. Since in this case all detected rounds are single-photon rounds, the expected counts follow from expressions analogous to \cref{eq:channel_expected_counts} with the sum over intensities removed, and no decoy-state estimation is needed. Here, the only parameter to optimize is $p_{\mathrm{keep}}$.}

\end{document}